\def\dOi{12(1:6)2016}
\subjclass{F.1.1; D.2.4}
\newlength{\minlen}
\newlength{\minwlen}
\newlength{\arrowlen}
\newlength{\inputlen}
\newcommand{\Xrightarrow}[5][]{
  \ifthenelse { \equal {#1} {} }
    { }   
    { \settowidth{\minlen}{$\xrightarrow{#1}$} }  
  \settowidth{\inputlen}{$\xrightarrow{#2}$}
  \ifthenelse{\lengthtest{\the\minlen>\the\inputlen}}
           {\setlength{\arrowlen}{\minlen}}
           {\setlength{\arrowlen}{\inputlen}}
    \mathrel{
        \xrightarrow[#5]{\mathmakebox[\arrowlen]{#2}}
      \!\!{}^{\scriptstyle{#3}}_{\scriptstyle{#4}}
    }
  }
\newcommand{\XRightarrow}[5][]{
  \ifthenelse { \equal {#1} {} }
    { }   
    { \settowidth{\minwlen}{$\xRightarrow{#1}$} }  
  \settowidth{\inputlen}{$\xRightarrow{#2}$}
  \ifthenelse{\lengthtest{\the\minwlen>\the\inputlen}}
           {\setlength{\arrowlen}{\minwlen}}
           {\setlength{\arrowlen}{\inputlen}}
    \mathrel{
        \xRightarrow[#5]{\mathmakebox[\arrowlen]{#2}}
      \!\!{}^{\scriptstyle{#3}}_{\scriptstyle{#4}}
    }
  }
\newcommand{\notXrightarrow}[5][]{
  \ifthenelse { \equal {#1} {} }
    { }   
    { \settowidth{\minlen}{$\xRightarrow{#1}$} }  
  \settowidth{\inputlen}{$\xrightarrow{#2}$}
  \ifthenelse{\lengthtest{\the\minlen>\the\inputlen}}
           {\setlength{\arrowlen}{\minlen}}
           {\setlength{\arrowlen}{\inputlen}}
    \mathrel{
        \centernot{
              \xrightarrow[#5]{\mathmakebox[\arrowlen]{#2}}
        }
        \!\!{}^{\scriptstyle{#3}}_{\scriptstyle{#4}}
    }
  }
\newcommand{\notXRightarrow}[5][]{
  \ifthenelse { \equal {#1} {} }
    { }   
    { \settowidth{\minwlen}{$\xRightarrow{#1}$} }  
  \settowidth{\inputlen}{$\xRightarrow{#2}$}
  \ifthenelse{\lengthtest{\the\minwlen>\the\inputlen}}
           {\setlength{\arrowlen}{\minwlen}}
           {\setlength{\arrowlen}{\inputlen}}
    \mathrel{
        \centernot{
          \underset{#5}{
            \xRightarrow{\mathmakebox[\arrowlen]{#2}}
          } 
        }
        \!\!{}^{\scriptstyle{#3}}_{\scriptstyle{#4}}
    }
  }
\newcommand{\Step}[4][]{\Xrightarrow[#1]{#2}{#3}{#4}{}}
\newcommand{\notStep}[4][]{\notXrightarrow[#1]{#2}{#3}{#4}{}}
\newcommand{\step}[2][]{\Step[#1]{#2}{}{}}
\newcommand{\WStep}[4][]{\XRightarrow[#1]{#2}{#3}{#4}{}}
\newcommand{\wstep}[2][]{\WStep[#1]{#2}{}{}}
\newcommand{\MORE}[1]{{}\todo[inline,backgroundcolor=blue!5,size=\small]{#1}}
\renewcommand{\MORE}[1]{}
\newtheorem{theorem}{Theorem}
\newtheorem{lemma}[theorem]{Lemma}
\newtheorem{proposition}[theorem]{Proposition}
\newtheorem{corollary}[theorem]{Corollary}
\theoremstyle{definition}
\newtheorem{definition}[theorem]{Definition}
\newtheorem{example}[theorem]{Example}
\theoremstyle{remark}
\newtheorem{remark}{Remark}
\newtheorem{claim}{Claim}
\crefname{claim}{claim}{claims}
\Crefname{claim}{Claim}{Claims}
\newcommand{\MCOL}[1]{#1} 
\newcommand{\MCOLS}[1]{#1} 
\newcommand{\ABRV}[1]{\text{#1}}
\newcommand{\net}{{\mathcal N}}
\newcommand{\nett}{{\mathcal M}}
\newcommand{\PSPACE}{{PSPACE}}
\newcommand{\EXPSPACE}{{EXPSPACE}}
\newcommand{\R}{Spoiler}
\newcommand{\V}{Duplicator}
\newcommand{\sgame}{\MCOL{simulation game}}  
\newcommand{\sgames}{\MCOL{simulation games}}  
\newcommand{\wsgame}{\MCOL{weak simulation game}}  %
\newcommand{\agame}{\MCOL{approximant game}}  
\newcommand{\slgame}{\MCOL{slope game}}  
\newcommand{\slgames}{\MCOL{slope games}}  
\newcommand{\cycl}{\text{\sc cycle}}
\newcommand{\pref}{\text{\sc prefix}}
\newcommand{\qq}{K}
\newcommand{\lesssteep}{\ll}
\newcommand{\moresteep}{\gg}
\newcommand\slope{\text{\sc slope}}
\newcommand\belt{\text{\sc belt}}
\newcommand\is{L_0}
\newcommand\per{\text{\sc periodic}}
\newcommand \squa[3]{\text{\sc rect}(#1, #2, #3)}
\newcommand\aper{\text{\sc aperiodic}}
\newcommand{\deffect}[1]{\effect{}'(#1)}
\newcommand{\seffect}[1]{\effect{}(#1)}
\newcommand{\dguard}[1]{\guard{}'(#1)}
\newcommand{\sguard}[1]{\guard{}(#1)}
\newcommand{\dmax}{d_\text{max}}
\newcommand{\Cacyc}{\MCOLS{C}}
\newcommand{\effect}[1]{\Delta_{#1}}
\newcommand{\guard}[1]{\Gamma_{#1}}
\newcommand{\prefix}[2]{{}^{#1}#2}
\newcommand{\suff}[1]{{\it suf}({#1})}
\newcommand{\N}{\mathbb{N}}
\newcommand{\Z}{\mathbb{Z}}
\newcommand{\x}{\times}
\newcommand{\card}[1]{|#1|}
\newcommand{\Ord}{\mathit{Ord}}
\newcommand{\sseq}{\sqsubseteq}
\newcommand{\SIMSYMBOL}{\preceq}
\newcommand{\SIMBYSYMBOL}{\succeq}
\newcommand{\SIM}[2]{\ensuremath{\mathrel{\SIMSYMBOL_{#2}^{#1}}}}
\newcommand{\notSIM}[2]{\ensuremath{\mathrel{\not\SIMSYMBOL^{#1}_{#2}}}}
\newcommand{\ssim}{\SIMSYMBOL}
\newcommand{\notSIMBY}[2]{\ensuremath{\mathrel{{^{#1}_{#2}\!\!\not\SIMBYSYMBOL}}}}
\newcommand{\textSSIM}{\MCOL{simulation}}
\newcommand{\TextSSIM}{\MCOL{Simulation}}
\newcommand{\WEAKSIMSYMBOL}{{\raisebox{+0.3ex}{$\preceq$}\kern -.68em
            \raisebox{-0.5ex}{$\rule{.59em}{.4pt}$}}}
\newcommand{\WSIM}[2]{\ensuremath{\mathrel{\WEAKSIMSYMBOL^{#1}_{#2}}}}
\newcommand{\notWSIM}[2]{\ensuremath{\mathrel{\centernot{\WEAKSIMSYMBOL}\kern -.05em{}^{#1}_{#2}}}}
\newcommand{\wsim}{\WEAKSIMSYMBOL}
\newcommand{\textWSIM}{\MCOL{weak simulation}}
\newcommand{\textTINC}{\MCOL{trace inclusion}}
\newcommand{\Act}{\text{Act}}
\newcommand{\Net}[1]{{\ensuremath{\mathcal #1}}}
\newcommand{\AN}[1]{{\ensuremath{\mathcal #1}}}
\newcommand{\NN}[1]{{\ensuremath{\mathcal #1}}}
\newcommand{\NS}[1]{{\ensuremath{\mathcal #1}}}
\newcommand{\ND}[1]{{\ensuremath{\mathcal #1}'}}
\newcommand{\source}[1]{\mathit{source}(#1)}
\newcommand{\target}[1]{\mathit{target}(#1)}
\newcommand{\VASS}{\MCOL{\ABRV{VASS}}}
\newcommand{\textVASSs}{\MCOL{vector addition systems with states}}
\newcommand{\textOCSs}{\MCOL{one-counter systems}}
\newcommand{\OCA}{\MCOL{\ABRV{OCA}}}
\newcommand{\OCAs}{\MCOL{\ABRV{OCAs}}}
\newcommand{\textOCA}{\MCOL{one-counter automaton}}
\newcommand{\TextOCAs}{\MCOL{One-counter automata}}
\newcommand{\OCN}{\MCOL{\ABRV{OCN}}}
\newcommand{\OCNs}{\MCOL{\ABRV{OCNs}}}
\newcommand{\textOCN}{\MCOL{one-counter net}}
\newcommand{\textOCNs}{\MCOL{one-counter nets}}
\newcommand{\TextOCNs}{\MCOL{One-counter nets}}
\newcommand{\PDA}{\MCOL{\ABRV{PDA}}}
\newcommand{\LTS}{\MCOL{\ABRV{LTS}}}
\newcommand{\case}[2][Case]{\MCOL{\emph{#1} #2.}}
\newcommand{\ignore}[1]{}
\newcommand{\ol}[1]{\overline{#1}}
\title[Simulation Problems Over One-Counter Nets]
      {Simulation Problems Over One-Counter Nets}
 \author[P.~Hofman]{Piotr Hofman\rsuper a}
 \address{{\lsuper a}
         LSV, CNRS \& ENS de Cachan,
         61 avenue du Pr\'{e}sident Wilson,
         94235 CACHAN Cedex, France 
 }
 \urladdr{http://www.mimuw.edu.pl/\~{}ph209519/}
 \thanks{{\lsuper a}Supported by Labex Digicosme, Univ. Paris-Saclay, project VERICONISS}
 \author[S.~Lasota]{S{\l}awomir Lasota\rsuper b}
 \address{{\lsuper b}University of Warsaw, 
     Faculty of Mathematics, Informatics and Mechanics,
     Banacha 2,
     02-097 Warszawa,
     Poland
 }
 \urladdr{https://www.mimuw.edu.pl/\~{}sl/}
 \thanks{{\lsuper b}Supported by NCN grant 2013/09/B/ST6/01575.}
 \author[R.~Mayr]{Richard Mayr\rsuper c}
 \address{{\lsuper c}University of Edinburgh, School of Informatics,
10 Crichton Street, Edinburgh EH89AB, UK}
 \urladdr{http://www.inf.ed.ac.uk/people/staff/Richard\_Mayr.html}
 \thanks{{\lsuper c}Supported by EPSRC grant EP/M027651/1.}
 \author[P.~Totzke]{Patrick Totzke\rsuper d}
 \address{{\lsuper d}University of Warwick,
     Department of Computer Science,
     Coventry CV47AL, UK}
 \urladdr{http://www.dcs.warwick.ac.uk/\~{}totzke/}
 \thanks{{\lsuper d}Supported by EPSRC grant EP/M011801/1.}
\keywords{Simulation preorder; one-counter nets; complexity}
\begin{document}

\maketitle

\begin{abstract}
\noindent
One-counter nets (\OCN) are
finite automata equipped with a counter that can store non-negative integer
values, and that cannot be tested for zero.
Equivalently, these are exactly $1$-dimensional \textVASSs.
We show that both strong and weak simulation preorder on \OCN\ are \PSPACE-complete.
\end{abstract}

\section{Introduction}\label{sec:introduction}



\TextOCAs\ (\OCAs) are Minsky counter automata with only one counter,
and they can also be seen as a subclass of pushdown automata with just one
stack symbol (plus a bottom symbol).
\TextOCNs\ (\OCNs) are a subclass of \OCAs\ where the counter
cannot be fully tested for zero, because steps enabled at counter value
zero are also enabled at nonzero values.
\OCNs\ correspond to
$1$-dimensional vector addition systems with states,
and are arguably the simplest model of discrete infinite-state systems,
except for those that do not have a global finite control.

Notions of behavioral semantic equivalences have been classified in 
Van Glabbeek's linear time - branching time spectrum \cite{Gla2001}.
The most common ones are, in order from finer to coarser, 
bisimulation, simulation and trace equivalence.
Each of these have their standard (called strong) variant, and a weak variant
that abstracts from arbitrarily long sequences of internal actions.

For \OCAs/\OCNs, strong bisimulation 
is \PSPACE-complete \cite{BGJ2010, DBLP:journals/jcss/BohmGJ14}, 
while weak bisimulation is undecidable \cite{May2003}.
Strong trace inclusion is undecidable for \OCAs\ \cite{Valiant1973},
and even for \OCNs\ \cite{HMT2013}, and this trivially carries over to weak trace
inclusion.

The picture is more complicated for simulation preorders.
While strong and weak simulation are undecidable for \OCAs\ \cite{JMS1999},
they are decidable for \OCNs.
Decidability of strong simulation on \OCNs\ was first proven in \cite{AC1998},
by establishing that the simulation relation follows a certain regular pattern.
This idea was made more graphically explicit in later proofs \cite{JM1999,JKM2000},
which established the so-called {\em Belt Theorem}, that states that the simulation
preorder relation on \OCNs\ can be described by finitely many partitions of the
grid $\N\x\N$, each induced by two parallel lines.
In particular, this implies that the simulation relation is semilinear.
However, the proofs in \cite{AC1998,JM1999,JKM2000} did not yield any upper
complexity bounds; in particular, the first proof was based on two semi-decision procedures
and the later proof of the Belt Theorem was non-constructive. 
A \PSPACE\ lower bound for strong simulation on \OCNs\ follows from \cite{Srb2009}.

Decidability of weak simulation on \OCNs\ was shown in \cite{HMT2013}, using 
a converging series of semilinear approximants. This proof used the 
decidability of strong simulation on \OCNs\ as an oracle, and thus did not immediately yield any 
upper complexity bound.

\subsubsection*{Our contribution.}
First, we provide a new constructive proof of the Belt Theorem and derive a \PSPACE\ algorithm
for checking strong simulation preorder on \OCNs.
Together with the lower bound from \cite{Srb2009}, this shows \PSPACE-completeness of the problem.

Second, via a technical adaption of the algorithm for weak simulation in \cite{HMT2013},
and the new \PSPACE\ algorithm for strong simulation, we also obtain a \PSPACE\ algorithm
for weak simulation preorder on \OCNs. Thus even weak simulation preorder on
\OCNs\ is 
\PSPACE-complete.

The decidability and complexity status of the most relevant semantic equivalences and preorders 
for \OCAs/\OCNs\ is summarized in the table below (`$\times$' stands for undecidable).
Our PSPACE-completeness results close the last remaining important open problem concerning the complexity of 
equivalence/preorder checking for \textOCSs.

\begin{table}[h]
\newcommand{\undec}{$\times$}  
\begin{center}
  \begin{tabular}{ | l | c | c| c | c  | c | }
    \hline
    	 	& simulation 		& bisimulation 			& weak sim.	& weak bisim.		 & trace inclusion \\ \hline
    OCN 	& \textbf{\PSPACE}	& \PSPACE\ \cite{BGJ2010} 	&\textbf{\PSPACE}	& \undec\ \cite{May2003}	 &\undec\ \cite{HMT2013}\\ \hline
    OCA 	& \undec\ \cite{JMS1999}	&\PSPACE\ \cite{BGJ2010}	 & \undec\ \cite{JMS1999}	 & \undec\ \cite{May2003} 	&\undec\ \cite{Valiant1973}\\
    \hline
  \end{tabular}
\end{center}
\end{table}

This paper is a revised and extended version of material previously presented
in \cite{HMT2013,HLMT2013,Totzke:PhD2014}, and is 
organized as follows. In \cref{sec:problem} we state the
simulation problems and our main result, and give an outline of the ideas used in the proof.
In Section~\ref{sec:prelim} we fix basic terms and notation, and show how to
transform the problem into a more convenient normal form.
The proof of \PSPACE-completeness for strong simulation preorder, 
as well as an analysis of the combinatorial structure of this relation,
is presented in \Cref{sec:strongSim}.
We then apply and extend this result in Section~\ref{sec:wsim} to show
\PSPACE-completeness even for weak simulation preorder.
Finally, in Section~\ref{sec:conclusion}, we summarize our results and
mention some open problems.

\section{Statement of the Result} \label{sec:problem}
A labeled transition system (LTS) over a finite alphabet $\Act$ of actions
consists of a set $\mathcal{S}$ of configurations (also called \emph{processes}) and, for every action $a \in \Act$, a binary relation
$\step{a} \subseteq \mathcal{S}^2$ between configurations.
For $(s, s') \in \, \step{a}$ we also write $(s, a, s')$ or $s \step{a} s'$,
and call it an $a$-labeled \emph{step} from $s$ to $s'$.

\begin{definition}\label{def:strongsim}
Given two labeled transition systems $S$ and $S'$, a relation $R$ between the configurations
of $S$ and $S'$ is a \emph{strong simulation} if for every 
pair of configurations $(c, c') \in R$ and every step $c \step{a} d$
there exists a step $c' \step{a} d'$ such that $(d,d') \in R$.
\end{definition}

As usual, w.l.o.g.~one may assume $S=S'$, since one can consider disjoint union of two LTSs.
Strong simulations are closed under union, so there exists a unique maximal strong simulation.
This maximal strong simulation is a preorder, called {\em strong simulation preorder},
and denoted by $\ssim$. If $c \ssim c'$ then one says that $c'$ \emph{strongly simulates} $c$.


Simulation preorder can also be characterized
as an interactive, two-player game played between \emph{\R},
who wants to establish non-simulation
and \emph{\V}, who wants to frustrate this.

\begin{definition}
    \label{def:simgame}
    A \emph{simulation game} is played in rounds between the two players
    \R\ and \V, where the latter tries to stepwise match the moves of the former.

    A \emph{play} is a finite or infinite sequence of
    game \emph{positions}, which are pairs of processes.
    If a finite play $(\alpha_0,\alpha'_0),(\alpha_1,\alpha'_1),\dots,(\alpha_i,\alpha'_i)$
    is not already winning for one of the players,
    the next pair $(\alpha_{i+1},\alpha'_{i+1})$
    is determined by a round of choices:
    \begin{enumerate}
      \item \R\ chooses a step $\alpha_i\step{a}\alpha_{i+1}$ where $a$ is any element of $\Act$.
      \item \V\ responds by picking an equally labeled step
            $\alpha'_i\step{a}\alpha'_{i+1}$.
    \end{enumerate}
    If one of the players cannot
    move then the other wins, and \V\ wins every infinite play.

    A \emph{strategy} is a set of rules that tells a player how to move.
    More precisely, a strategy for \R\
    is a function $\sigma:PP \to (\step{})$,
    where $PP$ denotes the set of partial plays (non-empty sequences of game
    positions), and $\step{}$ is the step-relation in the transition system.
    Similarly, a strategy for \V\ is a function $\sigma':PP\x(\step{}) \to
    (\step{})$,
    assigning each partial play and \R\ move a response.
    A player plays according to a strategy if all his moves obey the rules of the strategy.
    A strategy is \emph{winning} from $(\alpha,\alpha')$ if every play that starts in $(\alpha,\alpha')$
    and which is played according to that strategy is winning.
    Finally, we say that a player wins the simulation game from $(\alpha,\alpha')$
    if there is some winning strategy for this player from position $(\alpha,\alpha')$.

\end{definition}
Due to the type of winning condition 
(a simulation game is essentially a turn-based reachability game where
Spoiler
wins a play if it reaches a game configuration where Duplicator is stuck) 
positional (i.e., memoryless) strategies are sufficient. Thus one can
restrict to strategies that map the current game configuration to a step,
i.e. $\sigma:(\mathcal{S}\x \mathcal{S}) \to (\step{})$. 
Correspondingly, a strategy for \V\ is a partial function
$\sigma':(\mathcal{S}\x \mathcal{S}\x \step{}) \to (\step{})$,
that prescribes a response for the current position and \R's move.


We see that one round of the \sgame\ directly corresponds to the simulation condition of
\cref{def:strongsim}.
\R\ can stepwise demonstrate
that the condition is not an invariant if the initial pair of processes is indeed not in
simulation.
Conversely, any simulation that contains the initial pair of processes
prescribes a winning strategy for \V\ in the \sgame.
\begin{proposition}
    \label{prop:sgames}
    For any two processes $\alpha,\alpha'\in \mathcal{S}$,
    \V\ has a winning strategy in the simulation game from position
    $(\alpha,\alpha')$ if and only if
    $\alpha\ssim \alpha'$.
\end{proposition}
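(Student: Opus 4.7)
The plan is to prove the two implications separately, using in each direction the simulation relation as a bridge between the coinductive and game-theoretic views. Since the winning condition of the \sgame\ is essentially a reachability condition for \R\ (\V\ wins precisely if the play is infinite or \R\ gets stuck), positional determinacy holds, so throughout I can restrict attention to positional strategies as indicated just before the proposition.

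For the ``if'' direction, I would assume $\alpha \ssim \alpha'$ and exhibit a positional winning strategy $\sigma'$ for \V. Since $\ssim$ is itself a strong simulation containing $(\alpha,\alpha')$, I define $\sigma'$ as follows: at a position $(c,c')$ with $c \ssim c'$ and an \R\ move $c \step{a} d$, the simulation condition of \cref{def:strongsim} guarantees some step $c' \step{a} d'$ with $d \ssim d'$, and $\sigma'$ picks any such $d'$ (breaking ties arbitrarily). A straightforward induction on the length of a partial play shows that, when \V\ follows $\sigma'$, every reached position $(\alpha_i,\alpha'_i)$ still satisfies $\alpha_i \ssim \alpha'_i$. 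In particular \V\ is never stuck, so she wins every play.

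For the ``only if'' direction, I would turn the game-theoretic hypothesis into a simulation relation. Let
\[
R = \{(c,c') \in \mathcal{S} \times \mathcal{S} \mid \V\ \text{has a winning strategy in the \sgame\ from } (c,c')\}.
\]
By assumption $(\alpha,\alpha') \in R$. To show $R$ is a strong simulation, take any $(c,c') \in R$ and any step $c \step{a} d$. Consider the play consisting of this single \R\ move from $(c,c')$: since \V\ has a winning strategy, she has a response $c' \step{a} d'$, and from the resulting position $(d,d')$ she must still have a winning strategy (the restriction of the original one to continuations of this partial play). Hence $(d,d') \in R$, confirming the simulation property. Therefore $R \subseteq {\ssim}$ and in particular $\alpha \ssim \alpha'$.

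There is no real obstacle here; the statement is the standard equivalence between the relational and game-theoretic characterisations of simulation, and the only subtlety worth flagging is the reliance on positional determinacy (already argued above) so that a ``winning strategy from $(d,d')$'' makes sense independently of the history leading to $(d,d')$.
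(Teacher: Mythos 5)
Your proof is correct and follows essentially the same route as the paper, which only sketches the argument informally in the paragraph preceding the proposition: Duplicator maintains $\ssim$ as an invariant in one direction, and Duplicator's winning region is itself a strong simulation in the other. Your formalization (including the remark that the safety-type winning condition justifies restricting to positional strategies) fills in exactly the details the paper leaves implicit.
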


\medskip
A natural extension of simulation is  \emph{weak simulation}, that abstracts from internal steps.
For a labeled transition system with a special action $\tau\in\Act$, define \emph{weak} step relations by
\[
\wstep{\tau} \ \ = \ \ (\step{\tau})^* \qquad \text{ and } \qquad \wstep{a} \ \ = \ \ (\step{\tau})^*\step{a}(\step{\tau})^* \quad \text{ for } a \neq \tau.
\]
%
Weak simulation is defined similar to strong simulation 
in \cref{def:strongsim}, except that the weak simulation condition requires
that some weak step exists. Formally:
\begin{definition}\label{def:weaksim}
Given two labeled transition systems $S$ and $S'$, a relation $R$ between the configurations
of $S$ and $S'$ is a \emph{weak simulation} if for every 
pair of configurations $(c, c') \in R$ and every step $c \step{a} d$
there exists a weak step $c' \wstep{a} d'$ such that $(d,d') \in R$.
\end{definition}
Weak simulation preorder can also be characterized using a variant of the
simulation game described above, in which \V\ moves along weak steps.
This game is called the \emph{\wsgame}.
Yet another variant of this game,
in which also \R\ moves along weak steps, induces the same notion of weak
simulation preorder.
We will use the ``asymmetric'' game define above in this paper.

For systems without $\tau$-labeled steps, $\step{a} \ = \ \wstep{a}$
holds for every action $a$,
and therefore strong and weak simulation coincide.
In general however, weak simulation is coarser than strong simulation: $c \ssim c'$ implies $c \ \wsim \ c'$.


\begin{definition}[One-Counter Nets]
    A \emph{\textOCN} (\OCN) is a triple $\net=(Q,\Act,\delta)$
    consisting of finite sets of control states $Q$, action labels $\Act$
    and transitions $\delta\subseteq Q\x \Act\x\{-1,0,1\}\x Q$.
    Each transition $t=(p,a,d,q)\in\delta$
    defines a relation $\step{t}\subseteq Q\x\N\x Q\x\N$
    where for all control states $p',q'\in Q$ and integers $m,n\in\N$
    $$(p',m)\step{t}(q',n)\quad\text{if}\quad p'=p, q'=q\text{ and }n=m+d\ge0.$$
    The labeled transition system \emph{induced} by the \OCN\
    has the same action alphabet $\Act$ 
    and the set of configurations $\mathcal{S} = Q\x\N$.
    Its step relations $\step{a}$ are defined as follows.
    We have $(p,m)\step{a}(q,n)$ iff $\exists t=(p,a,d,q)\in\delta.\, (p,m)\step{t}(q,n)$. 
\end{definition}

    In the sequel we use both the relations $\step{t}$ labeled by transitions $t$, and the relations $\step{a}$ labeled by actions $a\in\Act$.
    For convenience, we will assume that $Q\cap\N=\emptyset$ and
    write configurations $(p,m)$ simply as $pm$.
    On the formal level, steps should not be confused with transitions:
    there is a \emph{step} $pm\step{a}qn$ iff there is a \emph{transition} $(p,a,d,q)\in\delta \text{ and }n=m+d\ge0$.

    We will sometimes simply write \emph{\OCN\ process} for a configuration in
    the \LTS\ induced by some \OCN.

\begin{example}
    Let $\net=(\{p\},\{a,\tau\},\{(p,a,-1,p),(p,\tau,+1,p)\})$
    be the \OCN\ consisting of a single state with two self-looping transitions:
    One is labeled by $a$ and is counter decreasing,
    and the other is labeled by $\tau$ and increases the counter.
    In this system,
    $pn$ is simulated by $pm$ ($pn\ssim pm$) if, and only if $n\le m$.
    However, $pn\ \wsim\ pm$ holds for all $n,m\in\N$ because of the
    weak steps $pm\wstep{a}pm'$ for every $m'\ge (m-1)$.
\end{example}

\noindent We study the computational complexity of the following decision problem.
      
\begin{table}[h]
\begin{tabular}{ll}
  \multicolumn{2}{l}{\bf Weak Simulation Problem for \OCNs}\\
  \hline
  \sc Input:  & Two \OCNs\ $\net$ and $\net'$ together with configurations $qn$ and $q'n'$\\
              & of $\net$ and $\net'$, respectively, where $n$ and $n'$ are given in binary.\\
  \sc Question: & $qn \ \wsim \ q'n'$ ?
\end{tabular}
    \end{table}

\noindent The main result of this paper is the following upper bound.

\begin{theorem}\label{thm:strongsim-pspace}
    The weak simulation problem for \OCNs\ is in \PSPACE.
\end{theorem}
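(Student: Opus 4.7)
The plan is to proceed in two stages. First, establish a \PSPACE\ decision procedure for strong simulation preorder on \OCNs; second, reduce weak simulation to strong simulation by adapting the converging-approximants technique of \cite{HMT2013}, with the new strong-simulation algorithm replacing the oracle used there. Both stages build on the game characterisation of \cref{prop:sgames}, so the whole argument is carried out in terms of strategies for \R\ and \V.

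For the strong case, the main tool is a constructive, quantitative form of the Belt Theorem. Recall that for each pair of control states $p,q$ the set $\{(m,n)\in\N\x\N : pm \ssim qn\}$ decomposes into finitely many \emph{belts}---regions bounded by two parallel half-lines of rational slope---together with a bounded exceptional part around the axes. The new content is to bound the slopes, widths, and thresholds of these belts in terms of a polynomial in the input size, so that each belt has a description of polynomial bit-length. From such a quantitative Belt Theorem one reads off a decision procedure: for configurations $pm$ and $qn$ whose counters lie below the threshold, a direct bounded game analysis suffices; for large counters, one locates the pair $(m,n)$ relative to the belts and answers according to the slope of \R's and \V's best cycles through the control graph, with the boundary behaviour reducing to instances on a smaller affine constraint. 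The algorithm traverses this structure in polynomial space by guessing the belt parameters piecewise and verifying them using local game-based certificates.

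For the weak case, I adapt the observation of \cite{HMT2013} that $\wsim$ is the limit of a decreasing sequence of semilinear over-approximants $\wsim_0\supseteq\wsim_1\supseteq\dots$, each expressible as a strong simulation problem on a polynomially larger \OCN\ obtained by saturating $\tau$-paths up to a bounded budget. Two ingredients are then required: a polynomial bound on the index at which the sequence stabilises, which is obtained by reusing the belt structure of strong simulation on the saturated system; and a uniform polynomial-space procedure that, given a succinct description of the current approximant, either certifies stabilisation or exhibits a distinguishing \R-strategy witnessing the next refinement. Combined with the strong-simulation \PSPACE\ algorithm from the first stage, this yields a \PSPACE\ procedure for $\wsim$ and hence the theorem.

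The main obstacle lies in the first stage: the existing proofs of the Belt Theorem \cite{AC1998,JM1999,JKM2000} are non-constructive, so bounding the belt parameters requires a fresh game-theoretic argument analysing the counter-change slopes achievable by \R\ and \V\ along cycles of the control graph, and showing that only polynomially many inequivalent asymptotic slopes can arise. A secondary difficulty is that the approximants in the weak case could, a priori, carry exponentially many belt pieces; one must show that for a single query only a polynomial-size summary of the relevant belts is needed, which hinges on the saturation construction preserving the polynomial belt encoding.
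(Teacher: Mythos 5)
Your plan matches the paper's proof in both stages: a constructive Belt Theorem with polynomial bounds on slopes and widths obtained by analysing the asymptotic cycle slopes in the product graph (the paper formalises this as a finite ``slope game''), followed by a locality-based \PSPACE\ guess-and-verify procedure for strong simulation, and then a reduction of weak simulation to strong simulation via a finitely converging sequence of approximants indexed by a budget of unbounded counter increases, each realised as strong simulation over polynomially larger \OCNs\ with a polynomial stabilisation bound. This is essentially the same approach as the paper, presented at the level of a plan.
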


\begin{remark}
The upper bound applies also to strong simulation, since for systems without
$\tau$-labeled steps, strong and weak simulation coincide.
Combined with the \PSPACE-hardness result for strong simulation
by~\cite{Srb2009} (which holds even if all numbers are represented in unary),
this yields \PSPACE-com\-ple\-te\-ness of 
both strong and weak simulation problems.
\end{remark}

\begin{remark}\label{rem:semilinearsize}
Our construction can also be used to compute the simulation relation as a semilinear
set, but its description requires exponential space. 
However, checking a point instance $qn \ \wsim \ q'n'$ of the simulation problem
can be done in 
polynomial space by stepwise guessing and verifying only a polynomially bounded part of the relation.
\end{remark}

\subparagraph*{Outline of the proof.}
In LTSs induced by \OCNs, the step relation is monotone w.r.t.\ the counter value.
Thus, the strong and weak \sgames\ are also monotone for both players:
If \V\ wins from a position $(qn, q'n')$ then he also wins from
$(qn, q'm)$ for all $m > n'$. Similarly, if \R\ wins from $(qn, q'n')$ then she also wins
from $(qm, q'n')$ for all $m > n$.
It follows that, for every fixed pair $(q,q')$ of control states, 
the winning regions of the two players partition the grid $\N\x\N$ into two \emph{connected} subsets.
For strong simulation, it is known \cite{JM1999,JKM2000} that the \emph{frontier} between these two subsets
is contained in a \emph{belt}, i.e., it lies between two parallel lines with a
rational slope. This property is also known as the Belt Theorem.
However, previous proofs of this theorem \cite{JM1999,JKM2000} used
non-constructive arguments and did not yield precise bounds on the width of
the belt and on the rational coefficients of the slope.

We provide a new constructive proof of the Belt Theorem that yields tight
bounds on the width and slopes of the belts, which makes it possible to obtain
a \PSPACE\ algorithm for checking strong simulation preorder.
Our proof is based on the analysis of symbolic \emph{\slgames}. This new
game is similar to the \sgame, but necessarily ends after a polynomial number of
rounds. We show that, for sufficiently high counter-values, both
players can re-use winning strategies from the \slgame\ also in the \sgame.
As a by-product of this characterization, we obtain polynomial bounds on the widths and
slopes of the belts. Once the belt-coefficients are known, one can compute the
frontiers between the winning sets of the opposing players exactly,
because every frontier necessarily adheres to a regular pattern.

In the second part of the paper (\cref{sec:wsim}) we prove the decidability of weak simulation
preorder by showing that it is the limit of a finitely converging series of
effectively constructible semilinear relations that over-approximate it.
A careful analysis of the size of the representations of these approximants,
combined with the previously established \PSPACE\ algorithm for strong simulation
preorder, then yields a \PSPACE\ algorithm for checking weak simulation
preorder.\newpage

\section{Preliminaries}\label{sec:prelim}
\newcommand{\Path}[1]{\Step{#1}{}{}}
\newcommand{\PPath}[1]{\Step{#1}{}{+}}

\subsection{Paths and Loops}
    \label{def:oca-paths}
    Let $\net=(Q,\Act,\delta)$ be a \OCN.
    For a transition $t=(p,a,d,p')\in \delta$    we write $\source{t}=p$ and $\target{t}=p'$
    for the source and target states,    $\lambda(t)=a$ for its label
    and $\Delta(t)=d$ for its effect on the counter.
    
    A \emph{path} (of \emph{length} $k$) in $\AN{N}$ is a sequence
    $\pi=p_0t_1p_1t_2p_2\dots p_{k-1}t_kp_k$
    where all $p_i\in Q$ and $t_i\in\delta$ and for every $1\le i\le k$,
    $p_{i-1}=\source{t_i}$ and $\target{t_i}=p_{i}$.
    The source and target of $\pi$ are
    $p_0$ and $p_k$, respectively.
    Its label is $\lambda(\pi)=\lambda(t_1)\lambda(t_2)\dots\lambda(t_k)\in\Act^*$
    and its \emph{effect}
    is the cumulative effect of its transitions:
    \begin{equation}
        \effect{}(\pi) = \sum_{i=1}^k \effect{}(t_i)
    \end{equation}
    A path $\pi$ as above     is a \emph{cycle} if $p_0=p_k$ and a
    \emph{simple cycle} if it is a cycle and moreover, no proper subpath is itself a cycle.

    We say a path $\pi$ is \emph{enabled} in configuration $pm$ if
    it prescribes a valid path from configuration $pm$ in the labeled transition system of $\AN{N}$, i.e.,
    if there exist non-negative integers $m_0,m_1,\dots,m_k$ such that $p_0m_0=pm$ and
    $p_{i-1}m_{i-1}\step{t_i}p_{i}m_{i}$ for all $1\le i\le k$.
    In this case we write $p_0m_0\step{\pi}p_{k}m_{k}$ and say $\pi$ is a \emph{run}
    or path \emph{of} \AN{N} from $p_0m_0$ to $p_km_k$.
    Note that $m_k=m_0+\effect{}(\pi)$.

There is a minimal sufficient counter value $\guard{}(\pi)$ that enables it.
This \emph{guard} of $\pi$ can be defined as 
the minimal $m\in\N$ such that no
prefix of $\pi$ has an effect less than $-m$.
Writing $\prefix{i}{\pi}$ for the prefix of path $\pi$ of length $i$,
the guard of $\pi$ is given as
\begin{equation}
    \guard{}(\pi) = - \min\{\effect{}(\prefix{i}{\pi})\ |\ 0\le i\le k\}.\label{eq:oca:guards}
\end{equation}
Note that there are different paths of length $0$ because the initial
state forms part of a path.
Any zero-length path $\pi$
has effect and guard $\effect{}(\pi)=\guard{}(\pi)=0$.
Surely, both the effect and the guard of any path are bounded by its length.

\subsection{Monotonicity}

\TextOCNs\ enjoy the following important monotonicity property
which is crucial in our argument and which immediately follows from the definition.

A step $pm\Step{a}{}{}qn$ in a \OCN\ $\Net{N}=(Q,\Act,\delta)$
is due to some transition $(p,a,d,q)\in\delta$ with $d=n-m$.
The same transition then justifies a step $p(m+l)\step{a}q(n+l)$
for any number $l\in\N$. We thus observe that for all \OCN\
processes $pm$ and $l\in\N$,
\begin{equation}
    pm \ssim p(m+l)\label{eq:ocn:cc}
\end{equation}
because \V\ can mimic
the behavior of \R's process to win the \sgame.
Seen as a function, this ``copycat'' strategy is simply the identity.
Seen as a tree, it has the property that
every node is of the form $[qn,q(n+l)]$, where $q\in Q$ and $n\in\N$.

\Cref{eq:ocn:cc} implies that on \OCNs, all preorders that are coarser than $\ssim$,
the maximal strong simulation,
are monotonic in the following sense.

\begin{lemma}[Monotonicity]
    \label{prop:ocn-monotonicity}
    \label{lem:monotonicity}
    Let $pm$ be a \OCN\ process,
    $s$ an arbitrary process
    and $\sseq$ be any transitive relation
    that subsumes strong simulation $\ssim$.
    Then, for every $m \le n$,
    \begin{enumerate}
      \item $pm\not\sseq s$ implies $pn\not\sseq s$, and
      \item $s\sseq pm$ implies $s\sseq pn$.
    \end{enumerate}
\end{lemma}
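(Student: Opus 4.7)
The plan is to derive both parts directly from the copycat property stated in equation~\eqref{eq:ocn:cc}, combined with the transitivity of $\sseq$. The observation in~\eqref{eq:ocn:cc} gives $pm \ssim p(m+l)$ for every $l \in \N$, and since by assumption $\sseq$ subsumes $\ssim$ (i.e., $\ssim \,\subseteq\, \sseq$), this upgrades to $pm \sseq p(m+l)$. Setting $l = n-m$ for an arbitrary $n \ge m$ yields the key inequality $pm \sseq pn$, which is the only fact about \OCN{} processes that the proof needs.

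For part~(2), I would simply chain: from $s \sseq pm$ and $pm \sseq pn$, transitivity of $\sseq$ gives $s \sseq pn$. For part~(1), I would argue by contraposition: assuming $pn \sseq s$, the same chain $pm \sseq pn \sseq s$ combined with transitivity yields $pm \sseq s$, contradicting the hypothesis $pm \not\sseq s$.

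There is no real obstacle here; the substantive content is already encapsulated in~\eqref{eq:ocn:cc}, and the lemma is essentially a bookkeeping consequence of that fact together with the two assumptions on $\sseq$ (transitivity and subsumption of $\ssim$). The only mild subtlety worth spelling out in the proof is that the hypothesis ``$\sseq$ subsumes $\ssim$'' is used precisely to transport the copycat inequality from the maximal strong simulation $\ssim$ to the relation $\sseq$, and that transitivity is needed in both directions—once to extend a lower bound and once to refute a would-be upper bound.
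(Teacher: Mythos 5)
Your proof is correct and follows exactly the paper's own argument: both derive $pm \sseq pn$ from the copycat property~\eqref{eq:ocn:cc} together with the assumption that $\sseq$ subsumes $\ssim$, and then conclude both parts by transitivity (the paper leaves the two case-by-case chains implicit, which you spell out). Nothing is missing.
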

\begin{proof}
    By \cref{eq:ocn:cc} we have $pm\ssim pn$ and thus $pm\sseq pn$.
    The claim directly follows from this observation and the transitivity of $\sseq$.
\end{proof}
  The above monotonicity property holds in particular for $\sseq$ being strong or \textWSIM,
  \textTINC\ or any approximating relation $\sseq_\alpha$ defined later
  in this paper. 

The following is a direct consequence of \cref{lem:monotonicity} that we state here only
because we are particularly interested in simulation games played on \OCN s.
\begin{corollary}\label{lem:ocn-ocn-monotonicity}
    Let $pm$ and $p'm'$ be two \OCN\ processes and $\sseq$ be any transitive
    relation that subsumes strong simulation. Then
    $pm\sqsubseteq p'm'$ implies $pn\sqsubseteq p'n'$ for all $n\le m$ and $m'\le n'$.
\end{corollary}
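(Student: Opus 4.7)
The plan is to derive the corollary by two successive applications of \cref{lem:monotonicity}, one for each coordinate. Note that \cref{lem:monotonicity} only requires one of the two compared processes to be an \OCN\ process, while the other may be arbitrary. Here both $pm$ and $p'm'$ are \OCN\ processes, so both sides can be moved up independently.

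Concretely, starting from the hypothesis $pm \sseq p'm'$, I would first raise the right-hand counter. Applying part~(2) of \cref{lem:monotonicity} with $s := pm$, the \OCN\ process $p'm'$, and $m' \le n'$, from $pm \sseq p'm'$ we obtain $pm \sseq p'n'$. I would then raise the left-hand counter by using the contrapositive of part~(1) of \cref{lem:monotonicity}, i.e.\ the implication ``$pn \sseq s$ implies $pm \sseq s$ whenever $m \le n$'' read in the other direction as ``if $pm \sseq s$ and $n \le m$ then\ldots'' — more carefully, part~(1) in contrapositive form says $pn \sseq s$ does \emph{not} follow from $pm \sseq s$ in general; the direction we need is part~(2) again but applied to $pm$. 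So in fact both applications use part~(2): a second invocation with $s := p'n'$, the \OCN\ process $pm$, and $n \le m$ gives $pn \sseq p'n'$, as required. (Equivalently, one can invoke the contrapositive of part~(1): if $pn \not\sseq p'n'$ then $pm \not\sseq p'n'$, contradicting what we just established.)

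The only subtlety worth noting is that the direction of monotonicity is opposite on the two sides of $\sseq$: on the right one can freely \emph{increase} the counter, while on the left one can freely \emph{decrease} it, which matches the shape of the quantification $n \le m$ and $m' \le n'$ in the statement. There is no real obstacle here; the corollary is essentially a two-step chase through the lemma, made possible because $\sseq$ is assumed transitive and both $pm$ and $p'm'$ are \OCN\ processes to which \cref{eq:ocn:cc} applies.
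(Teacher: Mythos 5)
Your overall plan is the right one and matches the paper, which offers no separate proof and simply notes the corollary is a direct consequence of \cref{lem:monotonicity}: one application of the lemma per coordinate. The right-hand step is correct as you state it (part~(2) with $s:=pm$ and $m'\le n'$ gives $pm\sseq p'n'$). However, your claim that ``both applications use part~(2)'' is wrong: part~(2) has the shape $s\sseq p\,\cdot$, i.e.\ it only lets you \emph{increase} the counter of the \OCN\ process sitting on the \emph{right} of $\sseq$, so it cannot be invoked ``with $s:=p'n'$ and the \OCN\ process $pm$'' to lower the left-hand counter. The correct justification for the left-hand step is exactly the contrapositive of part~(1) that you give in your final parenthetical: if $pn\not\sseq p'n'$ with $n\le m$, then part~(1) yields $pm\not\sseq p'n'$, contradicting what was established. (Even more directly, one can chain $pn\sseq pm\sseq p'm'\sseq p'n'$ using \cref{eq:ocn:cc} and transitivity.) So the proof is salvaged by your parenthetical, but the sentence asserting a second use of part~(2) should be deleted.
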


\subsection{Product Graphs}

When we consider simulation games played on \LTS\ induced by \OCN,
it is convenient to identify individual plays
with paths in the \emph{synchronous product}
of the two given \OCNs. In later constructions we will in particular
be interested in the effects of cyclic paths in this product.

    The \emph{product graph}
    of two \OCNs\ $\net=(Q,\Act,\delta)$ and    $\net'=(Q',\Act,\delta')$
    is the finite, edge-labeled graph with nodes $V=Q\x Q'$ and
    edges
    \begin{equation*}
    E \ = \ \{ (t, t') \in  \delta \times \delta' \ : \ \lambda(t)=\lambda(t')\}.
    \end{equation*}
    A \emph{path} in the product is a sequence
    $\xi=v_0T_1v_1T_2v_2\dots v_{k-1}T_kv_{k}$.
    As $\xi$ is a sequence of pairs (each $v_i$ is a pair of states in $Q\x Q'$
    and each $T_i\in E \subseteq \delta\x\delta'$
    is a pair of transitions)
    we can naturally speak of its two projections, $\pi$ and $\pi'$,
    which are paths in $\net$ and $\net'$, respectively.
    The path $\xi$ is \emph{enabled in $(pm,p'm')$}
    if both $\pi$ is enabled in $pm$ and $\pi'$ is enabled in $p'm'$.
    In this case we write $(pm,p'm')\step{\xi}(qn,q'n')$ to mean that both
    $pm\step{\pi}qn$ and $p'm'\step{\pi'}q'n'$.
    
    We write $T\in\xi$ if $T=T_i$ for some index $1\le i\le k$. 
    
    The $source$, and $target$ of paths in \OCN\ are lifted to paths in products in a natural way:
    We define $\source{\xi}=(\source{\pi},\source{\pi'})$,
    $\target{\xi}=(\target{\pi},\target{\pi'})$.
    We write $\seffect{\xi} = \effect{}(\pi)$
    and $\sguard{\xi}=\guard{}(\pi)$
    as well as
    $\deffect{\xi}=\effect{}(\pi')$
    and $\dguard{\xi}=\guard{}(\pi')$.

    A nonempty path $\xi$ is a \emph{cycle} if $\source{T_1}=\target{T_k}$.
    It is a \emph{simple cycle} or \emph{loop}
    if it is a cycle but none of its proper subpaths is a cycle.
    
    A \emph{lasso} is a path that contains a cycle while none of its strict prefixes does.
    \label{def:lasso}
    That is, a path $\xi$ as above is a lasso if there exists
    $l\le k$ such that $\target{T_k}=\source{T_l}$ and for all $1\le i\le j< k$,
    $\target{T_j}\neq \source{T_i}$. 
    A lasso $\xi$ naturally splits into
    $\pref(\xi)=v_0T_1v_1T_2\dots T_{l-1}v_l$
    and $\cycl(\xi) = v_lT_lv_{l+1}T_{l+1} \dots T_kv_k$.

\subsection{Normal Form}

We prove a simple normal-form theorem (\cref{thm:oca.nf}) for simulation games on \OCNs,
that essentially states that \R\ can only win if she forces \V\ to empty his counter.

\begin{definition}
    \label{def:complete}
    \label{def:normal-form}
    A \OCN\ $\AN{N}=(Q,\Act,\delta)$ is \emph{complete} if for every state $p\in Q$ and every action
    $a\in\Act$, there exists at least one transition $(p,a,d,q)\in\delta$.
    It is \emph{non-blocking} if none of its processes is a deadlock,
    i.e., if for every state $p\in Q$ there is some transition $(p,a,d,q)\in\delta$ with $d\in\{0,1\}$.
    
    A pair $\net, \net'$ of \OCNs\ is in \emph{normal form} if $\net$ 
    is non-blocking
    and $\net'$ is complete.
\end{definition}

\begin{lemma}
    \label{thm:oca.nf}
    \label{lem:normal-form}
    For any two \OCNs\ $\net=(Q,\Act,\delta)$    and $\net'=(Q',\Act',\delta')$,
    one can compute in logarithmic space a    pair $\nett,\nett'$ of \OCNs\ in normal form
    with sets of control states $Q$ and $S\supseteq Q'$, respectively, 
    such that for all $(q, n, q',n') \in (Q\x\N\x Q'\x\N)$
    and for every $\sqsubseteq{}\in\{\ssim,\wsim\}$
    it holds that
    \begin{equation}
        qn \sqsubseteq q'n' \text{ w.r.t.\ } \net,\net' \iff qn \sqsubseteq q'n' \text{ w.r.t.\ } \nett,\nett'.
    \end{equation}
\end{lemma}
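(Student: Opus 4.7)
The plan is to augment the given systems with a fresh ``drain'' action $\sharp$ and a single new sink state $d$ added to $\net'$, so that routing Duplicator into $d$ becomes as bad as being stuck. Concretely, I would add a $\sharp$-self-loop of effect $0$ at every state of $\net$ (which makes $\nett$ non-blocking) and at every state of $\net'$. I would then introduce the fresh state $d$ and, for every $q' \in Q'$ and every action $a$ for which $q'$ has no $a$-transition so far, add a transition $(q', a, 0, d)$; finally, for every action $a$ add a self-loop $(d, a, -1, d)$. This makes $\nett'$ complete with set of control states $S = Q' \cup \{d\}$, and the transformation is clearly computable in logarithmic space since it inspects each pair in $Q' \times (\Act \cup \{\sharp\})$ once.

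The proof would then proceed in two directions. First I would verify that any strong or weak simulation $R$ on $\net, \net'$ lifts to one on $\nett, \nett'$: original steps are matched via the very same original transitions, while Spoiler's $\sharp$-moves (which in $\nett$ can only be the new self-loops of effect $0$) are matched by Duplicator's $\sharp$-self-loops, so $R$ itself witnesses simulation in the extended systems. The converse is the main technical step: given a winning Duplicator strategy $\sigma'$ on $\nett, \nett'$ from an original pair $(qn, q'n')$, I would argue that $\sigma'$ can never route Duplicator to $d$. For if some play under $\sigma'$ reached $(pm, dm')$, Spoiler would win by repeatedly playing $\sharp$: each (weak) $\sharp$-response from $d$ must include at least one $\sharp$-step, which is decrementing, so Duplicator's counter strictly decreases, and after at most $m'$ rounds Duplicator is stuck. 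Since $\sigma'$ is winning, this situation cannot occur, so restricting Spoiler to the original alphabet produces a play entirely within $Q \times Q'$ that uses only original transitions, i.e., a winning Duplicator strategy in $\net, \net'$.

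The main obstacle I expect is the weak-simulation case: one must ensure that even with $\tau$-internal steps available, Duplicator's weak $\sharp$-response from $d$ cannot avoid decrementing the counter. I would handle this by making \emph{every} outgoing transition of $d$ (including any added $(d, \tau, -1, d)$) decrementing, so that every $\wstep{\sharp}$-sequence from $d$ necessarily consumes at least one unit of counter. Once this invariant is in place, the same ``draining'' argument covers both strong and weak simulation uniformly, and the equivalence of $qn \sqsubseteq q'n'$ across the two pairs of systems follows.
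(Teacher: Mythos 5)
Your proposal is correct and is essentially the paper's own construction: the paper likewise adds a fresh action (there called $\$$) with effect-$0$ self-loops on all states of both nets, plus a sink state ($L$ instead of your $d$) with counter-decreasing self-loops on every action, reached from each state lacking some action. The correctness argument is the same draining idea, merely presented from Duplicator's side (a winning strategy never enters the sink) rather than Spoiler's (she forces the sink and then exhausts the counter), and your explicit treatment of weak $\sharp$-responses from the sink matches what the paper leaves implicit.
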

\begin{proof}
    We pick a new action label
    $\$\not\in \Act$
    and turn $\net$ into a non-blocking net $\nett$ by adding 
    $\$$-labeled cycles with effect $0$ to all states:
    $\nett = (Q,\Act\cup\{\$\}, \ol{\delta})$ with
    $\ol{\delta} = \delta \cup\{(s,\$,0,s) \mid s\in Q\}$.
    To compensate for this, we add $\$$-cycles to all states of $\net'$ in the same way.
    To complete the second net, add a sink state $L$ (for ``losing''),
    which has counter-decreasing cycles for all actions, including $\$$ action,
    and connect all states without outgoing
    $a$-transitions to $L$ by $a$-labeled transitions.

    Assume \R, playing on $\net$, wins the (weak) \sgame\ against \V\
    playing on $\net'$.
    In the game on $\nett$ and $\nett'$, \R\ can move according to
    a winning strategy in the original game and thus force
    a play ending in a position $(pm,p'm')$
    that is immediately winning in the game on $\net$ and $\net'$,
    i.e., $pm\step{a}rl$ but $p'm'\notStep{a}{}{}$ for some action $a$.
    Thus
    the game on $\nett,\nett'$ continues in the position $(rl, Lm')$, which is clearly winning for
    \R\ because she can exhaust her opponent's counter and win using finitely many $\$$-moves.

    Conversely, if \V\ wins the (weak) \sgame\ on $\net$ and $\net'$ this means 
    that each play is either infinite or ends in a position $(pm,p'm')$ where $pm\notStep{a}{}{}$ for all
    actions $a$. In
    the game on $\nett$ and $\nett'$, the latter case means that \R\ has no choice but to make
    $\$$-moves indefinitely, which is losing for her.
\end{proof}

\Cref{thm:oca.nf} allows us to focus w.l.o.g.~on instances of the (weak) \textSSIM\ problems where the given systems are normalized.
In particular,
\R\ cannot get stuck and only loses infinite plays, and
\V\ can only be stuck (and lose the game) if his counter equals zero.
Therefore, every branch in any winning strategy for \R\ ends in a position where \V\ has counter value $0$.

\section{Strong Simulation}\label{sec:strongSim}
In this section 
we consider strong simulation $\ssim$ only
and therefore write shortly `simulation preorder', `simulation game', etc.~instead of strong simulation preorder/game.

Let us fix two \OCN\ $\net$ and $\net'$, with sets of control states $Q$ and
$Q'$, respectively. Following \cite{JKM2000,JM1999},
we interpret $\SIM{}{}$ as a 2-coloring of $\qq=|Q\x Q'|$ Euclidean planes, one for each
pair of control states $(q,q')\in Q\x Q'$. As proposed by Jan\v{c}ar and
Moller \cite{JM1999}, every
pair of configurations $(qn, q'n')$ is represented by the unique point
$(n,n')$ on the plane for the pair of control states
$(q,q')$. If $qn \SIM{}{} q'n'$ then the point is colored with color-$\SIM{}{}$ and 
otherwise with color-$\notSIM{}{}$. This graphical perspective on
the simulation relation is very helpful in many parts of the proof.

The main combinatorial insight of \cite{JKM2000}
(this was also present in \cite{AC1998}, albeit less explicitly)
is the so-called \emph{Belt Theorem}, that states that
each such plane can be cut into segments by two parallel lines such that
the coloring of $\SIM{}{}$ in the outer two segments is constant; see Figure~\ref{fig:belt}.
We provide a new constructive proof of this theorem, stated as
Theorem~\ref{thm:belt-theorem-bounds} below, that allows us to derive polynomial bounds on
the coefficients of all belts.

\begin{figure}[h]
  \begin{center}
      \includegraphics{./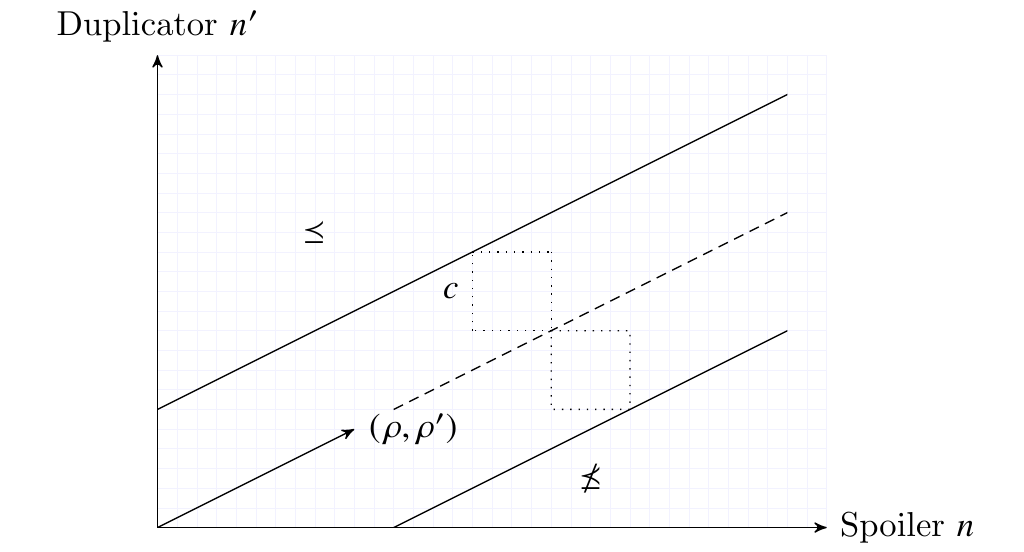}
  \end{center}
  \caption{A belt with slope $\rho/\rho'$. 
      The dashed half-line indicates the direction of the vector
      $(\rho,\rho')$.}
  \label{fig:belt}
\end{figure}

\begin{definition}
    A vector $(\rho,\rho')\in \mathbb{R}\x\mathbb{R}$ 
    is called \emph{positive} if
    $\rho\ge 0$, $\rho'\ge 0$ and $(\rho,\rho')\neq (0,0)$.
    Its \emph{direction} is the set
    $\mathbb{R}^+\cdot(\rho,\rho') = \{ (r \cdot \rho, r \cdot \rho') : r
        \in\mathbb{R}^+\}$
    of points that lie on the half-line defined by $(\rho,\rho')$ from the
    origin\footnote{$\mathbb{R}$ and $\mathbb{R}^+=\mathbb{R}\setminus\{0\}$ denote
    the sets of real numbers and non-negative real numbers, respectively.}.

    For a positive vector $(\rho,\rho')$ 
    and a number $c\in \mathbb{R}$ we say that the point $(n,n')\in\Z\x\Z$ is
    \emph{$c$-above} $(\rho,\rho')$ if
    there exists some point $(r,r') \in \mathbb{R}^+\cdot(\rho,\rho')$ in the
    direction of $(\rho,\rho')$
    such that
    \begin{equation}
      n < r - c \qquad \text{and} \qquad n' > r' + c.
    \end{equation}
    Symmetrically, $(n,n')$ is \emph{$c$-below} $(\rho,\rho')$ if there is a
    point $(r,r')\in \mathbb{R}^+\cdot(\rho,\rho')$ with
    \begin{equation}
      n > r + c \qquad \text{and} \qquad n' < r' - c.
    \end{equation}
\end{definition}
\noindent
When $c = 0$ we omit it and write simply `below' or `above'.

\begin{theorem}[Belt Theorem]\label{thm:belt-theorem-bounds}
    Let $\Net{N}$ and $\Net{N}'$ be two \OCNs\
    in normal form, with sets of states $Q$
    and $Q'$ respectively
    and let 
    $\Cacyc\le |Q\x Q'| \in \N$ be $1$ plus
    the maximal length of an acyclic path in the product graph
    of $\Net{N}$ and $\Net{N}'$.
    Then for every pair $(q,q')\in Q\x Q'$ of states
    there is a positive vector $(\rho, \rho')\in\N^2$ such that
    \begin{enumerate}
        \item if $(n,n')$ is $\Cacyc$-above $(\rho,\rho')$ then $qn\SIM{}{}q'n'$,
        \item if $(n,n')$ is $\Cacyc$-below $(\rho,\rho')$ then $qn\notSIM{}{}q'n'$,
        \item $\rho,\rho'\le \Cacyc$.
    \end{enumerate}
\end{theorem}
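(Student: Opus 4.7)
The plan is to analyze a finite symbolic variant of the \sgame, namely the \slgame, played on the product graph of $\Net{N}$ and $\Net{N}'$. A play of the \slgame\ is a path in the product that is continued until it first closes a cycle; since the product has $|Q \times Q'|$ nodes, every play ends within $\Cacyc$ rounds and the resulting trace is a lasso. The winner is determined by the cycle's effect: intuitively, Duplicator wants to force lassos whose cyclic part has a ``steep'' slope $\deffect{\cycl(\xi)} / \seffect{\cycl(\xi)}$, while Spoiler wants a ``flat'' one. Because the game is finite and turn-based, it is determined and admits positional winning strategies for any target slope.

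For a fixed pair $(q, q') \in Q \times Q'$, I would identify the critical direction $(\rho, \rho')$ as the minimax slope that Duplicator can secure against Spoiler in the \slgame\ starting from $(q, q')$. Concretely, $(\rho, \rho')$ is chosen so that Duplicator has a strategy forcing every resulting lasso $\xi$ to satisfy $\deffect{\cycl(\xi)} \cdot \rho \ge \seffect{\cycl(\xi)} \cdot \rho'$, and Spoiler has a symmetric strategy on the opposite side. Since a lasso has length at most $\Cacyc$, the cycle effects $\seffect{\cycl(\xi)}$ and $\deffect{\cycl(\xi)}$ are bounded in absolute value by $\Cacyc$, which immediately yields $\rho, \rho' \le \Cacyc$, i.e.\ item~(3).

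Next I would lift winning strategies from the \slgame\ to the full \sgame\ using the monotonicity of \cref{lem:monotonicity}. For item~(1), suppose $(n, n')$ is $\Cacyc$-above $(\rho, \rho')$. Duplicator plays his \slgame\ strategy until a cycle closes, and then, by staying at the cycle's entry node, iterates this cycle indefinitely, re-invoking the strategy whenever Spoiler deviates. The $\Cacyc$-above buffer absorbs counter fluctuations along the initial acyclic prefix, since those have magnitude at most $\Cacyc$; afterwards, repeated traversal of a cycle of favorable slope keeps the current point above the line and in particular keeps Duplicator's counter positive forever, so Duplicator wins. Item~(2) is symmetric: from a position $\Cacyc$-below $(\rho, \rho')$, Spoiler iterates a slope-winning cycle, which per iteration strictly decreases Duplicator's counter relative to Spoiler's; after finitely many iterations Duplicator is stuck, and Spoiler wins by the normal-form assumption of \cref{lem:normal-form}.

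The main obstacle will be making this correspondence fully rigorous. The subtlety is that when Spoiler deviates from the expected continuation of the lasso, the play re-enters the product graph at a different node and Duplicator must commit to a new lasso; showing that the same belt direction still works requires re-applying the minimax property of the \slgame\ at the new position and arguing that the newly chosen cycle has slope compatible with $(\rho, \rho')$. Equivalently, one has to prove that the set of positions ``$\Cacyc$-above $(\rho, \rho')$'' is closed under a Duplicator response in the \sgame, and dually for Spoiler below the belt. The polynomial bound $\Cacyc$ on both slope coordinates and lasso length is precisely what allows this inductive argument to go through without the buffer ever being exhausted.
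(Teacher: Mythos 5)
There is a genuine gap, and it sits at the heart of your Spoiler-side argument. Your \slgame\ ends as soon as the first cycle closes, and you then claim that iterating a cycle that Spoiler can force ``per iteration strictly decreases Duplicator's counter relative to Spoiler's; after finitely many iterations Duplicator is stuck.'' A \emph{relative} decrease does not imply an absolute one: the best cycle effect Spoiler can force may well be positive in both coordinates (e.g.\ effect $(+2,+1)$ against slope $(1,1)$), in which case Duplicator's counter grows forever and is never exhausted. What iterating such a cycle actually achieves is only that the \emph{ratio} of the counters drifts towards the cycle's effect, after which Spoiler must switch to a strategy tailored to this new, strictly less steep slope --- which generally uses different cycles through different control states. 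This is exactly why the paper's \slgame\ is a multi-phase game (when the lasso's cycle effect is behind the current slope \emph{and} positive, the game continues with that effect as the new slope), why plays are bounded by $(\qq+1)^2$ phases rather than one, and why the transfer lemma for Spoiler is proved by induction on the segment depth of her strategy, with a separate geometric claim showing that after enough rollbacks the counter pair becomes $\Cacyc$-below a vector subsuming the new slope. Your single-lasso game cannot express this descent through slopes, so your item~(2) argument fails whenever the boundary is not reachable in one phase; your ``main obstacle'' paragraph gestures at re-invoking minimax at new positions but does not supply the decreasing sequence of slopes, its termination, or the approximation argument needed to re-enter the induction.

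A secondary, related omission is on Duplicator's side: in the multi-phase game Duplicator's winning strategy is a tree of segments, and the paper has him first descend (via forward jumps) to a leaf segment before iterating cycles there; the budget $\Cacyc$ of the ``$\Cacyc$-above'' hypothesis must absorb the acyclic prefix accumulated across this descent, not just one acyclic prefix of one lasso. Your bound $\rho,\rho'\le\Cacyc$ and the observation that cycle effects are bounded by $\Cacyc$ are fine, and your overall architecture (finite symbolic game on the product, transfer to the \sgame\ via monotonicity and rollbacks) is the right one --- but the proof as proposed only works in the special case where the \slgame\ is decided in its first phase.
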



Notice that a point $(n,n')\in\N^2$ is $c$-below the positive vector $(0,1)$
iff $n>c$ and that no point in $\N^2$ is $c$-above this vector.
In the particular case of a pair of states $(p,p')$ with $pm\not\ssim p'm'$
for all $m,m'\in\N$, the vertical vector $(\rho,\rho')=(0,1)$
satisfies the claim of the Belt Theorem.

Our proof of Theorem~\ref{thm:belt-theorem-bounds} can be found in
Section~\ref{sec:belts/proof}. It is based on a finite abstraction of the
\sgame, that we will analyze in the next section.

\subsection{Slope Games}


By \cref{thm:oca.nf} (page~\pageref{thm:oca.nf}), we can assume without loss of
generality that the pair of \OCNs\
$\NN{N},\NN{N'}$ are in normal form (\cref{def:normal-form}).
Intuitively, this means that in a \sgame,
it is \R's objective to exhaust her opponent's counter.
Consequently, her local goal is to maximize the ratio $n/n'$
between the counter values along a play.

Consider the product graph of $\Net{N}$ and $\Net{N'}$   
and let $\qq=|Q\x Q'|$ be the number of states in this product.
If we ignore the actual counter values, any play of the \sgame\
starting in two processes of $\Net{N}$ and $\Net{N'}$ respectively,
describes a path in this product graph.
Moreover, after at most $\qq$ rounds, a pair of control states is revisited,
which means the corresponding path in the product is a lasso.

The effects of cycles in the product will play a central role in our further construction.
The intuition is that
if a play of a \sgame\ describes a lasso then both players
``agree'' on the chosen cycle. Repeating this cycle will change the ratio of the
counter values towards its effect.

To formalize this intuition, we define a finitary \slgame\ which
proceeds in phases.
In each phase, the players alternatingly move on the control graphs of their original
nets, ignoring the counter, and thereby determine the next lasso that occurs.
After such a phase, a winning condition is evaluated that compares the effect of the
chosen lasso's cycle with that of previous phases.
Now either one player immediately wins or the
effect of the last cycle was strictly smaller than all previous ones
and the next phase starts.
The number of different effects of simple cycles therefore bounds the maximal 
number of phases played. Since
each phase describes a lasso path in the product
this implies a bound on the total length of any play.

\begin{definition}
  \label{def:steeper}
  \label{def:behind}
    {
        \renewcommand{\R}{\mathbb{R}}
    Let $(\rho,\rho')$ and $(\alpha,\alpha')$
    be two vectors in $\R\x\R$
    }
    and consider the clockwise oriented angle from $(\rho,\rho')$ to $(\alpha,\alpha')$
    with respect to the origin $(0,0)$. We say that $(\alpha,\alpha')$ is \emph{behind}
    $(\rho,\rho')$ if this oriented angle is strictly between $0^\circ$ and $180^\circ$.
    See \cref{fig:behind} for an illustration.

    Positive vectors may be naturally ordered: We will call $(\rho,\rho')$
    \emph{steeper} than $(\alpha,\alpha')$,
    written $(\alpha,\alpha') \lesssteep (\rho,\rho')$, if $(\alpha,\alpha')$ is behind
    $(\rho,\rho')$.
\end{definition}
\begin{figure}[h]
  \begin{minipage}[t]{.45\textwidth}
      \centering
      \includegraphics[width=\textwidth]{./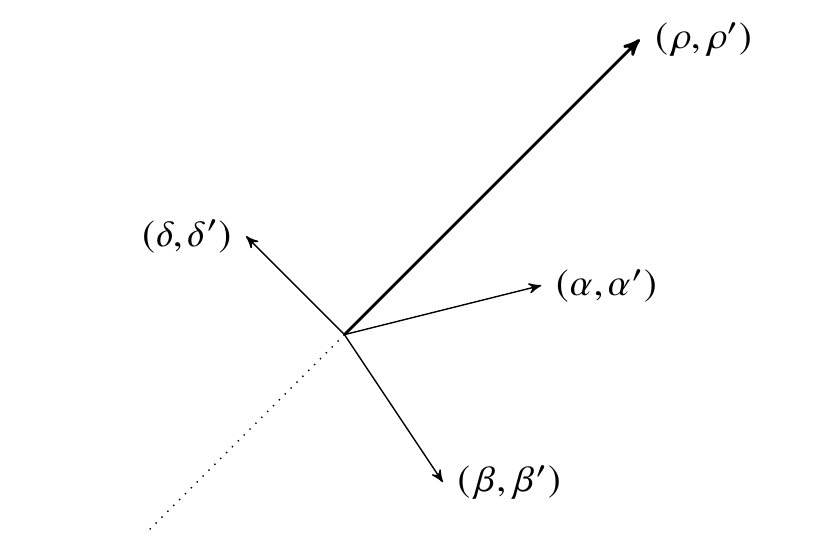}
      \caption{Vectors $(\alpha,\alpha')$ and $(\beta,\beta')$ are behind $(\rho,\rho')$, but
          $(\delta,\delta')$ is not.
      }
  \label{fig:behind}
    \end{minipage}
    \qquad
  \begin{minipage}[t]{.45\textwidth}
  \centering
    \includegraphics[width=\textwidth]{./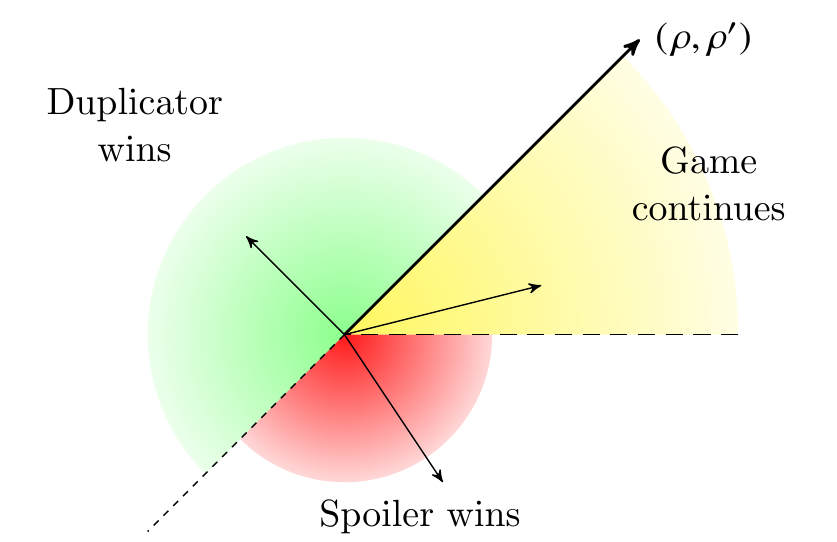}
  \caption{Evaluating the winning condition in position $(\pi,(\rho,\rho'))$
      after a phase of the \slgame.}
  \label{fig:slope-game}
  \end{minipage}
\end{figure}

\noindent Notice that the property of one vector being behind another only depends on their directions.
Also note that ``steeper'' (the relation $\lesssteep$) is only defined for positive vectors.
The following simple lemma will be useful in the sequel.

\begin{lemma}\label{lem:preserve-above}
Let $(\rho,\rho')$ be a positive vector and $c,m,n\in \N$.
\begin{enumerate}
    \item If $(n,n')$ is $c$-below $(\rho,\rho')$
        then $(n,n') + (\alpha,\alpha')$ is $c$-below $(\rho,\rho')$
        for any vector $(\alpha,\alpha')$ which is behind $(\rho,\rho')$.
    \item If $(n,n')$ is $c$-above $(\rho,\rho')$
        then $(n,n') + (\alpha,\alpha')$ is $c$-above $(\rho,\rho')$
        for any vector $(\alpha,\alpha')$ which is not behind $(\rho,\rho')$.
\end{enumerate}
\end{lemma}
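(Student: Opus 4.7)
The plan is to reduce both parts of the lemma to the fact that the signed cross product $\phi(x, x') := x\rho' - x'\rho$ monotonically records which side of the half-line $\mathbb{R}^+ \cdot (\rho, \rho')$ a point lies on, and that $\phi$ is additive in its argument. First I would verify: for a positive vector $(\rho, \rho')$, a vector $(\alpha, \alpha')$ is behind $(\rho, \rho')$ iff $\phi(\alpha, \alpha') > 0$. This just unpacks the definition of the clockwise oriented angle: the clockwise perpendicular to $(\rho, \rho')$ is $(\rho', -\rho)$, and $(\alpha, \alpha')$ lies in the open half-plane generated by this perpendicular iff its inner product with $(\rho', -\rho)$ is strictly positive, which is exactly $\phi(\alpha, \alpha') > 0$.

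Next, I would show that when $\rho, \rho' > 0$, the point $(n, n')$ is $c$-below $(\rho, \rho')$ iff $\phi(n, n') > c(\rho + \rho')$, and dually $c$-above iff $\phi(n, n') < -c(\rho + \rho')$. The ``$\Rightarrow$'' direction falls out of the defining inequalities $n > t\rho + c$, $n' < t\rho' - c$ for a witness $(r, r') = t(\rho, \rho')$ by eliminating $t$; for ``$\Leftarrow$'' one produces an explicit $t \geq 0$ that works, e.g.\ $t$ slightly larger than $(n' + c)/\rho'$. Both parts of the lemma then follow at once from the identity $\phi(n + \alpha, n' + \alpha') = \phi(n, n') + \phi(\alpha, \alpha')$: if $(\alpha,\alpha')$ is behind $(\rho,\rho')$ then $\phi$ strictly increases, so the $c$-below inequality is preserved; if $(\alpha,\alpha')$ is not behind then $\phi$ weakly decreases, so the $c$-above inequality is preserved.

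The minor obstacle is the degenerate positive vectors $(\rho, 0)$ and $(0, \rho')$, where one coordinate of the witness is forced and the algebraic characterization above breaks down. For $(0, \rho')$, no point of $\N^2$ is $c$-above (so part 2 is vacuous), and $(n, n')$ is $c$-below iff $n > c$, while ``behind'' reduces to $\alpha > 0$; so part 1 becomes $n + \alpha > n > c$. The horizontal case $(\rho, 0)$ is symmetric. These two boundary cases are dispatched by direct inspection, and the generic case is then handled by the cross-product computation outlined above.
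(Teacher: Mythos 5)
The paper offers no proof of this lemma at all (it is introduced as a ``simple lemma''), so there is no official argument to compare against; your cross-product reduction is the natural one. The identification of ``behind'' with $\alpha\rho'-\alpha'\rho>0$, the additivity of $\phi$, and the separate dispatch of the vertical and horizontal slopes (vacuity of one part, $n>c$ for the other) are all correct.

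The one step that needs more care is the ``$\Leftarrow$'' direction of your characterization ``$(n,n')$ is $c$-below iff $\phi(n,n')>c(\rho+\rho')$''. The witness $(r,r')$ must lie on the half-line $\{t\cdot(\rho,\rho') : t\ge 0\}$, and any such witness forces $n>t\rho+c\ge c$; your proposed $t$ slightly above $(n'+c)/\rho'$ is non-negative only when $n'\ge -c$. Neither condition is automatic for the shifted point $(n,n')+(\alpha,\alpha')$, because ``behind'' constrains only the sign of $\alpha\rho'-\alpha'\rho$ and both coordinates of $(\alpha,\alpha')$ may be arbitrarily negative. Concretely, for $(\rho,\rho')=(1,1)$, $c=2$, $(n,n')=(10,0)$ (which is $2$-below, witness $t=3$) and $(\alpha,\alpha')=(-20,-30)$ (which is behind), the sum $(-10,-30)$ satisfies your algebraic inequality but admits no witness with $t\ge 0$; under the strict half-line reading the lemma as literally stated fails here. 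This is arguably an imprecision of the paper rather than of your argument --- its own footnote defines $\mathbb{R}^+$ as $\mathbb{R}\setminus\{0\}$, i.e.\ the full punctured line, under which your equivalence is exact and the proof closes with no further work --- but you should make the convention explicit: either adopt the full-line reading, or add one sentence noting that in every application the shifted point is again a pair of counter values in $\N\times\N$, where $n'+c\ge 0$ guarantees a non-negative witness and your characterization is an honest equivalence.
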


\begin{definition}
    \label{def:slope_game}
  A \emph{\slgame} is a strictly alternating two player game played on a pair
  $\Net{N},\Net{N}'$ of \textOCNs\ in normal form. The game positions are pairs
  $(\pi,(\rho,\rho'))$, where $\pi$ is an acyclic path in the product graph
  of $\Net{N}$ and $\Net{N}'$, and $(\rho,\rho')$ is a positive vector called
  \emph{slope}.

The game is divided into \emph{phases}, each starting with a path
$\pi=(q_0,q'_0)$ of length $0$.
Until a phase ends, the game proceeds in rounds like a \sgame, but the players pick transitions instead of steps:
in a position $(\pi,(\rho,\rho'))$ where $\pi$ ends in states $(q,q')$, \R\ chooses a transition 
$t=(q \step{a, d} p)$,
then \V\ responds with a transition $t'=(q' \step{a, d'} p')$.
If the extended path $\bar\pi=\pi (t,t')(p,p')$ is still not a lasso,
the next round continues from the updated position $(\bar\pi,(\rho,\rho'))$; otherwise the phase ends
with \emph{outcome} $(\bar\pi,(\rho,\rho'))$.
The slope $(\rho,\rho')$ does not restrict the possible moves of either player, nor changes during a phase. 
We thus speak of \emph{the slope of a phase}.

If a round ends in position $(\pi, (\rho, \rho'))$ where $\pi$ is a lasso,
then the winning condition is evaluated.
We distinguish three non-intersecting cases
depending on how the effect 
\begin{align}  \label{eq:effects}
(\Delta(\cycl(\pi)), \Delta'(\cycl(\pi))) \ = \ (\alpha,\alpha')
\end{align} of the lasso's cycle relates to
$(\rho,\rho')$: 

\begin{enumerate}
  \item If $(\alpha,\alpha')$ is not behind $(\rho, \rho')$, \V\ wins immediately.
  \item If $(\alpha,\alpha')$ is behind $(\rho, \rho')$ but not positive, \R\ wins immediately.
  \item If $(\alpha,\alpha')$ is behind $(\rho, \rho')$ and positive, the game continues with a new
      phase from position $(\bar\pi,(\alpha,\alpha'))$, where $\bar\pi = \target{\pi}$
      is the path of length $0$
      consisting of the pair of ending states of $\pi$.
\end{enumerate}


\noindent\Vref{fig:slope-game} illustrates the winning condition.
Note that if there is no immediate winner it is guaranteed that $(\alpha, \alpha')$ is a positive
vector that is behind the slope $(\rho,\rho')$ of the last phase.
The number of different positive vectors that derive from
the effects of simple cycles thus bounds the maximal number of phases in the
game.
\end{definition}

The connection between the slope and \sgames\
is that the outcome of a \slgame\ from initial position $((q, q'), (\rho, \rho'))$ determines how
the initial slope $(\rho, \rho')$ relates to the belt in the plane for $(q,q')$
in the simulation relation.
Roughly speaking, if $(\rho,\rho')$ is less steep than the belt then \R\ wins the \slgame; if $(\rho,\rho')$
is steeper then \V\ wins.

Consider a \sgame\ in which the ratio $n/n'$ 
of the counter values of \R\ and \V\ is the same as the ratio
$\rho/\rho'$, i.e.,~suppose $(n,n')$ is contained in the direction of $(\rho,\rho')$.
Suppose also that the values $(n,n')$ are sufficiently large.
By monotonicity, we know that the steeper the slope $(\rho,\rho')$, the better for \V.
Hence if the effect $(\alpha,\alpha')$ of some cycle is behind $(\rho,\rho')$ and positive, 
then it is beneficial for \R\ to repeat this cycle. With more and more repetitions, 
the ratio of the counter values will get arbitrarily close to $(\alpha,\alpha')$.
On the other hand, if $(\alpha,\alpha')$ is behind $(\rho,\rho')$ but not positive then \R\ wins
by repeating the cycle until the \V's counter decreases to $0$.
Finally, if the effect of the cycle is not behind $(\rho,\rho')$ then repeating this cycle leads to \V's win. 

The next lemma follows from the observation that in \slgames,
the slope of a phase must be strictly less steep than those of all previous
phases.

\begin{lemma}\label{lem:slopegame-bounds}
    For a fixed pair $\Net{N},\Net{N}'$ of \OCNs\ in normal form,
    \begin{enumerate}
      \item any \slgame\ ends after at most $(\qq+1)^2$ phases, and
      \item \slgames\ are effectively solvable in \PSPACE.
    \end{enumerate}
\end{lemma}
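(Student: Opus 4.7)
The plan is to handle the two parts in order, with part 1 providing a structural bound that makes part 2 essentially immediate.

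For part 1, I would track how slopes evolve between consecutive phases. By the definition of the \slgame, whenever a phase ends in position $(\bar\pi, (\rho,\rho'))$ without an immediate winner, the cycle effect $(\alpha,\alpha') = (\Delta(\cycl(\bar\pi)), \Delta'(\cycl(\bar\pi)))$ is a positive vector behind $(\rho,\rho')$, and the next phase continues with this new slope. Hence the sequence of slopes appearing in the game is strictly decreasing in steepness in the sense of \cref{def:steeper}, so all slopes are pairwise distinct as directions. I would then bound the number of possible new slopes: every such slope arises as the effect of the cycle of a lasso in the product graph, which is a simple cycle of length at most $\qq$. Each of its two coordinates is therefore a sum of at most $\qq$ values from $\{-1,0,+1\}$ and so lies in $\{-\qq,\ldots,\qq\}$; positivity forces both coordinates into $\{0,\ldots,\qq\}$, excluding $(0,0)$. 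This gives at most $(\qq+1)^2 - 1$ distinct directions arising from cycle effects, and counting the initial slope yields the bound $(\qq+1)^2$ on the number of phases.

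For part 2, I would first observe that each individual phase consists of at most $\qq$ rounds: during a phase the path is extended by one product edge per round and remains non-lasso until the last round, but the product graph has only $\qq$ nodes, so a lasso must form within $\qq$ extensions. Combined with part 1, the total number of rounds of any play is $O(\qq^3)$; each game position is of polynomial size (an acyclic path of length at most $\qq$ together with a slope whose coordinates are bounded by $\qq$); and at each move only polynomially many transitions are available. The slope game is thus a finite, polynomial-depth, turn-based reachability-style game whose winner from a given position can be determined by an alternating polynomial-time procedure, and since $\text{APTIME} = \PSPACE$ this yields the required upper bound. Equivalently, a straightforward depth-first recursive min--max evaluation of the game tree uses only polynomial stack space.

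The argument is quite direct and I do not expect a serious obstacle; the main points to handle carefully are the clean bookkeeping of the strict-steepness invariant (so that each new slope really is a previously unseen direction) and correctly accounting for the initial slope in the phase bound.
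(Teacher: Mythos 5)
Your proof is correct and follows essentially the same route as the paper: bound the slopes by the effects of simple cycles (each coordinate in $\{0,\dots,\qq\}$), use the strict decrease in steepness across phases to conclude the $(\qq+1)^2$ bound, and then solve the resulting polynomial-depth game by exhaustive (alternating/DFS) search in polynomial space. Your accounting is in fact slightly more explicit than the paper's about why the slopes are pairwise distinct and about the per-phase round bound, but the argument is the same.
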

\begin{proof}
  After every phase, the slope $(\rho, \rho')$ is equal to the effect of a simple
  cycle, which must be a positive vector.  Thus the absolute values of both numbers
  $\rho$ and $\rho'$ are bounded by $\qq=\card{Q\x Q'}$. It follows that the total
  number of different possible values for $(\rho,\rho')$, and therefore the maximal
  number of phases played, is at most $(\qq + 1)^2$.
  Point 2 is a direct consequence
  as one can find and verify winning strategies by an exhaustive search; polynomial space suffices 
  as the depth of the search is polynomial.
\end{proof}


The outcome of a \slgame\ depends only on the effects of simple
cycles 
that are behind the current slope $(\rho, \rho')$,
and not the actual values $\rho,\rho'$.
This motivates the following definition.

\begin{definition}
    \label{def:vector-equivalence}
Consider all the non-zero effects $(\alpha,\alpha')$ of all simple cycles 
and denote the set of all these vectors by $V$. We say that a positive vector $(\sigma,\sigma')$
\emph{subsumes} a positive vector $(\rho,\rho')$ when
for all $(\alpha,\alpha') \in V$,
\begin{equation}
    (\alpha,\alpha') \text{ is behind } (\rho,\rho') \quad  \implies \quad
     (\alpha,\alpha') \text{ is behind } (\sigma,\sigma').
\end{equation}
Call $(\rho,\rho')$ and $(\sigma,\sigma')$ \emph{equivalent} if they subsume each other.
\end{definition}

\begin{remark}\label{rem:vector-equivalence-epsilon}
Notice that for every positive vector $(\rho,\rho')\in V$
there exist $\epsilon>0$ such that
$(\rho-\epsilon,\rho'+\epsilon)$ subsumes $(\rho,\rho')$.
\end{remark}

In particular, all positive vectors lying in the open angle between any two angle-wise neighbors from $V^- = V \cup -V$
(where $-V = \{(-\alpha,-\alpha') \ : \ (\alpha, \alpha') \in V\}$) are equivalent.
We claim that equivalent slopes have the same winner in the \slgame.

\begin{lemma} \label{lem:constant-winner}
  If \R\ wins the \slgame\ from $((q,q'),(\rho,\rho'))$ and $(\sigma,\sigma')$ subsumes $(\rho,\rho')$ then 
  \R\ also wins the \slgame\ from $((q,q'),(\sigma,\sigma'))$.
  In consequence, when $(\rho,\rho')$ and $(\sigma,\sigma')$ are equivalent then the same player wins the \slgame\
  from $((q,q'),(\rho,\rho'))$ and $((q,q'),(\sigma,\sigma'))$.
\end{lemma}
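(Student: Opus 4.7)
The plan is to take Spoiler's winning strategy from the $\rho$-game and transplant it, essentially unchanged, into the $\sigma$-game. The key observation is that within a single phase the current slope plays no role in determining which moves are available to either player: transitions are chosen purely on the basis of the current acyclic product path $\pi$. The slope only enters the picture when the phase concludes with a lasso and the winning condition is evaluated on the cycle's effect $(\alpha,\alpha')$. Moreover, the slope of any phase after the first is defined to equal the effect of the cycle that ended the preceding phase (Case 3 in \cref{def:slope_game}), so it does not depend on the starting slopes $(\rho,\rho')$ or $(\sigma,\sigma')$. Consequently, after the first phase terminates in Case 3, the positions of the two games synchronize exactly, and a single winning strategy can drive both.

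Concretely, let $\zeta$ be a winning strategy for Spoiler from $((q,q'),(\rho,\rho'))$. I would define Spoiler's strategy in the $\sigma$-game as follows: during the first phase, Spoiler pretends the current slope is $(\rho,\rho')$ and copies the response that $\zeta$ dictates for the corresponding position in the $\rho$-game; from the second phase onwards, the position of the $\sigma$-game coincides with the position of the parallel $\rho$-game, and Spoiler simply applies $\zeta$ directly. Duplicator is free to play any strategy.

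The main step is the case analysis at the end of the first phase. Let $(\alpha,\alpha')$ be the effect of the cycle produced. Since $\zeta$ is winning in the $\rho$-game, that phase terminates in Case 2 ($(\alpha,\alpha')$ behind $(\rho,\rho')$ but not positive, Spoiler wins) or Case 3 ($(\alpha,\alpha')$ behind $(\rho,\rho')$ and positive, with play continuing from $(\target{\pi},(\alpha,\alpha'))$ under $\zeta$). In either case $(\alpha,\alpha')$ is behind $(\rho,\rho')$; when $(\alpha,\alpha')$ is non-zero it lies in the set $V$ of simple-cycle effects, and subsumption forces $(\alpha,\alpha')$ to be behind $(\sigma,\sigma')$ as well. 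Hence the $\sigma$-game cannot terminate in Case 1 at this point: either Case 2 triggers and Spoiler wins outright, or Case 3 triggers and both games enter the identical continuation position, from which $\zeta$ is still winning. The degenerate situation $(\alpha,\alpha') = (0,0)$ is harmless: it cannot arise in Case 3 since $(0,0)$ is not positive, and in Case 2 Spoiler wins in both games regardless of what is declared about whether $(0,0)$ is behind anything.

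The main obstacle I expect is simply being precise about the conventions surrounding the zero vector and about how a strategy from one game is transcribed into the other; this is a routine check once one notes that the subsumption condition is designed precisely so that the three cases of the winning condition are insensitive to the distinction between $(\rho,\rho')$ and $(\sigma,\sigma')$ whenever $(\sigma,\sigma')$ subsumes $(\rho,\rho')$. Finally, the stated consequence for equivalent vectors follows at once: equivalence means each vector subsumes the other, so applying the main statement in both directions, together with determinacy of the slope game guaranteed by part 2 of \cref{lem:slopegame-bounds}, yields a common winner.
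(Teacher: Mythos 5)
Your proof is correct and follows essentially the same route as the paper's: Spoiler literally reuses her winning strategy, and subsumption guarantees that every outcome reachable under that strategy in the first phase (whose cycle effect is necessarily behind $(\rho,\rho')$) is evaluated the same way in the $(\sigma,\sigma')$-game, after which the two games coincide. Your version merely spells out the details (the synchronization after the first phase, the zero-effect corner case) that the paper's two-sentence proof leaves implicit.
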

\begin{proof}
  A winning strategy in the \slgame\ from $((q,q'),(\rho,\rho'))$ may be literally
  used in the \slgame\ from $((q,q'),(\sigma,\sigma'))$.
  This holds because the assumption that $(\sigma,\sigma')$ subsumes $(\rho,\rho')$
   implies that all possible outcomes of the initial phase of the
   \slgame\ are evaluated equally.
\end{proof}

\subsection{Proof of the Belt Theorem}\label{sec:belts/proof}
Consider one phase of a \slgame, starting from a position $(\pi, (\rho, \rho'))$.
The phase ends with a lasso whose cycle effect $(\alpha,\alpha')$ satisfies 
exactly one of three conditions, as examined by the evaluating function.
Accordingly, depending on its initial position, 
every phase falls into exactly one of three disjoint cases:

\begin{enumerate}
\item\label{final1}
\R\ has a strategy to win the \slgame\ immediately,
\item\label{final2}
\V\ has a strategy to win the \slgame\ immediately or
\item \label{nonfinal}
neither \R\ nor \V\ have a strategy to win immediately. 
\end{enumerate}

\noindent
In case (\ref{final1}) or (\ref{final2}) we call the phase \emph{final}, and in
case (\ref{nonfinal}) we call it \emph{non-final}.
The non-final phases are the most interesting ones
as there, both \R\ and \V\ have a 
strategy to either win immediately or continue the \slgame,
i.e., to avoid an immediate loss.

Both in final and non-final phases, a strategy for \R\ or \V\ is a tree as described below. 
For the definition of strategy trees we need to consider
not only \R's positions $(\pi, (\rho, \rho'))$ but also \V's positions, the
intermediate positions within a single round. These intermediate positions may be modeled
as triples $(\pi, (\rho,\rho'), t)$ where $t$ is a transition in $\Net{N}$ from the last state of $\pi$.
Observe that the bipartite directed graph, with positions of a phase as vertexes and edges determined by the single-move relation, 
is actually a tree, call it $T$. 
Thus a \R-strategy, i.e.~a subgraph of $T$ containing exactly one successor of every \R's position and all successors
of every \V's position, is a tree as well; and so is any strategy for \V.

Such a strategy (tree) in the \slgame\ naturally splits into \emph{segments}, each segment being a strategy (tree) in one phase.
The segments themselves are also arranged into a tree, which we call
a \emph{segment tree}.
Regardless of which player wins a \slgame, according to the above observations,
this player's winning strategy contains segments of two kinds:

\begin{itemize}
  \item non-leaf segments are strategies to either win immediately or continue the Slope
        Game (these are strategies for non-final phases); 
  \item leaf segments are strategies to win the \slgame\ immediately (these are strategies in final phases).
\end{itemize}

\noindent By the \emph{segment depth} of a strategy we mean the depth of its segment tree.
By point~1 of
\cref{lem:slopegame-bounds} (page~\pageref{lem:slopegame-bounds}),
we know that a \slgame\ ends after at most
$\dmax = (\qq + 1)^2$ phases.
Consequently, the segment depths of strategies are at most $\dmax$ as well.

Recall the value $\Cacyc$ defined as the maximal length of a simple cycle in the product graph,
i.e., the maximal length of any acyclic path plus $1$.
The claim of \cref{thm:belt-theorem-bounds} will easily follow from
the following two \cref{lem:ssim:spoiler-transfer,lem:ssim:duplicator-transfer}; they
state that if a player wins the \slgame,
an excess of counter value of $\Cacyc$ is sufficient to be able to safely ``replay'' a
winning strategy in the \sgame.

\begin{lemma}\label{lem:ssim:spoiler-transfer}
  If \R\ wins the \slgame\ from
  position $((p,p'), (\rho, \rho'))$ then \R\ wins the \sgame\ from every
  position $(p m, p' m')$ which is $\Cacyc$-below $(\rho,\rho')$.
\end{lemma}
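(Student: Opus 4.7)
I would prove this lemma by induction on the segment depth $d$ of Spoiler's winning slope-game strategy $\sigma$ from $((p,p'),(\rho,\rho'))$. The overall plan is: Spoiler replays $\sigma$ transition-by-transition in the sim game. Safety comes from the counter excess: any single phase corresponds to a path in the product graph of length at most $\Cacyc$, so its Spoiler-side guard is bounded by $\Cacyc$, and since $(m,m')$ is $\Cacyc$-below $(\rho,\rho')$ we have $m>\Cacyc$, hence Spoiler's counter always suffices to execute a whole phase.

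The base case $d=1$ handles strategies where every branch of $\sigma$ ends in a Spoiler-winning final phase: a lasso whose cycle effect $(\alpha,\alpha')$ is behind $(\rho,\rho')$ and not positive. Because $(\rho,\rho')$ is itself positive, a short geometric check on the cross product forces $\alpha'<0$. So when Spoiler mimics $\sigma$ in the sim game, either Duplicator cannot match some action label (Spoiler wins immediately) or the lasso completes with $\alpha'<0$. Spoiler then keeps playing the cycle's transitions: any Duplicator response either reproduces the cycle (decreasing her counter by at least one per iteration) or yields a fresh product path which, because $\sigma$ is winning against every Duplicator strategy, again leads to a ``behind and not positive'' cycle. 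In either case Duplicator's non-negative counter strictly decreases after each lasso and must reach zero in finitely many iterations.

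For the inductive step ($d>1$), Spoiler again plays the first phase of $\sigma$. If it turns out final and winning, the base case finishes the proof. Otherwise the cycle effect $(\alpha,\alpha')$ is positive and behind $(\rho,\rho')$, and $\sigma$ prescribes a winning sub-strategy $\sigma'$ from $((q,q'),(\alpha,\alpha'))$ of segment depth strictly less than $d$. To invoke the inductive hypothesis on $\sigma'$, Spoiler first ``pumps'' the just-discovered cycle for enough additional iterations to shift the sim-game position into the region $\Cacyc$-below $(\alpha,\alpha')$. \Cref{lem:preserve-above} guarantees that these extra shifts, all behind $(\rho,\rho')$, preserve the $\Cacyc$-below property with respect to the original slope, so Spoiler remains safe throughout. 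Once the position is $\Cacyc$-below $(\alpha,\alpha')$, the inductive hypothesis applied to $\sigma'$ supplies her winning continuation.

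The main obstacle is the pumping step of the inductive case: Duplicator is not obliged to reproduce her earlier choices of transitions along the cycle, so the product path during pumping may diverge. The remedy is that any such divergence still extends an acyclic path in the product and must itself eventually form a lasso, which is then covered by some branch of $\sigma$ (possibly a freshly reached position to which we apply the inductive hypothesis on a sub-strategy of strictly smaller segment depth, or a final phase handled by the base case). Thus Spoiler's effective sim-game strategy is really a finite grafting of sub-strategies of $\sigma$, and every leaf of the resulting strategy tree corresponds to a sim-game position where Duplicator's counter has been driven to zero and she cannot move.
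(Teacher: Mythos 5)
Your overall architecture matches the paper's proof (induction on segment depth, replaying the slope-game strategy with rollbacks, \cref{lem:preserve-above} for safety), and your base case is essentially the paper's. But the inductive step has a genuine gap at exactly the point you flag as the main obstacle. First, the pumping claim itself is false as stated: adding $t\cdot(\alpha,\alpha')$ to a position translates it parallel to the ray $\mathbb{R}^+\cdot(\alpha,\alpha')$, so it never changes which side of that ray the position lies on, nor its distance from it. A position that starts $\Cacyc$-below $(\rho,\rho')$ but (say) $\Cacyc$-above $(\alpha,\alpha')$ remains $\Cacyc$-above $(\alpha,\alpha')$ after arbitrarily many iterations of the cycle; e.g.\ $(10,15)$ is $1$-below $(1,2)$ yet $1$-above $(1,1)$, and stays so under adding $(t,t)$. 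So the position need \emph{never} become $\Cacyc$-below $(\alpha,\alpha')$, and the induction hypothesis for the sub-strategy at slope $(\alpha,\alpha')$ cannot be invoked directly. The fix is the subsumption machinery: pump until the position is $\Cacyc$-below a slightly steeper vector $(\alpha-\epsilon,\alpha'+\epsilon)$ that \emph{subsumes} $(\alpha,\alpha')$ (\cref{def:vector-equivalence}, \cref{rem:vector-equivalence-epsilon}), and then use \cref{lem:constant-winner} to transfer the winning sub-strategy from slope $(\alpha,\alpha')$ to the perturbed slope before applying the induction hypothesis.

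Second, your remedy for Duplicator's divergence does not close the argument. When Duplicator mixes several distinct positive cycle effects $(\rho_1,\rho_1')\moresteep\cdots\moresteep(\rho_k,\rho_k')$, no single cycle is pumped consistently, and your fallback --- apply the induction hypothesis to the sub-strategy for whatever cycle was just completed --- presupposes that the current position is already $\Cacyc$-below (a vector subsuming) that cycle's effect, which is precisely what has not been established at that moment; right after the very first lasso it will typically fail. One genuinely needs the paper's separate claim, proved by induction on the finitely many possible phase outcomes ordered by steepness: either the steepest outcome $(\rho_1,\rho_1')$ recurs, in which case the accumulated effects eventually drive the position $\Cacyc$-below its perturbation, or it stops recurring and the argument restarts with $(\rho_2,\rho_2')$. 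A smaller but related issue: your blanket safety argument (``$m>\Cacyc$ suffices to execute a whole phase'') only covers the first phase; across many rollbacks Spoiler's counter is $m$ plus accumulated cycle effects, which can decrease, and non-negativity must instead be derived from the invariant that the accumulated sum of cycle effects is behind the current slope, via \cref{lem:preserve-above}.
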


\begin{proof}


\emph{(Informally)}
a position in the \slgame\ contains a positive vector $(\rho,\rho')$,
while a position in the \sgame\ contains a pair $(m,m') \in \N\x\N$ 
of counter values, that can also be interpreted as a positive vector.
%
%
%
The crucial idea of the proof is to consider the segments of the supposed winning strategy in the
\slgame\ separately.
Each such segment is a strategy for one phase and as such, describes how to move in
the \sgame\ until the next lasso is observed. Afterwards, \R\ can
choose to continue playing according to the next lower segment, or ``roll back'' the
cycle and continue playing according to the current segment.
By the rules of the \slgame\ we observe that after sufficiently many such rollbacks
the difference between the ratio $m/m'$ of the actual counters and the slope of the next
lower segment is negligible, i.e., these vectors are equivalent in the sense
of \fullref{def:vector-equivalence}. 
Then, \R\ can safely continue to play according to the next lower segment.

To safely play such a strategy in the \sgame, \R\ needs to ensure that her own
counter does not decrease too much as that could restrict her ability to move.
We observe however, that any partial play that ``stays in some segment''
can be decomposed into a single acyclic prefix plus a number of cycles.
Such a play therefore preserves the invariant that all visited points are
below the slope of the phase. In particular, this means that \R's
counter is always $\ge 0$.

\vspace{0.5cm}

\emph{(Formally)} the proof of \cref{lem:ssim:spoiler-transfer} proceeds by induction on the segment depth $d$ of the
assumed winning strategy in the \slgame.

\vspace{\baselineskip}
\case{$d = 1$} 
This means that \R\ has a strategy to win the \slgame\ in the first phase,
and hence to enforce that the effect of all cycles is behind $(\rho,\rho')$ but not
positive. Denote this strategy by $\sigma$.
In the \sgame\, \R\ will re-use this strategy as we describe below.
At every position $(qn,q'n')$ in the \sgame\, 
\R\ keeps a record of the \emph{corresponding position} $(\pi, (\rho,\rho'))$ in the \slgame\, enforcing the
invariant that $(q,q')$ are the ending states of the path $\pi$.

From the initial position  $(pm,p'm')$ with corresponding position $((p,p'), (\rho,\rho'))$,
\R\ starts playing the \sgame\ according to $\sigma$,
until the path in the corresponding position of the \slgame\, say $\pi_1$,
describes a lasso (this must happen after at most $\Cacyc$ rounds).
Thus $\pi_1$ splits into:
\begin{equation}
  \pi_1 = \alpha_1\beta_1
\end{equation}
where $\beta_1$ is a cycle.
Let $(a_1, a'_1) = (\seffect{\alpha_1},\deffect{\alpha_1})$ 
and $(b_1, b'_1) = (\seffect{\beta},\deffect{\beta_1})$ 
be the effects of $\alpha_1$ and $\beta_1$, respectively. 
The current values of counters are clearly
\begin{equation}
 m + a_1 + b_1 \qquad \text{and }\quad
 m' + a'_1 + b'_1
\end{equation}
assuming that the play did not end by now with \R's win.
As the length of path $\pi_1$ is at most $\Cacyc$ and
$(m,m')$ is assumed to be $\Cacyc$-below $(\rho,\rho')$, we know that
all positions visited by now in the \sgame\ were below $(\rho,\rho')$.
In particular, \R's counter value was surely non-negative by now.

Now \R\ ``rolls back'' the cycle $\beta_1$, namely changes the corresponding
position in the \slgame\ from $(\pi_1, (\rho,\rho'))$ to $(\alpha_1,(\rho,\rho'))$ 
and continues playing according to $\sigma$. 
The play continues until \R\ wins or the path in the corresponding position of the
\slgame\, say $\pi_2$, is a lasso again. Again, we split the path into an acyclic
prefix and a cycle:
\begin{equation}
  \pi_2 = \alpha_2 \beta_2.
\end{equation}
Denote the respective effects by $(a_2,a'_2)$
and $(b_2,b'_2)$.
A crucial but simple observation is that, assuming that the play did not end by now with \R's win,
the current values of counters are now
\begin{equation}
    m + a_2 + b_1 + b_2 \qquad \text{and }\quad
    m' + a'_2 + b'_1 + b'_2,
\end{equation}
i.e.~the effect $(a_1, a'_1)$ of the prefix
$\alpha_1$ of the previous lasso does not contribute any more.
As $(b_1, b'_1)$ is behind $(\rho,\rho')$ we may
apply \cref{lem:preserve-above}
(page~\pageref{lem:preserve-above}) to $(b_1, b'_1)$ with $c = 0$
in order to deduce, similarly as before, that all positions by now were below $(\rho,\rho')$.
Now \R\ rolls back $\beta_2$ by establishing $(\alpha_2, (\rho,\rho'))$ as
the new corresponding position in the \slgame.
Continuing in this way, after $k$ rollbacks the counter values are:
\begin{align}
\begin{aligned}
 n&=\;m\; + a_k + (b_1 + b_2 + \ldots + b_{k-1}) + b_k
 \qquad \text{and}\\
 n'&=\;m' + a'_k + (b'_1 + b'_2 + \ldots + b'_{k-1}) + b'_k,
 \end{aligned}
\end{align}
assuming that \R\ did not win earlier. 
All the effect-vectors $(b_i, b'_i)$ and thus also the sum
\begin{equation}\label{eq:vectorsum}
  (b_1 + b_2 + \ldots \ + b_{k-1}, b'_1 + b'_2 + \ldots + b'_{k-1})
\end{equation}
are behind $(\rho,\rho')$, hence similarly as before all positions by now have been below $(\rho,\rho')$,
by \cref{lem:preserve-above} applied to the vector~\eqref{eq:vectorsum} above.
This in particular means that \R's counter remains non-negative.
However, as by assumption all observed cycles come from a final segment in her \slgame\ strategy,
the vector~\eqref{eq:vectorsum} cannot be positive for any $k$. Thus, every rollback strictly
decreases \V's counter value.
We conclude that after sufficiently many rollbacks, \V's counter will reach $0$
and the game will end in a position immediately winning for \R.

\medskip
\case{$d > 1$} 
By assumption, \R\ has a strategy with segment depth $d$ to win the \slgame.
As before, we prescribe a strategy for her in the \sgame\ that
will re-use her \slgame\ strategy using rollbacks.

\R\ plays according to the initial segment of this strategy, that allows her to win or at least
guarantee that the effect 
of the first observed lasso's cycle is less steep than $(\rho,\rho')$.
After some rollbacks, the counter values will be of the form:
\begin{align} \label{eq:countervalues}
\begin{aligned}
  n&=\;m\; + a \; + (b_1  + \ldots + b_{l})  + (c_1  + \ldots + c_{k})\quad\text{and}\\
  n'&=\;m' + a' + (b'_1 + \ldots + b'_{l}) + (c'_1 + \ldots + c'_{k}),
 \end{aligned}
\end{align}
where the absolute values of $a$ and $a'$ are at most $\Cacyc$,
the vectors $(c_i,c'_i)$ are behind $(\rho,\rho')$ and positive,
and the vectors $(b_i,b'_i)$ are behind $(\rho,\rho')$ and non-positive.
We apply \cref{lem:preserve-above} and obtain
that all the positions so far have been below $(\rho,\rho')$.

In general \R\ has no power to choose whether the effect of the cycle at the
next rollback is positive or not.
However, if from some point on all effects are non-positive then \V's counter
eventually drops below $0$ and \R\ wins.
Thus w.l.o.g.\ we focus on positions in the \sgame\ immediately after a rollback of a
cycle with positive effect.
Using the notation from~\eqref{eq:countervalues}, suppose $(c_k,c'_k)$ 
is the effect of the last rolled back cycle.
In order to apply the induction assumption we need the following claim. The intuition is that
after sufficiently many rollbacks the vector $(n, n')$ will fall arbitrarily close to being $\Cacyc$-below some
vector $(c_i, c'_i)$.
Recall the relation of subsumption between positive vectors introduced in~\cref{def:vector-equivalence} 
(page~\pageref{def:vector-equivalence}).
\begin{claim}
  After sufficiently many rollbacks the vector $(n, n')$ of counter values
  in the \sgame\ is $\Cacyc$-below some vector $(\gamma,\gamma')$
  which subsumes
  the positive effect $(c_k,c'_k)$ of the last rolled back cycle.
\end{claim}
\begin{proof}
%
Simple geometric reasoning.
Let $(\rho_0,\rho_0')$ be the current slope of the phase in the slope game
and let
$
(\rho_1,\rho_1')\moresteep (\rho_2,\rho_2') \moresteep \dots\moresteep (\rho_k,\rho_k')
$
be the possible outcomes of the phase if \R\ plays according to the assumed
strategy.
Since the strategy is winning in the slope game, $(\rho_0,\rho_0')$ is steeper than all of them:
$(\rho_0,\rho_0')\moresteep  (\rho_1,\rho_1')$.

As mentioned in \cref{rem:vector-equivalence-epsilon},
for every $(\rho_{i},\rho_{i}')$ there exists a value $\epsilon_i>0$
such that $(\rho_{i}-\epsilon_i,\rho_{i}'+\epsilon_i)$ subsumes it.
Since $(\rho_{1}-\epsilon_1,\rho_{1}'+\epsilon_1) \moresteep (\rho_{1},\rho_{1}')$,
after a finite number of rollbacks the pair of counter values
in the \sgame\ must describe a positive vector that is
$\Cacyc$-below $(\gamma, \gamma') = (\rho_{1}-\epsilon_1,\rho_{1}'+\epsilon_1)$.
Since the effects of all possible outcomes of the phase are behind this vector,
\cref{lem:preserve-above} implies that from now on,
the counter-values after a rollback are $\Cacyc$-below $(\gamma, \gamma')$.
Now we consider two cases. If eventually a cycle with effect
$(\rho_1,\rho_1')$ is rolled back, the claim holds
since  $(\gamma, \gamma')$ subsumes it.
Otherwise, no cycle with effect $(\rho_1,\rho_1')$ is ever rolled back again.
In this case the whole above argument can be repeated for the next positive vector $(\rho_2,\rho_2')$, and so on.
An induction on the number $k$ of possible outcomes then shows the claim.
%
%
%
%
\end{proof}

Let $( q  n,  q'  n')$ be a position of the \sgame\ satisfying the claim.
We know that \R\ has a winning strategy in the \slgame\ from $(( q,  q'),
(c_k,c'_k))$, of segment depth at most $d-1$.
Because $(\gamma,\gamma')$ subsumes
$(c_k,c'_k)$, 
we apply \cref{lem:constant-winner} (page~\pageref{lem:constant-winner}) to know 
that the same strategy is winning in the
\slgame\ from $((q,q'), (\gamma, \gamma'))$.
By the induction assumption we conclude that \R\ wins the \sgame\
from $(qn,q'n')$, which completes the proof of
\cref{lem:ssim:spoiler-transfer}.\qedhere
\end{proof}

\begin{lemma}\label{lem:ssim:duplicator-transfer}
    If \V\ wins the \slgame\ from a
    position $((p,p'), (\rho, \rho'))$ 
    then \V\ wins the \sgame\ from every
    position $(p m, p' m')$ which is $\Cacyc$-above $(\rho,\rho')$.
\end{lemma}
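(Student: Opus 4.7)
The plan is to mirror the proof of Lemma~\ref{lem:ssim:spoiler-transfer}, proceeding by induction on the segment depth $d$ of \V's winning strategy in the \slgame. In the \sgame\ \V\ follows her \slgame\ strategy while tracking a corresponding \slgame\ position $(\pi,(\rho,\rho'))$; whenever $\pi$ becomes a lasso she \emph{rolls back} its cycle in the tracked position and continues from the acyclic prefix using the same segment strategy.

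Base case $d=1$: \V's strategy guarantees that every observed cycle has effect $(b,b')$ not behind $(\rho,\rho')$. By iterated application of Lemma~\ref{lem:preserve-above}(2) to the rolled-back cycle effects (and accounting for the current acyclic prefix, of length at most $\Cacyc$), the counter values stay $\Cacyc$-above $(\rho,\rho')$ throughout the play. In particular \V's counter remains strictly positive, so the play is infinite and \V\ wins.

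Inductive case $d>1$: Rollbacks of cycles with effect not behind $(\rho,\rho')$ (\V's immediate-win outcomes in the \slgame) are handled exactly as in the base case. The interesting situation is a positive cycle effect $(c,c')$ behind $(\rho,\rho')$, for which the \slgame\ continues with new slope $(c,c')$ and \V\ has a depth-$(d-1)$ winning strategy from $((q,q'),(c,c'))$. The crux is the following dual of the key geometric claim in Lemma~\ref{lem:ssim:spoiler-transfer}: \emph{after sufficiently many rollbacks, the counter values $(n,n')$ become $\Cacyc$-above a positive vector $(\gamma,\gamma')$ that is subsumed by the effect $(c_k,c'_k)$ of the last rolled-back positive cycle}. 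Order the possible positive continuation effects arising from \V's strategy by decreasing steepness, $(\rho_1,\rho_1')\moresteep\cdots\moresteep(\rho_k,\rho_k')$ (all behind $(\rho,\rho')$). By the symmetric version of Remark~\ref{rem:vector-equivalence-epsilon} (for every positive $(\rho,\rho')\in V$ there is $\epsilon>0$ such that $(\rho+\epsilon,\rho'-\epsilon)$ is subsumed by $(\rho,\rho')$), pick $(\gamma,\gamma')=(\rho_k+\epsilon_k,\rho_k'-\epsilon_k)$, slightly less steep than the least-steep positive outcome $(\rho_k,\rho_k')$. Every effect encountered under \V's strategy---those not behind $(\rho,\rho')$ and the positives $(\rho_i,\rho_i')$---is at least as steep as $(\gamma,\gamma')$, and hence not behind $(\gamma,\gamma')$; by Lemma~\ref{lem:preserve-above}(2), $\Cacyc$-aboveness of $(\gamma,\gamma')$ is preserved once established, and each rollback of $(\rho_k,\rho_k')$ strictly increases the distance above $(\gamma,\gamma')$. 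If $(\rho_k,\rho_k')$ is ever rolled back infinitely often, the counter values eventually reach the $\Cacyc$-above region of $(\gamma,\gamma')$, which is subsumed by $(\rho_k,\rho_k')=(c_k,c'_k)$. Otherwise we restrict attention to $(\rho_1,\rho_1'),\dots,(\rho_{k-1},\rho_{k-1}')$ and recurse, giving an induction on $k$; if no positive effect is ever rolled back, $\Cacyc$-aboveness of $(\rho,\rho')$ persists and \V\ wins an infinite play as in the base case.

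Once the counter values are $\Cacyc$-above such a $(\gamma,\gamma')$, the contrapositive of Lemma~\ref{lem:constant-winner} (valid by determinacy of \slgames, which end in finitely many phases) transfers \V's depth-$(d-1)$ winning strategy from $((q,q'),(c_k,c'_k))$ to $((q,q'),(\gamma,\gamma'))$, because $(c_k,c'_k)$ subsumes $(\gamma,\gamma')$. The induction hypothesis applied at depth $d-1$ then yields a winning \V\ strategy in the \sgame\ from the current position. The main obstacle is the dual geometric claim: unlike the Spoiler proof, which uses a slightly \emph{steeper} subsuming vector $(\rho_1-\epsilon,\rho_1'+\epsilon)$, here one must use a slightly \emph{less-steep} vector subsumed by the least-steep positive outcome, and carefully verify that every kind of effect encountered under \V's strategy lies on the correct (counter-clockwise) side of $(\gamma,\gamma')$ so that $\Cacyc$-aboveness is preserved.
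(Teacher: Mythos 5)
Your induction on segment depth does not dualize, and the step that breaks is precisely the one you flag as the crux. You assert that every effect available under \V's first-segment strategy is ``at least as steep as $(\gamma,\gamma')$, and hence not behind $(\gamma,\gamma')$'' for $(\gamma,\gamma')=(\rho_k+\epsilon_k,\rho'_k-\epsilon_k)$. But \V's immediate-win outcomes are only guaranteed to be \emph{not behind $(\rho,\rho')$}; they need not be positive, so ``at least as steep'' does not apply to them, and they can genuinely be behind $(\gamma,\gamma')$. Concretely, take $(\rho,\rho')=(1,1)$ with possible outcomes $(-2,-2)$ (not behind $(1,1)$, an immediate \V-win in the \slgame) and $(1,0)$ (positive and behind). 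Then $(-2,-2)$ is behind $(1,0)$, hence behind every less steep $(\gamma,\gamma')$, so \cref{lem:preserve-above}(2) does not preserve aboveness. Worse, the underlying strategy --- stay in the first segment and roll everything back until the geometric claim holds --- is simply losing here: \R\ alternates the two cycles, each pair of rollbacks contributes $(-2,-2)+(1,0)=(-1,-2)$, and \V's counter is driven to $0$ while the position drifts arbitrarily far \emph{below} both $(\rho,\rho')$ and $(\gamma,\gamma')$. (In the \slgame\ this alternation is impossible: after the $(1,0)$ cycle the slope changes to $(1,0)$, and \V's new-phase strategy excludes the $(-2,-2)$ cycle, which would there be a \R-win; your replay keeps consulting the stale first-phase strategy.) A secondary problem is that the ``symmetric version'' of \cref{rem:vector-equivalence-epsilon} you invoke is false in general: $-(c_k,c'_k)$ is behind $(c_k+\epsilon,c'_k-\epsilon)$ but not behind $(c_k,c'_k)$, so if $V$ contains such a vector no $\epsilon$ makes $(c_k,c'_k)$ subsume $(\gamma,\gamma')$.

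The paper's proof is structurally different and exploits exactly the asymmetry your dualization glosses over: being $\Cacyc$-above $(\rho,\rho')$ already implies being $\Cacyc$-above every \emph{less steep} slope, so \V\ never has to wait for the counter ratio to converge before descending to a lower segment (unlike \R, for whom ``below'' is not inherited downwards). \V\ therefore descends the whole segment tree immediately using \emph{forward jumps} --- whenever the current pair of control states occurs in a lower segment, he continues from there --- which makes the entire pre-leaf portion of the play acyclic and hence of length at most $\Cacyc$. Only inside the reached leaf segment does he roll back, and there all cycle effects are not behind the leaf slope $(\varphi,\varphi')\lesssteep(\rho,\rho')$, so a single application of \cref{lem:preserve-above}(2) keeps the play above $(\varphi,\varphi')$ forever. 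Your base case is essentially this leaf-segment argument; it is the inductive case that must be replaced by the forward-jump mechanism rather than repaired.
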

\begin{proof}


Building again on the concept of rollbacks, we prescribe a winning strategy for \V\ in the \sgame\ that
is based on the assumed winning strategy $\sigma$ in the \slgame.
Intuitively, \V's strategy in the \sgame\ consists of two parts: first he plays according to $\sigma$ until a leaf segment
is reached, and then continues to play according to this segment using
rollbacks.
Since $\dmax= (\qq+1)^2$ bounds the maximal number of segments in $\sigma$
and every path in a segment is no longer than $\Cacyc$,
we know that an offset of $\dmax\cdot \Cacyc$ is sufficient to 
ensure that some position in a leaf segment can be reached.

We can accelerate this strategy, allowing
\emph{forward jumps}:
\V\ starts to play according to the initial segment of $\sigma$
at height $d$.
At any given position in a segment at height $h$,
\V\ first checks if the same pair
of control states appears in a segment at a lower height $h'<h$.
If such a position exists, \V\ continues to play from there,
otherwise he plays as prescribed by the current position.
See \cref{fig:dup_tiles} below for an illustration.
\begin{figure}[h]
  \begin{center}
      \includegraphics[scale=1.0]{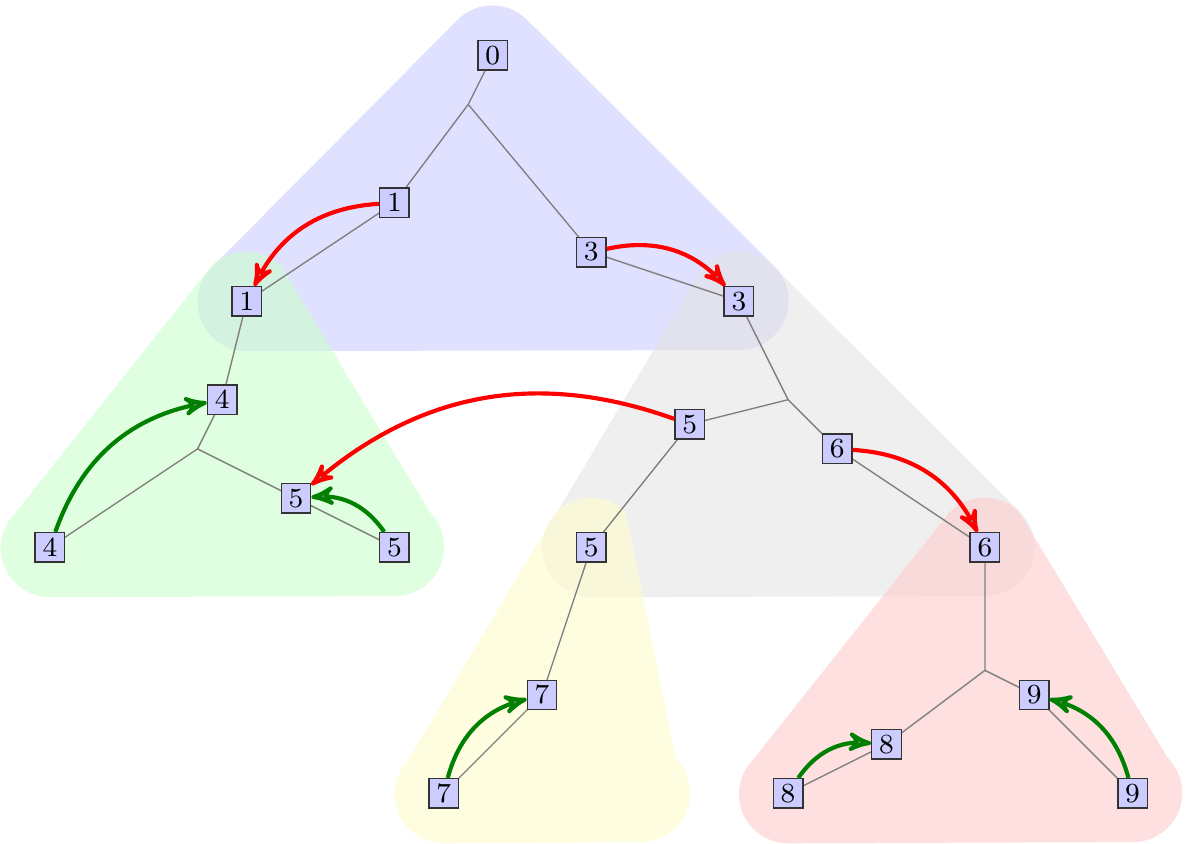}
  \end{center}
  \caption{A strategy for \V\ in the \slgame\ is turned into a strategy in the \sgame\
      by inserting forward jumps (red) and rollbacks (green).
      The green, yellow and red segments have height 1, the gray and blue segments
      have height 1 and 2 respectively.
      Nodes with the same labeling indicate positions with the same pair of control states.
  }
  \label{fig:dup_tiles}
\end{figure}

\noindent If \V\ plays as described above, he guarantees that no control states are
repeated unless he is already in a leaf segment $T$.
Moreover, as $T$ itself is a winning strategy
in the \slgame\ for some slope $(\varphi,\varphi')$,
he can enforce that the effects $(\seffect{\beta},\deffect{\beta})$
of the cycles $\beta$ of all observed lassos
are not behind $(\varphi,\varphi')$.
Let $\pi$ be an arbitrary play of the \sgame, in which \V\ plays as above using forward jumps, and then uses
rollbacks according to the segment $T$.
The effects of $\pi$ can be decomposed as
\begin{align}
    \label{eq:lem_transfer_d_effect}
\begin{aligned}
    \seffect{\pi} &= \seffect{\alpha}\; +\seffect{\beta_1\beta_2\dots \beta_k}\\
    \deffect{\pi} &= \deffect{\alpha} +\deffect{\beta_1\beta_2\dots \beta_k}
 \end{aligned}
\end{align}
where $\alpha$ is prefix and $\beta_i$ are simple cycles
with effect-vector not behind $(\varphi,\varphi')$.
Because \V\ uses forward jumps as soon as possible,
we know that no pair of states visited before entering $T$
can be contained in $T$, and thus $\alpha$ is acyclic.
The initial pair of counter values $(m,m')$ is $\Cacyc$-above $(\rho,\rho')$
and thus also $\Cacyc$-above $(\varphi,\varphi')$ because $(\varphi,\varphi')\lesssteep (\rho,\rho')$.
As $|\alpha|\le \Cacyc$, we know that $(m,m')+(\seffect{\alpha},\deffect{\alpha})$ is above $(\varphi,\varphi')$.
Moreover, as all the effects of all $\beta_i$ are not behind $(\varphi,\varphi')$,
their sum $\sum_{i\le k}(\seffect{\beta_i},\deffect{\beta_i})$ is also not behind $(\varphi,\varphi')$.
Using part~2 of \cref{lem:preserve-above}
we get that $(n,n') = (m,m') + (\seffect{\pi},\deffect{\pi})$ is still above $(\varphi,\varphi')$.
This in particular means that \V's counter value $n'$ remains non-negative.
Since $\pi$ was arbitrary, this shows that
\V\ can prevent his counter from ever decreasing below $0$
and thus enforce an infinite play and win.
\end{proof}
Assume a pair $\NN{N},\NN{N}'$ of \OCNs\ in normal form.
For two states $q\in Q$ and $q'\in Q'$
we will determine the ratio $(\rho,\rho')$ that, together with
$\Cacyc$, characterizes the belt of the plane $(q,q')$.
First observe the following monotonicity property of the \slgame.

\begin{lemma}
    \label{lem:slopegame-monotonicity}
  If \R\ wins the \slgame\ from a position $((q,q'), (\rho,\rho'))$ and
  $(\alpha,\alpha')\lesssteep(\rho,\rho')$
  then \R\ also wins the \slgame\
  from $((q,q'), (\alpha,\alpha'))$.
\end{lemma}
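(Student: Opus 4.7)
My plan is to prove this monotonicity by contradiction, leveraging the transfer \cref{lem:ssim:spoiler-transfer,lem:ssim:duplicator-transfer} together with the fact that \slgames\ are determined. Since every \slgame\ ends after at most $(\qq+1)^2$ phases by \cref{lem:slopegame-bounds}, and has no draws, a standard backward induction on the number of remaining phases yields that from each position exactly one of the two players has a winning strategy.

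Assume the claim fails: there is some $(\alpha,\alpha')\lesssteep(\rho,\rho')$ such that \R\ wins the \slgame\ at $((q,q'),(\rho,\rho'))$ but not at $((q,q'),(\alpha,\alpha'))$. By determinacy, \V\ then has a winning strategy at $((q,q'),(\alpha,\alpha'))$. Applying \cref{lem:ssim:spoiler-transfer} to the former tells us that \R\ wins the \sgame\ from every position $(qn,q'n')$ which is $\Cacyc$-below $(\rho,\rho')$, while \cref{lem:ssim:duplicator-transfer} applied to the latter gives that \V\ wins from every position $\Cacyc$-above $(\alpha,\alpha')$. Since a single \sgame\ position cannot be winning for both players, it suffices to exhibit one configuration $(qn,q'n')$ lying simultaneously in both regions, and this will close the argument.

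To produce such a configuration I would pick any positive direction $(\gamma,\gamma')$ strictly inside the open wedge between the two slopes, i.e.\ with $(\alpha,\alpha')\lesssteep(\gamma,\gamma')\lesssteep(\rho,\rho')$, and set $(n,n') = k\cdot(\gamma,\gamma')$ for a sufficiently large integer $k$. The defining inequalities $\rho'\gamma-\rho\gamma'>0$ and $\alpha\gamma'-\alpha'\gamma>0$ hold strictly by the choice of $(\gamma,\gamma')$, and scaling up $k$ opens the wedge wide enough that on each reference ray one can find a point at which the $\Cacyc$-below and $\Cacyc$-above requirements are simultaneously met. The main obstacle I anticipate is being precise about determinacy inside the formalism of the paper, since slope-game positions are not purely positional (the path $\pi$ grows within a phase); one should justify carefully that backward induction on the finite game tree still applies. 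Beyond that the geometric witness is routine. A purely combinatorial attempt in which \R\ merely replays her winning moves at slope $(\alpha,\alpha')$ breaks down exactly when an observed lasso's cycle effect falls inside the wedge between the two slopes, and patching this requires a delicate nested induction on segment depth which the transfer-plus-determinacy route avoids entirely.
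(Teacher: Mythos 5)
Your proof is correct and follows essentially the same route as the paper's: assume \R\ wins at $(\rho,\rho')$ while (by determinacy of the finite \slgame) \V\ wins at $(\alpha,\alpha')$, exhibit a point that is simultaneously $\Cacyc$-below $(\rho,\rho')$ and $\Cacyc$-above $(\alpha,\alpha')$, and derive a contradiction from \cref{lem:ssim:spoiler-transfer,lem:ssim:duplicator-transfer}. The only difference is that you spell out the determinacy argument and the geometric construction of the witness point, both of which the paper leaves implicit.
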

\begin{proof}
  Assume that \R\ wins from the position $((q,q'), (\rho,\rho'))$ while
  \V\ wins from $((q,q'), (\alpha,\alpha'))$, for some slope
  $(\alpha,\alpha')\lesssteep (\rho,\rho')$.
  This means that 
  a point $(n,n') \in \N\x\N$ exists which is both $\Cacyc$-above $(\alpha,\alpha')$
  and $\Cacyc$-below $(\rho,\rho')$.
  Applying both \cref{lem:ssim:spoiler-transfer,lem:ssim:duplicator-transfer}
  immediately yields a contradiction.
\end{proof}
Equivalently, if \V\ wins the \slgame\ from $((q,q'), (\rho,\rho'))$ and
$(\alpha,\alpha')$ is steeper than $(\rho,\rho')$ then he also wins
from $((q,q'), (\alpha,\alpha'))$. We conclude that for every pair $(q,q')$ of
states, there is a \emph{boundary slope}
$(\beta,\beta')\in\mathbb{R}\x\mathbb{R}$ such that

\begin{enumerate}
  \item \label{obs:steeper} \R\ wins the \slgame\ from $((q,q'), (\alpha,\alpha'))$
      for every $(\alpha,\alpha')\lesssteep(\beta,\beta')$;
  \item \label{obs:less-steep} \V\ wins the \slgame\ from $((q,q'), (\alpha,\alpha'))$
      for every $(\alpha,\alpha') \moresteep (\beta,\beta')$.
\end{enumerate}

Note that we claim nothing about the winner from the position $((q,q'), (\beta,\beta'))$ itself.
Applying \cref{lem:ssim:spoiler-transfer,lem:ssim:duplicator-transfer} we see that this boundary slope
$(\beta,\beta')$ satisfies the claims 1 and 2 of \cref{thm:belt-theorem-bounds}.
Indeed, consider a pair $(n,n')\in\N\x\N$ of counter values.
If $(n,n')$ is $\Cacyc$-below $(\beta,\beta')$, then there is certainly a
vector $(\alpha,\alpha')$ less steep than $(\beta,\beta')$ such that $(n,n')$ is $\Cacyc$-below $(\alpha,\alpha')$.
By point~\ref{obs:steeper} above, \R\ wins the \slgame\ from
$((q,q'), (\alpha,\alpha'))$. By \cref{lem:ssim:spoiler-transfer}, \R\ wins the
\sgame\ from $(qn,q'n')$.
Analogously, one can use point~\ref{obs:less-steep} above together with
\cref{lem:ssim:duplicator-transfer}
to show the second condition of \cref{thm:belt-theorem-bounds}.
This concludes the proof of the Belt Theorem.\qed

Recall the equivalence of positive vectors introduced in~\cref{def:vector-equivalence} 
(page~\pageref{def:vector-equivalence}), based on the set $V$ of ratios of simple cycles.  
Two vectors are equivalent if the same vectors from $V$ are behind both of them.
\Cref{lem:constant-winner} states that the outcome of a \slgame\ from
a fixed pair of states is the same for equivalent initial slopes.

By \cref{lem:constant-winner},
a boundary slope $(\beta,\beta')$ as used in the proof above must 
correspond to a slope contained in $V^- = V \cup -V$.
Indeed, otherwise $(\beta,\beta')$ must be between
two vectors from $V^-$   
and thus there are two equivalent vectors $(\gamma, \gamma')$ and $(\alpha,\alpha')$ satisfying
$
(\gamma,\gamma')
\lesssteep
(\beta,\beta')
\lesssteep
(\alpha,\alpha')
$.
By \cref{lem:constant-winner}, the outcome of a \slgame\ for
$(\gamma,\gamma')$ or $(\alpha,\alpha')$ is the same, contradicting
that $(\beta,\beta')$ is a boundary.

We conclude that the slope $(\beta,\beta')$ of any belt must be the effect
of a simple cycle of the product graph. Such paths are no longer than $\Cacyc$ and
because along a path of length $\Cacyc$ the counter values cannot change by more than
$\Cacyc$, we get that $\beta,\beta'\le \Cacyc$ as well.

\subsection{Locality}\label{sec:locality}
\TextSSIM\ preorder enjoys a certain
locality property due to the simulation condition.
Intuitively, the outcomes of all possible successor positions after one round of the \sgame\
determine the outcome of the game.
For \OCNs,
this can be stated as a precise geometric property.
Whether or not one process simulates another 
is completely determined by their control states and the coloring of its
surrounding pairs.

\newcommand{\NH}[2]{\mathit{NH}_{#1}^{(#2)}}
\begin{definition}
    \label{def:neightbourhood}
    Let $R\subseteq Q\x\N\x Q'\x\N$ be some relation on the configurations
    of two \OCN\ with sets of states $Q$ and $Q'$ respectively.
%
%
    The $R$-neighborhood of $(m,m')\in\N^2$
    is the function $\NH{R}{m,m'}:Q\x Q'\x\{-1,0,1\}\x\{-1,0,1\}\to\{0,1,\bot\}$ with

    \begin{equation}
        \NH{R}{m,m'}(q,q',l,l')  =
    \begin{cases}
        1, &\mbox{if }  (qm+l,q'm'+l')\in R\\
        0, &\mbox{if }  (qm+l,q'm'+l')\in (Q\x \N\x Q'\x \N)\setminus R\\
        \bot, &\mbox{if }  (qm+l,q'm'+l')\notin (Q\x \N\x Q'\x \N)\\
    \end{cases}
    \end{equation}
\end{definition}

\noindent The $R$-neighborhood of $(m,m')$ determines the coloring of $R$
on all points surrounding $(m,m')$.
Observe that there are at most $3^{|Q\x Q'|\cdot 3\cdot3}$ different
neighborhoods.
The $\bot$-values ensure that if two points $(m,m')$ and $(n,n')$ in $\N^2$
have the same neighborhood, then they have the same relative position
to the axes, i.e., $m=0\iff n=0$ and $m'=0\iff n'=0$.

We can now precisely state what we mean with the \emph{locality}
of \textSSIM\ on \OCA.
\begin{lemma}[Locality]
    \label{lem:ssim:locality}
    Consider a pair $(p,p')\in (Q\x Q')$ of states and naturals
    $m,m',n,n'\in \N$.
    If the $\SIMSYMBOL$-neighborhoods of $(m,m')$ and $(n,n')$
    agree on every $(q,q',l,l')\neq(p,p',0,0)$,
    then they also agree on $(p,p',0,0)$,
    i.e., $pm\ssim p'm' \iff pn\ssim p'n'$.
\end{lemma}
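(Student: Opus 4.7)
The plan is to reduce the claim to a direct application of the game (equivalently, closure) characterization of $\ssim$. By symmetry in the roles of $(m,m')$ and $(n,n')$, it suffices to show that $pm \ssim p'm'$ implies $pn \ssim p'n'$. First I would extract from the $\bot$-agreement in the neighborhoods the axis condition $m+l \ge 0 \iff n+l \ge 0$ for each $l \in \{-1,0,1\}$ (and similarly on the primed side). Since the transitions of $\Net{N}$ and $\Net{N}'$ do not depend on the current counter value, this means that exactly the same transitions are \emph{enabled} at $pm$ and at $pn$, and likewise at $p'm'$ and $p'n'$.

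Next I would introduce the relation
\[
R \;=\; \ssim \;\cup\; \{(pn,\, p'n')\}
\]
and verify that $R$ is itself a strong simulation; by maximality of $\ssim$ this forces $(pn,p'n') \in \;\ssim$. Pairs already in $\ssim$ satisfy the condition by definition, so only the added pair has to be checked. Given any step $pn \step{a} q(n+l)$ coming from a transition $(p,a,l,q) \in \delta$, the enabling observation above ensures that the corresponding step $pm \step{a} q(m+l)$ is also available. Hence, using the assumption $pm \ssim p'm'$, there is a matching step $p'm' \step{a} q'(m'+l')$ via some transition $(p',a,l',q') \in \delta'$ with $q(m+l) \ssim q'(m'+l')$, and the same transition gives an enabled step $p'n' \step{a} q'(n'+l')$.

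It remains to verify that the resulting pair $(q(n+l),\, q'(n'+l'))$ lies in $R$. If $(q,q',l,l') \neq (p,p',0,0)$, then the $\ssim$-neighborhoods of $(m,m')$ and $(n,n')$ agree at $(q,q',l,l')$ by hypothesis, so $q(m+l) \ssim q'(m'+l')$ transfers to $q(n+l) \ssim q'(n'+l')$, placing the successor pair in $\ssim \subseteq R$. The delicate case, which is precisely the coordinate on which the hypothesis is silent, is $(q,q',l,l') = (p,p',0,0)$; but there the successor is $(pn,p'n')$ itself and so lies in $R$ by construction. The main obstacle I expect is this last case, which looks circular but is harmless: Duplicator may legitimately return to the pair whose membership in simulation is being proved, because the corresponding infinite play is winning for Duplicator, which is exactly what the closure construction $R = \ssim \cup \{(pn,p'n')\}$ captures formally. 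The reverse implication follows by the symmetric argument with $(m,m')$ and $(n,n')$ interchanged.
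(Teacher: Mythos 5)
Your proof is correct and follows essentially the same route as the paper: transfer Spoiler's moves and Duplicator's responses between $(m,m')$ and $(n,n')$ using the neighborhood agreement (including the $\bot$-entries for enabledness), and observe that the only successor not covered by the hypothesis is $(pn,p'n')$ itself, which is harmless. The paper phrases this via the game characterization (the self-returning response yields an infinite, hence winning, play for Duplicator), while you phrase it via the equivalent closure argument that $\ssim \cup \{(pn,p'n')\}$ is a simulation; this is a presentational difference only.
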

\begin{proof}
Suppose that \V\ wins the \sgame\ from $(pm, p'm')$.
For every move $pm\step{a}qm+l$ in the game from $(pm,p'm')$, \V\ has a response
$p'm'\step{a}q'm'+l'$ such that $qm+l\ssim q'm'+l'$.
Due to the assumption that $\SIMSYMBOL$-neighborhoods of $(m,m')$ and $(n,n')$
agree on every $(q,q',l,l')\neq(p,p',0,0)$, we learn that
for every move $pn\step{a}qn+l$ in the game from $(pn,p'n')$, \V\ has a response
$p'n'\step{a}q'n'+l'$ such that either $qn+l\ssim q'n'+l'$, or $(q,q',l,l')=(p,p',0,0)$.
This proves that \V\ wins the \sgame\ from $(pn, p'n')$, as required.
%
%
%
\end{proof}

Since the simulation condition for a pair of processes depends only on their
neighborhood, we can
locally verify that some finite coloring is not self-contradictory.
Moreover, if a relation on the configurations of two \OCN\ 
is not a simulation, then this is witnessed
locally by some inconsistent neighborhood.

\begin{lemma}
    \label{lem:ssim:locality:simcondition}
    A relation $R\subseteq (Q\x\N\x Q'\x\N)$
    is a simulation if
    for every $(pm,p'm')\in R$ there exists $(n,n')\in\N^2$
    with $\NH{R}{m,m'}\;=\; \NH{\SIMSYMBOL}{n,n'}$.
\end{lemma}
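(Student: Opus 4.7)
The plan is to verify the simulation condition of \cref{def:strongsim} for $R$ directly. Fix an arbitrary pair $(pm, p'm') \in R$ and an arbitrary step $pm \step{a} qm+l$ (for some $q \in Q$ and $l \in \{-1,0,1\}$). My goal is to exhibit a matching step $p'm' \step{a} q'm'+l'$ with $(qm+l, q'm'+l') \in R$.

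First, I would use the hypothesis to obtain a witness $(n,n') \in \N^2$ with $\NH{R}{m,m'} = \NH{\SIMSYMBOL}{n,n'}$. Since $(pm, p'm') \in R$ we have $\NH{R}{m,m'}(p,p',0,0) = 1$, so by the equality of neighborhoods $\NH{\SIMSYMBOL}{n,n'}(p,p',0,0) = 1$, which means $pn \ssim p'n'$.

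The key technical ingredient is to transfer enabled transitions between the two base points $(m,m')$ and $(n,n')$. Here I would use the observation noted right after \cref{def:neightbourhood}: the $\bot$-entries force $m = 0 \iff n = 0$ and $m' = 0 \iff n' = 0$. Consequently, any transition in $\delta$ (resp.\ $\delta'$) enabled at $pm$ (resp.\ $p'm'$) is enabled at $pn$ (resp.\ $p'n'$) and vice versa, since a step is blocked exactly when the source counter is $0$ and the effect is $-1$. In particular, the step $pm \step{a} qm+l$ lifts to a step $pn \step{a} qn+l$.

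Finally, applying $pn \ssim p'n'$ to this step yields a response $p'n' \step{a} q'n'+l'$ with $qn+l \ssim q'n'+l'$; the latter says $\NH{\SIMSYMBOL}{n,n'}(q,q',l,l') = 1$, so by the equality of neighborhoods $\NH{R}{m,m'}(q,q',l,l') = 1$, i.e., $(qm+l,\, q'm'+l') \in R$. The enabledness transfer in the reverse direction then guarantees that $p'm' \step{a} q'm'+l'$ is a genuine step in $\net'$, completing the verification. I do not expect any real obstacle: the whole argument is a careful bookkeeping of neighborhood entries, with the only subtle point being the use of the $\bot$-convention to ensure that the enabledness of a transition is preserved when passing between $(m,m')$ and $(n,n')$.
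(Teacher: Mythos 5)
Your proof is correct and follows essentially the same route as the paper's: transfer the pair $(pm,p'm')$ to the witness point $(n,n')$ via the neighborhood equality, invoke $pn\ssim p'n'$ to obtain a response, and read off membership of the successor pair in $R$ from the matching neighborhood entry. The only difference is that you make explicit the enabledness transfer via the $\bot$-convention (that $m=0\iff n=0$ and $m'=0\iff n'=0$), a detail the paper's proof uses implicitly; this is a welcome clarification rather than a deviation.
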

\MORE{Not an iff here: consider the systems without any steps.
    Then any relation is a simulation, also the $R=\;\ssim\setminus \{(p,5,p',5)\}$
    that artificially removes a single point. Still,
    $\NH{R}{5,5}\neq\NH{\SIMSYMBOL}{n,n'}$ for every $n,n'\in\N$.}
\begin{proof}
  The condition implies that $R$ satisfies the
  simulation condition:
  Pick any $(pm,p'm')\in R$ and let $n,n'\in\N$ such that
  $\NH{R}{m,m'}\;=\; \NH{\SIMSYMBOL}{n,n'}$
  and consider a \R-move $pm \step{a} q(m+d)$.
  We have $\NH{\ssim}{n,n'}(p,p',0,0) = \NH{R}{m,m'}(p,p',0,0) = 1$
  and therefore that $pn\ssim p'n'$.
  So there is a valid \V's response $pn'\step{a} q'(n'+d') \SIMBYSYMBOL q(n+d)$.
  But then also $q(m+d)~R~q'(m'+d')$ as
  $\NH{R}{m,m'}(q,q',d,d') = \NH{\ssim}{n,n'}(q,q',d,d') = 1$.
\end{proof}

\subsection{Characterizing Strong Simulation Preorder}\label{sec:max strong sim}
We follow here the approach presented in \cite{JKM2000} to turn the
Belt Theorem into a working algorithm.
The idea is to guess and verify a description of $\ssim$ in terms of belts and local
colorings on-the-fly. Due to the polynomial bounds on the width of belts
stated in \cref{thm:belt-theorem-bounds}, such a procedure requires polynomial
space.

Consider two \OCN\ $\Net{N}$ and $\Net{N}'$
in normal form, with sets of
control states $Q$ and $Q'$, respectively and let $\Cacyc\le |Q\x Q'|$
be the maximal length of an acyclic path in their product plus $1$,
as used in \cref{thm:belt-theorem-bounds}.

For
convenience, we will write
$(pm,p'm') + k\cdot (n,n')$ to mean $(p(m+k\cdot n),p'(m'+k\cdot n'))$
for any $(p,p')\in(Q\x Q')$ and $m,m',n,n',k\in\N$.
Similarly,
for a relation $R\subseteq(Q\x\N\x Q'\x\N)$ we write
$R+ k\cdot(n,n') = \{(pm,p'm')+k\cdot (n,n') \mid (pm,p'm')\in R\}$.

\begin{definition}
    The \emph{slope} of a pair $(p,p')\in Q \x Q'$ of control states,
    is the positive vector $\slope(p,p')=(\rho,\rho')$
    satisfying the claim of the Belt Theorem.
    The \emph{belt} with slope $(\rho,\rho')$
    is the set of points $(n,n')\in\N^2$
    which are neither $\Cacyc$-above nor $\Cacyc$-below $(\rho,\rho')$.
   %
    The \emph{extended belt} is the relation 
    $\belt(p,p')\subseteq (Q\x\N\x Q'\x\N)$
    that contains $(qn,q'n')$ iff
    $(n,n')$ is in the belt with slope $\slope(p,p')$.
\end{definition}

Recall that simulation preorder on the configurations with
control states $p$ and $p'$ is trivially outside of $\belt(p,p')$:
it contains all pairs $(p m, p' m')$ such that\ $(m,m')$ is $\Cacyc$-above $\slope(p,p')$,
and contains no pairs $(p m, p' m')$ where $(m,m')$ is $\Cacyc$-below $\slope(p,p')$.
We show (\cref{lem:ssim:uperiodic}) 
that the non-trivial part
    \begin{align*}
        \SIM{}{p,p'}\;=\;\ssim\:\cap\;\belt(p,p')
    \end{align*}
is repetitive in the sense defined in \cref{def:ssim:uperiodic} below.
Essentially, one can cut through the belt at two levels $n_1,n_2\in\N$
such that the coloring of $\belt(p,p')$ above level $n_2$
repeats the (finite) coloring between $n_1$ and $n_2$ indefinitely.
This implies that $\SIM{}{p,p'}$ and hence also $\ssim$ are semilinear,
and each $\SIM{}{p,p'}$ can be represented by the finite
coloring up to level $n_2$.
This is already enough to decide strong \textSSIM,
and to compute a representation of the maximal \textSSIM,
since one can
enumerate candidate relations $R\subseteq(Q\x\N\x Q'\x\N)$
that are represented in this way and check that they
satisfy the simulation condition.

Due to the polynomial bounds on the width and the slopes of belts
provided by \cref{thm:belt-theorem-bounds},
we can further bound the cut-levels $n_1,n_2$ and thus the representation
of periodic candidate relations, exponentially in the size of the input nets.
The crucial idea for deciding strong simulation in polynomial space
is that
one can stepwise guess and locally verify the coloring
of a (extended) belt by shifting a polynomially bounded window
along the belt.

By \cref{thm:belt-theorem-bounds}, we know that
coefficients $\rho$ and $\rho'$ of any slope $\slope(p,p')=(\rho,\rho')$
are bounded by $\Cacyc$.
Consequently, there are at most $\Cacyc^2$ different slopes
and belts and
apart from vertical and horizontal slopes
(those with $\rho=0$ or $\rho'=0$ respectively),
the maximally and minimally steep (cf.~\fullref{def:steeper}) possible slopes
are $(1,\Cacyc)$ and $(\Cacyc,1)$ respectively.
We can therefore find polynomially bounded $l_0,l_0'\in\N$
such that belts are pairwise disjoint outside
the \emph{initial rectangle} $\is$ between corners $(0,0)$ and $(l_0,l_0')$.
For technical convenience we assume w.l.o.g.~that only horizontal belts
(those with $\slope(p,p') = (n,0)$ for some $n$)
cross the vertical border of $\is$.
This can always be achieved by extending $\is$, if necessary.

By our definition of belts,
shifting a point along the vector $\slope(p,p')$ 
preserves membership in $\belt(p,p')$,
i.e., for every $(qn,q'n')\in (Q\x\N\x Q'\x\N)$,
\begin{equation}
    (qn,q'n') \in \belt(p,p') \iff (qn,q'n')+k\cdot\slope(p,p') \in \belt(p,p').
\end{equation}
This is why we restrict our focus to multiples of vectors $\slope(p,p')$.

\begin{definition}
    \label{def:ssim:uperiodic}
Fix a pair $(p,p') \in Q\times Q'$ and $j,k \in \N$
and let $l_0,l'_0\in\N$ define the initial rectangle $\is$ discussed above.
We write $\squa{p}{p'}{j}$ for the rectangle between corners $(0,0)$ and
$(l_0,l_0') +  j\cdot \slope(p,p')$.
A subset $R \subseteq \belt(p,p')$  is called \emph{$(j,k)$-ultimately-periodic} if
for all $(n,n') \in \N^2 \setminus \squa{p}{p'}{j}$ and every $(q,q')\in (Q\x Q')$,
\begin{align}
\begin{aligned}
    (qn,q'n') \in R \iff (qn,q'n')+k\cdot\slope(p,p') \in R.
\end{aligned}
\end{align}
\end{definition}

\noindent One can represent
a $(j,k)$-ultimately-periodic set $R$
by the two numbers
$n_1'=l_0'+j\cdot\rho'$
and
$n_2'=n_1'+k\cdot\rho'$
and two finite sets
\begin{equation}
    \label{eq:ssim:up-representation}
    \{(qn,q'n')\in R\ |\ n' < n_1'\}
    \qquad\text{and}\qquad
    \{(qn,q'n')\in R\ |\ n_1'\le n'<n_2'\}.
\end{equation}
This in particular means that $R$ is semilinear,
where the left subset above forms the bases,
and the only period is always $\slope(p, p')$.  
We continue to show that the non-trivial part $\SIM{}{p,p'}$
of the coloring of \textSSIM\ is such a $(j,k)$-ultimately periodic set
for every pair $(p,p')$ of states.

\begin{figure}[h]
    \centering
  \includegraphics{./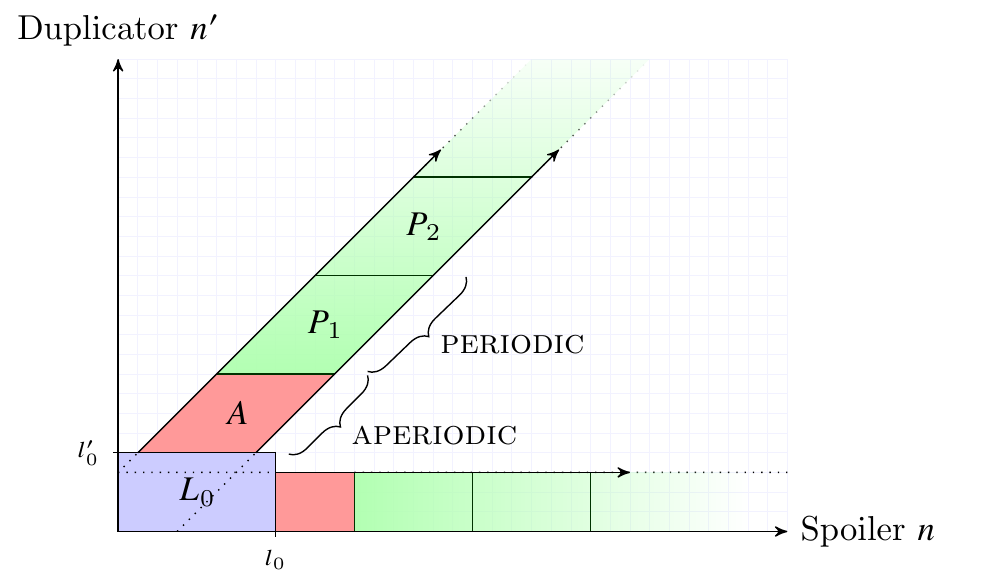}
\caption{The initial rectangle $\is$ (blue) and two belts.
    Outside $\is$, the coloring of a belt
    consists of some exponentially bounded block (red), and another exponentially bounded non-trivial block (green)
    which repeats ad infinitum along the rest of the belt.
}\label{fig:ssim:belt-decomp}
\end{figure}

\begin{lemma} \label{lem:ssim:uperiodic}
For every pair $(p, p') \in Q\times Q'$, the set
$\SIM{}{p,p'}$
is $(j,k)$-ultimately periodic for some $j,k \in \N$ exponentially bounded in
$\Cacyc$.
\end{lemma}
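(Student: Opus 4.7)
The plan is to combine the polynomial bounds from the Belt Theorem (\cref{thm:belt-theorem-bounds}) with the locality of simulation preorder (\cref{lem:ssim:locality,lem:ssim:locality:simcondition}) via a pigeonhole argument over colorings along the slope direction.

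First, I decompose the extended belt $\belt(p,p')$ into \emph{slabs} $S_0, S_1, S_2, \ldots$ along the direction of the slope $\slope(p,p') = (\rho,\rho')$: each slab $S_j$ consists of the lattice points of $\belt(p,p')$ obtained by translating a fundamental parallelogram by $j \cdot (\rho,\rho')$. By \cref{thm:belt-theorem-bounds}, the belt has width $O(\Cacyc)$ and $\rho, \rho' \le \Cacyc$, so each slab contains at most $O(\Cacyc^2)$ lattice points. A \emph{slab coloring} assigns, for every $(n,n') \in S_j$ and every $(q,q') \in Q \times Q'$, a bit indicating whether $(qn, q'n') \in \ssim$; there are at most $2^{O(\Cacyc^2 \cdot |Q \times Q'|)}$ possible slab colorings, an exponential bound in $\Cacyc$.

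Second, I apply pigeonhole on windows of consecutive slabs. Because the simulation condition is local (\cref{lem:ssim:locality:simcondition}) and \OCN\ steps change counter values by at most one, the $\ssim$-coloring at any configuration depends only on the colorings at configurations in a radius-one neighborhood; this radius is absorbed by a constant-width window (say $w=3$) of consecutive slab colorings. Sliding this window along the belt, the sequence of $w$-windows still takes only exponentially many distinct values, so there exist indices $j_1 < j_2$, both exponentially bounded in $\Cacyc$, whose $w$-windows coincide. Set $j = j_1$ and $k = j_2 - j_1$.

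Third, I derive $(j,k)$-ultimate periodicity. Define the relation $R$ to agree with $\ssim$ on slabs up to $S_{j_2}$ and on the trivial regions outside the belt (where the Belt Theorem dictates a uniform coloring), and extend $R$ by $k \cdot \slope(p,p')$-periodic translation beyond $S_{j_2}$. Since the $w$-windows at $j_1$ and $j_2$ coincide, every local neighborhood in $R$ either matches a $\ssim$-neighborhood directly or is a translate of one; by \cref{lem:ssim:locality:simcondition}, $R$ is a simulation, so $R \subseteq \ssim$ by maximality of strong simulation. For the reverse inclusion, I construct a dual relation $R'$ that shifts $\ssim$ on slabs $\ge S_{j_2+1}$ backwards to $\ge S_{j_1+1}$ while keeping $\ssim$ on slabs $\ge S_{j_2+1}$; another application of \cref{lem:ssim:locality:simcondition} shows $R'$ is a simulation, and together with $R \subseteq \ssim$ this pins $\ssim$ down to the periodic extension.

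The main obstacle is the reverse inclusion. Because $\ssim$ is the \emph{greatest} fixed point of the simulation condition, matching slab windows at $j_1$ and $j_2$ do not by themselves force forward periodicity, and a naive induction is circular. The dual $R'$ construction bypasses this, but the bookkeeping at the seams (checking that every local neighborhood arising under the periodic translation genuinely matches some neighborhood of $\ssim$) must be done carefully; taking the window width $w$ strictly greater than the maximal \OCN\ step offset and using the fixed trivial boundary coloring outside the belt (provided by \cref{thm:belt-theorem-bounds}) is what makes the seam argument go through. The resulting $j$ and $k$ inherit the exponential dependence on $\Cacyc$ from the slab-coloring count, as claimed.
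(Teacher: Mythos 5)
Your proposal is correct and follows essentially the same route as the paper: the paper pigeonholes on neighborhood-annotated cross-sections of the belt (its bound $2^{\qq\cdot W\cdot 3}$ is exactly your width-3 window), proves the forward inclusion $A\cup P_1^*\subseteq\;\ssim$ via \cref{lem:ssim:locality:simcondition}, and handles the reverse inclusion by the same backward-shift-plus-maximality mechanism you describe (phrased there as shifting a hypothetically larger segment $P_i$ back to the window $[n_1',n_2')$ and contradicting the definition of $P_1$ as the restriction of $\ssim$ to that window). The only cosmetic difference is your slabs along the slope direction versus the paper's horizontal cross-sections at levels $n_i'$.
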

\begin{proof}


Fix states $p, p'$ and let $(\rho, \rho') = \slope(p,p')$.
W.l.o.g.~suppose that $\slope(p,p')$ is positive and $\belt(p,p')$ therefore intersects
the horizontal border of $\is$ 
(if the belt is horizontal and intersects the vertical border of $\is$ the proof is analogous).

By a \emph{cross-section} at level $n'$ we mean the
set of all points in $\belt(p,p')$ on a horizontal line at that level,
i.e., $\{(qn,qn')\in\belt(p,p')\ |\ n\in\N\}$.
We say that two cross-sections $s_1$ and $s_2$ are \emph{equal} if
one of them is obtained by a shift of the other by a multiple of
$\slope(p,p')=(\rho, \rho')$
and moreover, the $\SIMSYMBOL$-neighborhoods of any two corresponding points
are the same.
Formally, we require that for some $k \in \N$,
\begin{enumerate}
  \item $s_2 = s_1 + k \cdot (\rho,\rho')$ \label{eq:equalcs}
  \item $\NH{\SIMSYMBOL}{m,m'}=\NH{\SIMSYMBOL}{n,n'}$
        for any $(qn,q'n')\in s_1$ and
        $(m,m')=(n,n')+k\cdot\slope(p,p')$.
\end{enumerate}
Notice that there are at most $2^{\qq\cdot W}$ pairwise different colorings for
any cross-section, where $\qq=|Q\x Q'|$ and $W$ is the maximal width of a belt.
By our definition of
neighborhoods, two cross-sections are equal
only if their coloring agrees
and the same is true for the
(pairs of)
cross-sections
directly above and below.
This means that in total, there are no more than
$2^{\qq\cdot W\cdot3}$
pairwise different cross-sections for a given belt.

We choose two equal cross-sections 
at levels $n'_1$ and $n'_2$ respectively,
such that
$n_1'=l_0'+j\cdot\rho'$
and 
$n_2'=n_1'+k\cdot\rho'$ for some $j,k\in\N$.
That is, we demand
that $l_0'<n_1'<n_2'$ and the respective offsets
are divisible by the vertical offset $\rho'$ of $\slope(p,p')$.
By our observation above it is safe to assume that
both $j$ and $k$ are bounded exponentially in $\Cacyc$.

Based on $n_1'$ and $n_2'$, we decompose $\SIM{}{p,p'}$
into finite segments.
To this end, first extend $n'_1$ and $n'_2$ to an infinite progression
$n'_1, n'_2, n'_3, \ldots$
where $n'_{i+1} = n'_i + k \cdot \rho'$ for $i \geq 1$.
Now let $A$ be the restriction of $\ssim_{p,p'}$ to the area below $n_1'$
and for any $i\ge 1$,
let $P_i$ be the restriction of $\ssim_{p,p'}$ to the area between $n_i'$ and
$n_{i+1}$ (see \cref{fig:ssim:belt-decomp}):
\begin{align*}
    A \ & = \  \{ (qn,qn') \in\;\SIM{}{p,p'} \ : \ n' < n'_1 \} &
    P_i \ & = \ \{ (qn,qn') \in\;\SIM{}{p,p'} \ : \ n'_i \leq n' < n'_{i+1} \}.
\end{align*}
We now show that
\begin{equation}
    \label{eq:ssim:up-A-P}
    \SIM{}{p,p'} \ = \ A \ \cup \ P_1^*, \quad \text{ where } P_1^* \ = \
    \bigcup_{i \in \N} (P_1 + i \cdot k \cdot (\rho,\rho')).
\end{equation}
That is, apart from the initial fragment $A$,
the coloring of $\SIM{}{p,p'}$
is actually an infinite repetition of a finite
coloring $P_1$ along the belt: $P_{i+1} = P_i + k\cdot\slope(p,p')$.
This implies the claim of the lemma, since
$A\cup P_1^*$ is clearly $(j,k)$-ultimately periodic.
The proof of \cref{eq:ssim:up-A-P} strongly relies on the locality of the simulation condition
(\fullref{lem:ssim:locality:simcondition}).

For the first inclusion ($A \ \cup \ P_1^* \subseteq \ \ssim_{p,p'}$)
we show  that the relation
\begin{equation}
    R \;=\;(\ssim\;\setminus \;\SIM{}{p,p'}) \ \cup (\ A \cup \ P_1^*)
\end{equation}
obtained from $\ssim$ by replacing $\ssim_{p,p'}$ with $A \ \cup \ P_1^*$,
is a simulation.
Recall that $n_1'$ and $n_2'$ were chosen sufficiently high (above the initial
rectangle $\is$) such that any two different belts are disjoint.
This means that the $R$-neighborhood of any point in $P_i$ for some $i>1$
is the same as the $R$-neighborhood and hence also the
$\SIMSYMBOL$-neighborhood of the corresponding point in $P_1$.
By \cref{lem:ssim:locality:simcondition}, this means that $R$ is a simulation
and since $\ssim$ is the largest simulation, the claimed inclusion follows.

It remains to show the other inclusion ($A \ \cup \ P_1^* \supseteq\ \SIM{}{p,p'}$).
Assume the contrary. We already know that $A \ \cup \ P_1^* \subseteq \ \SIM{}{p,p'}$,
so we must have $P_1+ i\cdot k\cdot\slope(p,p') \subsetneq P_{i}$ for some $i>1$.
Since $P_i\subseteq\;\SIM{}{p,p'}$ is part of the coloring of \textSSIM\
$\ssim$, it is clearly locally consistent.
This means if we replace $P_1$ with the coloring according to $P_i$,
we again derive a consistent coloring.
Formally, we let $P = P_i + i\cdot (-k)\cdot\slope(p,p')$
and replace $\SIM{}{p,p'}$ with $A\cup P^*$ in the coloring of $\ssim$.
Similar to the first case, the resulting relation
\begin{equation}
    (\ssim\;\setminus \;\SIM{}{p,p'}) \ \cup \ ( A \cup P^*)
\end{equation}
is a simulation due to the locality of the simulation condition.
This implies that $P_1\subsetneq P\subseteq\;\SIM{}{p,p'}$, which means
that there exists some point $(qn,q'n')\in\;\SIM{}{p,p'}\setminus\ P_1$ with 
$n_1'\le n' < n_2'$.
This contradicts the definition of $P_1$
as the set of points $(qn,q'n')$ in $\SIM{}{p,p'}$ with $n'_1\le n'<n'_2$.
\end{proof}

\Cref{lem:ssim:uperiodic} implies that the largest strong simulation $\ssim$
is not only semilinear, but also its nontrivial part $\bigcup_{p, p'} \SIM{}{p,p'}$ is the finite union of
$(j,k)$-ultimately periodic sets, for exponentially bounded $j,k$.
It therefore admits an \EXPSPACE\ representation that consists, 
for every pair of states $(p,p')$, of:

\begin{itemize}
\item a polynomially bounded vector $(\rho,\rho') = \slope(p,p')$
\item exponentially bounded natural numbers $n_1',n_2'\in\N$
\item two exponentially bounded relations: 
\begin{align*}
  \aper &\;=\;\{(qn,q'n')\in\;\SIM{}{p,p'} |\ n'\le n_1'\}\\
  \per  &\;=\;\{(qn,q'n')\in\;\SIM{}{p,p'} |\ n_1'\le n'<n_2'\}
\end{align*}
\end{itemize}

\noindent Assume w.l.o.g.~that in descriptions of the above form, the coefficients
$n_1'$ and $n_2'$
are the same for all pairs $(p,p')$ with the same
$\slope(p,p')$. This is a safe assumption as the least common multiples of the
respective values are still exponentially bounded.

The above characterization immediately leads to a na\"ive exponential-space
algorithm for checking strong simulation for pairs of \OCN s
in normal form:
Guess the description of a candidate relation $R$ for the simulation relation, verify
that it is a simulation and check if it contains the input pair of configurations.

Checking whether the input pair is in the (semilinear) relation $R$ is trivial.
To verify that the relation $R$ is a simulation, one needs to check
the simulation condition for every pair of configurations $(qn,q'n')$ in
$R$. But due to the particular periodic structure of the candidate relation
and the locality of \textSSIM\ (\fullref{lem:ssim:locality}),
it suffices to locally verify the finite initial and periodic parts
for every pair of control states.
%

\medskip
\noindent
\textbf{A \PSPACE\ procedure.}
The na\"ive algorithm outlined above may easily be turned into a \PSPACE\ algorithm by
a window shifting trick.
Instead of guessing the complete exponential-size description upfront, we start by
guessing the polynomially bounded relation inside $\is$ and verifying it locally.
Next, the procedure stepwise guesses parts of the relations $\aper$ and later $\per$,
inside a polynomially bounded rectangle window through the belt
and shifts this window along the belt, checking the simulation condition for all
contained points along the way. Since the simulation condition is local, everything
outside this window may be forgotten, save for the
first repetitive window that is used as a certificate for successfully having guessed
a consistent periodic set, once it repeats.
By \cref{lem:ssim:uperiodic},
this repetition needs to occur after an exponentially bounded number of shifts.
Therefore, polynomial space is sufficient to store a binary counter
that counts the number of shifts and allows to terminate unsuccessfully once the
limit is reached.

We summarize our findings as the theorem below.

\begin{theorem}\label{thm:ssim-pspace}
  Checking strong simulation preorder between two \OCNs\
  is in \PSPACE.
  Moreover, the maximal simulation relation is semilinear
  and can be represented in space exponential in the number of states
  of the input nets.
\end{theorem}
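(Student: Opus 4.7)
My plan is to turn the combinatorial characterization of $\ssim$ developed in \cref{lem:ssim:uperiodic} into a nondeterministic procedure that runs in polynomial space, and then appeal to Savitch's theorem (or symmetry of \PSPACE) to derive the bound. By \cref{lem:normal-form} I may assume the input nets $\Net{N},\Net{N}'$ are in normal form, and by \cref{thm:belt-theorem-bounds} every slope $\slope(p,p')$ has coordinates bounded by $\Cacyc \le |Q \x Q'|$, so the initial rectangle $\is$ and the polynomial belt widths are available.

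The algorithm operates by sweeping a polynomially-bounded window along each belt. More precisely, for every pair $(p,p') \in Q \x Q'$ it first guesses $\slope(p,p')$ (polynomially many bits) and then, proceeding cross-section by cross-section along the direction of the slope, guesses the $\SIMSYMBOL_{p,p'}$-coloring of the belt. Only a constant number of consecutive cross-sections needs to be kept in memory at any time, since by the locality statement (\cref{lem:ssim:locality:simcondition}) the simulation condition at a point $(qn,q'n')$ depends only on the neighborhood of $(n,n')$. Whenever the window is shifted, the oldest cross-section is verified against the simulation condition (using the guessed pieces of the neighboring belts and the already-verified coloring inside $\is$) and then forgotten. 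The input pair $(qn,q'n')$ is checked on the fly when the sweep passes over the corresponding cell.

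To certify that the guessed infinite coloring is actually $(j,k)$-ultimately periodic, the procedure at some point declares ``the period starts here'', stores the current window as a fingerprint in polynomial space, and then continues sweeping until it encounters a window with exactly the same coloring. By \cref{lem:ssim:uperiodic}, on the true relation $\ssim$ such a repetition must occur after at most an exponential number of shifts, so a binary counter of polynomial width suffices to abort any branch that fails to close the period in time. Once the period closes, the finitely verified stretch plus the identical-windows certificate forces, by the local simulation criterion, that the periodic extension of the coloring is a simulation, hence contained in $\ssim$; the opposite inclusion is the content of \cref{lem:ssim:uperiodic}.

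The main obstacle is ensuring that local verification in the window is \emph{sound} globally: when checking a point $(qn,q'n')$, the \R-move $qn \step{a} r(n+l)$ may require the \V-response to land in a different belt $(r,r'')$, so the algorithm must have committed at the right moment to a coloring of that belt in the relevant window. The remedy is to sweep all $|Q\x Q'|$ belts simultaneously along a common height parameter, keeping for each belt a constant number of adjacent cross-sections, and to resolve the three possible cases for a response target (inside $\is$, strictly $\Cacyc$-above some slope, or strictly $\Cacyc$-below some slope) using the polynomial bounds of the Belt Theorem rather than the guessed coloring. Summing up, the machine uses space polynomial in $|Q|+|Q'|+\log n + \log n'$ and accepts iff $qn \ssim q'n'$, which together with the exponential-space description of $\ssim$ extracted from a successful run yields both assertions of the theorem.
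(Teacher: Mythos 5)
Your proposal is correct and follows essentially the same route as the paper: establish polynomial bounds on belt slopes and widths via the Belt Theorem, use the $(j,k)$-ultimate periodicity of \cref{lem:ssim:uperiodic} together with the locality of the simulation condition, and then guess and verify the coloring by shifting a polynomially bounded window along the belts, storing one window as a periodicity certificate and a polynomial-width binary counter to bound the exponentially many shifts. The paper handles the cross-belt interference you worry about slightly more implicitly (by taking the initial rectangle $\is$ large enough that belts are disjoint outside it, so responses landing outside the current belt are in a trivially colored region), but this is the same idea as your simultaneous sweep.
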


\section{Weak Simulation}\label{sec:wsim}
%
We now turn to the problem of checking if \textWSIM\ holds between two
\OCN-processes.
This problem was shown to be decidable in \cite{HMT2013} and later \PSPACE-complete \cite{HLMT2013}.
We provide here a unified presentation of the
argument for its decidability and the
subsequent improvement to \PSPACE.

The main obstacle is that, with respect to weak steps, \V's system is
infinitely-branching.
This implies that non-simulation does not necessarily manifest itself locally,
i.e., the weak simulation condition is not local
in the sense discussed in \cref{sec:locality}.
Our approach is based on a generalization of simulation \emph{approximants},
which we will recall below.

\begin{definition}[Approximants]\label{def:normal_approximants}
    Take two labeled transition systems with sets of configurations $S$ and
    $S'$, respectively.
    Strong \emph{simulation approximants} $\SIM{}{\alpha}$ with respect to
    $S,S'$ are inductively defined for all ordinals $\alpha\in\Ord$:
    \begin{enumerate}
       \item $\SIM{}{0}\;=\;S\x S'$ is the full relation.
       \item $s \SIM{}{\alpha+1} s'$
           holds if for all $s\step{a}t$ there is a step $s'\step{a}t'$ such
           that $t\SIM{}{\alpha}t'$.
       \item For limits $\lambda$, let $\SIM{}{\lambda} \;=\; \bigcap_{\alpha<\lambda}\SIM{}{\alpha}$.
    \end{enumerate}
    Weak simulation approximants $\WSIM{}{\alpha}$ are defined as
    above, where we replace (2) by the weak simulation condition:
       $s \WSIM{}{\alpha+1} s'$
           iff for all $s\step{a}t$ there is a step $s'\wstep{a}t'$ with $t\WSIM{}{\alpha}t'$.
\end{definition}

One can show (see e.g.~\cite[Chapter 10.4]{Mil1989} for an argument for
bisimulation approximants)
that regardless of the given LTSs $S$ and $S'$ it holds that
\begin{equation}
\SIM{}{}\;=\bigcap_{\alpha\in\Ord}\SIM{}{\alpha}
\end{equation}
In particular this means that for fixed $S,S'$ there exists
some \emph{convergence} ordinal $\gamma$ with $\SIM{}{}\;=\;\SIM{}{\gamma}$.
Moreover, if $S'$ is a finitely branching LTS (each configuration has finitely many
successors), then convergence happens at most at the first limit ordinal.
In this case, if $s\not\SIM{}{} s'$ then already $s\not\SIM{}{k}s'$ at some 
finite level $k\in\N$.
It is this finite convergence property that fails in the case
of weak simulation for LTS defined by \textOCNs, as the example below
demonstrates.

\todo{PT: my changes of the weaksim intro end here}

\begin{example}
    \label{ex:weaksim:nonconvergence}
    Consider the simple process
{$\begin{tikzpicture}[baseline]%
\node[anchor=base] (A) {A};%
\draw[->] (A) edge[loop right] node {$a$} (A);
\end{tikzpicture}$},
    that can only
    loop on action $a$,
    and
    the \OCN\ depicted below.
    \begin{center}
      \includegraphics{./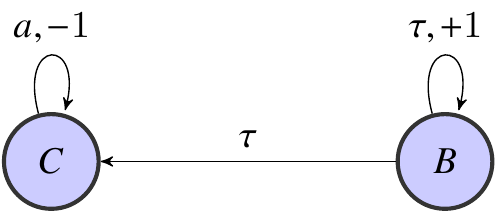}
    \end{center}
    We see that
    $A\WSIM{}{n}Cn$ and $A\notWSIM{}{n+1}Cn$
    hold for every $n\in\N$.
    Moreover, there is a weak step $B0\wstep{a}{}{}Cn$ for every $n\in\N$
    and therefore $A\WSIM{}{\omega}B0$.
    Still, it holds that $A\notWSIM{}{\omega+1}B0$
    because there is no
    weak $a$-step from $B0$ to a
    process $\alpha$ that satisfies $A\WSIM{}{\omega}\alpha$.
    It follows that
    $\WSIM{}{\omega}\;\neq\;\WSIM{}{\omega+1}$.
    We will later show (as \fullref{thm:wsim:approximants:omegasquare})
    that convergence of \textWSIM\ approximants on \OCN\
    can only be guaranteed at level $\omega^2$.
\end{example}
We resolve this problem in two steps.
First, the weak simulation problem between \OCNs\ is reduced to a \emph{strong}
simulation problem between a \OCN\ and a slightly generalized model that we call
$\omega$-nets, and that symbolically captures infinite branching.
In $\omega$-nets, there exist dedicated transitions with symbolic effect $\omega$, which allow to
arbitrarily increase the counter in a single step.
Secondly, this new strong simulation problem is solved using a
novel kind of approximant sequence, that is derived from the representation of
\V's system.
It is shown that this sequence converges at a finite index
and that individual approximant relations are effectively computable
semilinear sets.
In particular, knowing the representation of the approximant at level $k$,
one can characterize the approximant at the next level $k+1$
in terms of strong simulation over suitably modified \OCNs,
which is an effectively computable semilinear set by
\cref{thm:ssim-pspace}. 
A description of the \textWSIM\ preorder $\wsim$ can therefore
be computed by successively computing
the approximant relations and stopping once convergence is detected.
This procedure is effective
because the semilinear approximants are guaranteed to converge to
$\wsim$
at some finite level and equality is decidable for semilinear sets.

The remainder of the section is organized as follows.
In \cref{sec:w-nets}, $\omega$-nets are introduced and the reduction theorem
(\cref{thm:wsim:reduction}) is proved.
\Cref{sec:wsim-flavour} presents the key idea behind
the inductive procedure that follows.
In \cref{sec:wsim-msv} we recover a technical detail about the computability
of certain belts in strong simulation games between \OCNs.
In \cref{sec:wsim-approximants} we introduce and discuss approximants
and show that they converge 
to weak simulation at some finite level.
The main iterative construction to compute representations of approximants
is described in \cref{sec:wsim-construction}.
Finally, in \cref{sec:wsim-complexity},
we make a closer analysis of the complexity of this procedure.

\subsection{$\omega$-Nets}
  \label{sec:w-nets}
    \begin{definition}
        An \emph{$\omega$-net} $\NN{N}=(Q,\Act,\delta)$ is given by a finite set of
        control states $Q$, a finite set of actions $\Act$ and transitions
        $\delta\subseteq Q\x \Act\x\{-1,0,1,\omega\}\x Q$. It induces a transition system over
        $Q\x\N$
        that allows a step $pm\step{a}qn$ if either $(p,a,d,q)\in\delta$ and $n=m+d\in\N$ or if
        $(p,a,\omega,q)\in\delta$ and $n > m$.

A \emph{path} in $\NN{N}$ is a sequence
$\pi=p_0t_1p_1t_2\dots t_kp_k$ such that
$t_i=(p_{i-1},a_i,d_i,p_i)$
for every $1\le i\le k$.
%
%
We write
$\lambda(\pi)=a_1a_2\dots a_k \in\Act^*$ for the sequence of actions
it induces and $|\pi|=k$ for its length.
The \emph{effect}
$\effect{}(\pi)$
of such a path
is 
the minimum of $\omega$ and $\sum_{1\le i \le k}d_i$.
Its \emph{guard} is 
$\guard{}(\pi) = - \min\{\effect{}(^i\pi)\mid i\le k\}$, where ${}^i\pi$ denotes the prefix
$p_0t_1p_1t_2\dots t_ip_i$ of $\pi$ of length $i$. We call a path empty if $k=0$.
    \end{definition}
Observe that the effect of a path is $\omega$ 
iff it contains at least one $\omega$-transition.
Otherwise the effect is bounded by the length of the path.
Moreover, the guard of a path $\pi$
equals the guard of its longest
prefix without $\omega$-transitions
and therefore satisfies $0\le\guard{}(\pi)\le |\pi|$.


    Every \textOCN\ is an $\omega$-net without $\omega$-transitions.
    Unlike \textOCNs, $\omega$-nets can yield infinitely branching
    transition systems, since each $\omega$-transition $(p,a,\omega,q)$ induces steps 
    $pm\step{a}qn$ for any two naturals $n > m$.
    We observe that, just like \textOCNs, $\omega$-nets are monotone
    in the sense of
    \cref{prop:ocn-monotonicity} (page~\pageref{prop:ocn-monotonicity}):
    \begin{equation}
        pm\step{a}qn \text{ implies } p(m+d)\step{a}q(n+d) \text{ for all }d \in\N.
    \end{equation}
    This means that $pm\ssim p'm'$ implies $pn\ssim p'n'$ for $n\le m$, $n'\ge m'$.
%

    \medskip
    The following theorem justifies our focus on strong simulation games where \V\ plays
    on an $\omega$-net process.
    It shows that checking \textWSIM\ between two \OCN\ processes can be reduced to
    checking strong simulation between a \textOCN\ process and an
    $\omega$-net process.

    \begin{theorem}
        \label{thm:wsim:reduction}
        For two \OCNs\ $\NN{N}$ and $\NN{N'}$ with sets of
        control states $Q$ and $Q'$, respectively, one
        can in polynomial time construct a \OCN\ $\NN{M}$ with control states $M\supseteq Q$
        and an $\omega$-net $\NN{M'}$ with control states $M'\supseteq Q'$ such that
        \begin{equation}
            qn\ \wsim \ q'n' \mbox{ w.r.t. }\NN{N},\NN{N'} \mbox{ iff }
            qn\ssim q'n'\mbox{ w.r.t. }\NN{M},\NN{M'}
            \label{thm:wsim:reduction:main}
        \end{equation}
        holds for 
        each pair $q,q'\in Q\x Q'$ of original control states
        and all $n,n'\in\N$.
    \end{theorem}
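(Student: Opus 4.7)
My plan is to compile the weakness of Duplicator's moves into the transition structure of $\NN{M'}$, so that single strong transitions in $\NN{M'}$ (possibly $\omega$-labelled) replace entire $\tau^* a \tau^*$-paths of $\NN{N'}$. Because an $\omega$-net transition only has effect in $\{-1,0,1,\omega\}$, while weak-step effects may have larger magnitude, I handle the asymmetry by introducing a fresh synchronisation action $\sharp \notin \Act$ and refining each $a$-transition of $\NN{N}$ into an $a\sharp^K$ chain in $\NN{M}$; correspondingly, $\NN{M'}$ receives matching $a\sharp^K$ (or $\omega\sharp^K$) chains whose cumulative effects realise each possible weak-step effect. Both refinements add fresh intermediate states in $M\setminus Q$ and $M'\setminus Q'$. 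Thus one round of the weak game on $\NN{N},\NN{N'}$ becomes $K+1$ lock-step rounds in the strong game on $\NN{M},\NN{M'}$.

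For each triple $(q', a, r') \in Q' \times \Act \times Q'$ I would enumerate the (guard, effect) pairs realisable by $\tau^* a \tau^*$-paths from $q'$ to $r'$ in $\NN{N'}$. A pumping argument on the finite control graph of $\NN{N'}$ shows that these pairs decompose into polynomially many ``representatives'' plus an ``unbounded positive'' tag, which is present exactly when some reachable $\tau$-cycle on such a path has strictly positive effect. For each representative pair $(g,e)$ with $|e|$ polynomially bounded, I add to $\NN{M'}$ a dedicated chain from $q'$ to $r'$ of $\pm 1$ and $0$ transitions whose effects sum to $e$ and whose leading transitions respect the guard $g$; for the unbounded-positive case, I place a single $\omega$-transition at the head of the chain followed by $\sharp$-transitions of cumulative effect zero. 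Choose $K$ as a polynomial upper bound on the largest representative $|e|$.

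The equivalence~(\ref{thm:wsim:reduction:main}) follows by translating winning strategies in both directions. A Duplicator winning strategy in the weak game on $\NN{N},\NN{N'}$ lifts to $\NN{M},\NN{M'}$ by routing Duplicator into the gadget chain matching his chosen weak response at round one and then following it deterministically; because $\sharp$ is the only action enabled at intermediate chain states in both nets, Spoiler is forced to follow his own $\sharp$-chain in lock-step. Conversely, every chain in $\NN{M'}$ was constructed from a genuine weak $a$-step of $\NN{N'}$, and the $\omega$-case is decoded by instantiating the chain with a concrete, sufficiently large effect witness, with monotonicity (\cref{lem:monotonicity}) absorbing the possible overshoot of Duplicator's counter.

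The main obstacle will be engineering the gadgets to be truly atomic within the strong simulation game: Spoiler must not profit from a mid-chain state, Duplicator must not jump to an inappropriate chain mid-way, and the guards of the chain transitions must accurately reflect the original weak-step guards. This is secured by restricting the outgoing-action alphabet at intermediate states to $\{\sharp\}$ on both sides and by placing the leading decreasing transitions of each $\NN{M'}$-chain so that the chain is enabled at exactly those counter values satisfying the original weak-step guard. A secondary task is verifying the polynomial bounds on the number of representatives and on $K$, which both reduce to pumping on simple $\tau$-paths of length at most $|Q'|$.
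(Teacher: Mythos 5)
Your proposal is correct and follows essentially the same route as the paper, which first compiles the $\tau^*a\tau^*$-paths of $\NN{N'}$ into an intermediate ``guarded $\omega$-net'' with polynomially bounded guards and effects ($\omega$-transitions appearing exactly when a positive silent cycle is reachable on such a path), and then expands each transition into a lock-step chain of unit-effect steps over a fresh letter, invoking monotonicity to absorb the overshoot of Duplicator's counter. The one detail to fix is the placement of the $\omega$-transition: it must come \emph{after} the guard-testing prefix of its chain rather than ``at the head,'' since the guard of the silent prefix leading to the positive cycle must be verified before the counter is boosted to an arbitrary value.
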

    \todo[inline]{SL: better to say: $M$ and $M'$ can be computed in polynomial time}
    \todo[inline]{PT:check out new statement. it also removes the claim about
        approximants}

    The idea of the proof is to look for counter-increasing cyclic paths 
    via $\tau$-labeled transitions in the control graph of $\NN{N}'$
    and to introduce $\omega$-transitions accordingly.
    For any path that reads a single visible action and visits a `generator' state that is
    part of a silent cycle with positive effect, we add an $\omega$-transition.
    For all of the finitely many non-cyclic paths that read a single visible action
    we introduce direct transitions.
    
The remainder of this section is devoted to proving \cref{thm:wsim:reduction}. All further definitions in
this section are only relevant locally.
Formally, the proof of \cref{thm:wsim:reduction} will be done in two steps.
First (\cref{lem:app_reduction:GON}), we reduce weak simulation for \textOCNs\ to strong simulation
between a \textOCN\ and yet another auxiliary model called \emph{guarded $\omega$-nets}.
These differ from $\omega$-nets in that each transition may change the counter by more than one and is
explicitly guarded by an integer, i.e., it can only be applied if the current counter value exceeds the \emph{guard} attached to it.
In the second step (\cref{lem:app_reduction:normalize}) we normalize the effects of all transitions
to $\{-1,0,1,\omega\}$ and eliminate all integer guards and thereby construct an ordinary $\omega$-net
for \V.

\begin{definition}
    A \emph{guarded $\omega$-net} $\NN{N}=(Q,\Act,\delta)$ is given by finite sets $Q$
    and $\Act$ of control states and actions resp., and a transition relation
    $\delta\subseteq Q\x \Act\x\N\x(\Z\cup\{\omega\})\x Q$. 
    It defines a labeled transition system over $Q\x\N$ where $pm\step{a}qn$ iff
    there is a transition $(p,a,g,d,q)\in \delta$ with
    \begin{enumerate}
        \item $m\ge g$ and
        \item $n=m+d\in\N$ or $d=\omega$ and $n>m$.
    \end{enumerate}

\end{definition}

%

Specifically, $\NN{N}$ is an \emph{$\omega$-net} if $g=0$ and $d\in\{-1,0,1,\omega\}$
for all transitions $(p,a,g,d,q)\in\delta$.
The next construction establishes the connection between weak similarity of \textOCNs\
and strong similarity between \OCN\ and guarded $\omega$-net processes. 
In order to avoid confusion we write
$\Step{}{}{\NN{N}}$ and $\WStep{}{}{\NN{N}}$ for (weak) steps in the system
$\NN{N}$.

\begin{lemma}
    \label{L1}
    For any \OCN\ $\NN{N}=(Q,\Act,\delta)$ one can effectively construct a guarded $\omega$-net 
    $\NN{G}=(Q,\Act,\gamma)$ such that for all $a\in \Act$,
    \begin{enumerate}
        \item whenever $pm\WStep{a}{}{\NN{N}}qn$, there is some $l\ge n$ such that
            $pm\Step{a}{}{\NN{G}}ql$
        \item whenever $pm\Step{a}{}{\NN{G}}qn$, there is some $l\ge n$ such that
            $pm\WStep{a}{}{\NN{N}}ql$.
    \end{enumerate}
\end{lemma}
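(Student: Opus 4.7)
The plan is to represent every weak $a$-step $pm \WStep{a}{}{\NN{N}} qn$ by a single guarded (possibly $\omega$-effect) transition in $\NN{G}$, distinguishing witnesses that can exploit a counter-increasing silent cycle from those that cannot. The first step is to compute the set $G \subseteq Q$ of \emph{generator} states, namely those $r$ for which some cycle $r \step{\tau}^+ r$ has strictly positive effect. These can be identified (in polynomial time) by decomposing the $\tau$-subgraph into strongly connected components and checking each component for the existence of a positive cycle.

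For the construction, I would use the standard decomposition of a weak step into a $\tau$-path $\pi_1$ from $p$ to some $r$, an $a$-transition $t = (r,a,d,r')$, and a $\tau$-path $\pi_2$ from $r'$ to $q$ (when $a = \tau$, a single $\tau$-path from $p$ to $q$). For each triple $(\pi_1, t, \pi_2)$ where $\pi_1$ and $\pi_2$ are \emph{simple} paths in the control graph of $\NN{N}$, one adds a transition to $\gamma$ as follows. If neither $\pi_1$ nor $\pi_2$ visits a generator, let $\pi = \pi_1 t \pi_2$ and add $(p,a,\guard{}(\pi),\effect{}(\pi),q)$. Otherwise, fix a generator $r^* \in G$ lying on $\pi$ together with a positive $\tau$-cycle $\sigma$ at $r^*$, and add the $\omega$-transition $(p,a,g,\omega,q)$, where $g$ is the least starting counter at $p$ enabling the execution of $\pi_1$ followed by arbitrarily many iterations of $\sigma$ followed by $t\pi_2$.

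For item (1), given $pm \WStep{a}{}{\NN{N}} qn$, decompose a witness into $\pi_1, t, \pi_2$ as above and iteratively remove every $\tau$-cycle of non-positive effect from $\pi_1$ and $\pi_2$. By monotonicity of OCN steps, each such reduction preserves executability from $pm$ and only (weakly) increases the counter at $q$. If both reduced $\tau$-paths are simple and avoid $G$, the corresponding direct transition of $\gamma$ produces a step to $ql$ with $l \geq n$. Otherwise the reduced witness visits some generator $r^*$, and the associated $\omega$-transition of $\gamma$ is applicable from $pm$ and reaches arbitrarily high counter at $q$, in particular some $l \geq n$. For item (2), any step $pm \Step{a}{}{\NN{G}} qn$ uses some $(p,a,g,d,q) \in \gamma$ with $m \geq g$. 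If $d \in \Z$, the underlying $\pi = \pi_1 t \pi_2$ is directly executable in $\NN{N}$ from $pm$ by the guard, witnessing $pm \WStep{a}{}{\NN{N}} q(m+d) = qn$. If $d = \omega$, so that $n > m$, one can iterate the cycle $\sigma$ along the underlying path sufficiently many times to reach $ql$ with $l \geq n$.

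The main obstacle I expect is pinning down the guard $g$ for each $\omega$-transition so that both directions work simultaneously: it must be small enough to cover every weak step whose witness relies on pumping (item 1), yet large enough that the pumping can actually be carried out from any $m \geq g$ (item 2). A bound of the form $g = \max\bigl(\guard{}(\pi_1),\, \guard{}(\sigma) - \effect{}(\pi_1)\bigr)$ suffices for a single choice of $(\pi_1,\sigma)$, but several incomparable candidates may arise from different generators and cycles along the same path; keeping one transition per Pareto-optimal $(g,\omega)$ pair ensures correctness. This also settles effectivity: there are only finitely many simple $\tau$-paths and finitely many simple $\tau$-cycles, and each gives rise to an explicitly computable transition of $\gamma$.
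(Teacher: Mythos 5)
Your construction drops the direct (non-$\omega$) transition exactly when the underlying path passes through a generator state, and this breaks item~(1). The paper's proof adds the transition $(p,a,\Gamma(\pi),\Delta(\pi),q)$ for \emph{every} choice of silent direct paths $\pi_1,\pi_2$ around the visible transition $t$ --- whether or not $\pi=\pi_1 t\pi_2$ meets a generator --- and \emph{in addition} adds $\omega$-transitions for those paths into which a positive silent cycle can be inserted. You instead replace the direct transition by the $\omega$-transition whenever $\pi$ meets $G$. The problem is that a witness of $pm\wstep{a}qn$ may pass through a generator $r^*$ without having enough counter to execute the positive cycle $\sigma$ at $r^*$. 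Concretely, take $p\step{\tau,0}r$, $r\step{a,0}q$, and a cycle at $r$ with effect $+1$ but guard $1$, e.g.\ $r\step{\tau,-1}s\step{\tau,+1}u\step{\tau,+1}r$. Then $p0\wstep{a}q0$ in $\NN{N}$, but the only $a$-labelled transition from $p$ to $q$ in your $\NN{G}$ is an $\omega$-transition whose guard must be at least $1$ --- if you lowered it to $0$, item~(2) would fail instead, since from $p0$ the counter cannot be pumped at all, yet the $\omega$-transition would claim steps $p0\Step{a}{}{\NN{G}}ql$ for arbitrarily large $l$. So $p0$ has no $a$-step to $q$ in $\NN{G}$ and item~(1) fails. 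Your closing remark about choosing $g$ ``small enough to cover every weak step \dots\ yet large enough that the pumping can actually be carried out'' identifies exactly this tension, but no Pareto-optimal choice of guards for the $\omega$-transitions resolves it: the two requirements are genuinely incompatible for a single transition, and the resolution is to keep the cheap direct transition alongside the $\omega$-transition. With that correction, the remainder of your argument (stripping non-positive silent cycles for item~(1), pumping a simple positive cycle for item~(2), finiteness via simple paths and cycles) is essentially the paper's proof.
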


\begin{proof}
    The idea of the proof is to introduce direct transitions
    from one state to another for any path between them
    that reads exactly one visible action and does not contain silent cycles.

    For two states $s,t$ of $\NN{N}$, let $D(s,t)$ be the set of \emph{direct}
    (i.e., acyclic) paths
    from $s$ to $t$ 
    and let $SD(s,t)$ denote the subset of \emph{silent direct} paths
    $SD(s,t) = \{\pi\in D(s,t)\;|\: \lambda(\pi)\in \{\tau\}^*\}$ from $s$ to
    $t$.
    Every path in $D(s,t)$ has acyclic prefixes only and is therefore bounded in length by $|Q|$.
    Hence $D(s,t)$ and $SD(s,t)$ are finite and effectively computable for all pairs $(s,t)$.
    
    Using this notation, we define the transitions in $\NN{G}$ as follows.
    $\NN{G}$ contains a transition $(p,a,\Gamma(\pi),\Delta(\pi),q)$ for each path
    $\pi=\pi_1(s,a,d,s')\pi_2$
    where $\pi_1\in SD(p,s)$ and $\pi_2\in SD(s',q)$.
    This carries over all transitions of $\NN{N}$,
    including the ones with label $a=\tau\in\Act$, because the empty path
    is in $SD(s,s)$ for all states $s$.
    Moreover, introduce $\omega$-transitions in case $\NN{N}$ allows paths $\pi_1,\pi_2$
    as above to contain direct cycles with positive effect on the counter: If there is a path
    $\pi=\pi_1'\pi_1''\pi_1'''(s,a,d,s')\pi_2$ with
    \begin{enumerate}
        \item $\pi_1'\in SD(p,t)$, $\pi_1''\in SD(t,t)$ and $\pi_1'''\in SD(t,s)$
        \item $\Delta(\pi_1'')>0$
    \end{enumerate}
    for some $t\in Q$, then $\NN{G}$ contains a transition $(p,a,\Gamma(\pi_1'\pi_1''),\omega,q)$.
    Similarly, if for some $t\in Q$, there is a path $\pi=\pi_1(s,a,d,s')\pi_2'\pi_2''\pi_2'''$ that satisfies
    \begin{enumerate}
        \item $\pi_1\in SD(p,s)$,$\pi_2'\in SD(s',t)$, $\pi_2''\in SD(t,t)$ and $\pi_2'''\in SD(t,q)$
        \item $\Delta(\pi_2'')>0$
    \end{enumerate}
    add a transition $(p,a,g,\omega,q)$ with guard $g=\Gamma(\pi_1(s,a,d,s')\pi_2'\pi_2'')$.
    If there is an $a$-labeled path from $p$ to $q$ that contains a silent and direct cycle with positive effect,
    $G$ has an a-labeled $\omega$-transition from $p$ to $q$ with the guard derived from that path.

    To prove the first part of the claim, assume $pm\WStep{a}{}{\NN{N}}qn$. By definition of weak steps, there must be a path
    $\pi=\pi_1(s,a,d,s')\pi_2$ with $\lambda(\pi_1),\lambda(\pi_2)\in\{\tau\}^*$.
    Suppose both $\pi_1$ and
    $\pi_2$ do not contain cycles with positive effect. Then there must be paths $\pi_1'\in SD(p,s), \pi_2'\in SD(s',q)$ with
    $\Gamma(\pi_i')\le\Gamma(\pi_i)$ and $\Delta(\pi_i')\ge\Delta(\pi_i)$ for $i\in\{1,2\}$ that can be obtained from
    $\pi_1$ and $\pi_2$ by removing all simple cycles with effects less or equal $0$. So $\NN{G}$ contains a transition
    $(p,a,g',d',q)$ for some $g'\le m$ and $d'\ge n-m$ and hence
    $pm\Step{a}{}{\NN{G}}qn'$ for $n'=m+d'\ge n$. Alternatively, either $\pi_1$ or $\pi_2$ contains a
    cycle with positive effect. Note that for any such path, another path with lower or equal guard exists that connects the
    same states and contains only one such counter-increasing simple cycle:
    If $\pi_1$ contains a simple cycle with positive effect, there is a path $\bar{\pi_1}=\pi_1'\pi_1''\pi_1'''$ from
    $p$ to $s$, where $\pi_1',\pi''$ and $\pi_1'''$ are direct and $\Delta(\pi_1'')>0$ for the cycle
    $\pi_1''\in SD(t,t)$
    for some state $t$. In this case, $\NN{G}$ contains an $\omega$-transition $(p,a,g,\omega,q)$ with
    $g=\Gamma(\pi_1'\pi_1'')$. Similarly, if $\pi_2$ contains a counter-increasing cycle, there is a
    $\bar{\pi_2}=\pi_2'\pi_2''\pi_2'''$, with $\pi_2'\in SD(s',t), \pi_2''\in SD(t,t), \pi_2'''\in SD(t,q)$ and
    $\Delta(\pi_2'')>0$. This means there is a transition $(p,a,g,\omega,q)$ in $\NN{G}$ with
    $g=\Gamma(\pi_1(s,a,d,s')\pi_2'\pi_2'')$. In both cases, $g\le\Gamma(\pi)\le m$ and therefore
    $pm\Step{a}{}{\NN{G}}qi$
    for all $i\ge m$.

    For the second part of the claim, assume $pm\Step{a}{}{\NN{G}}qn$. This
    must be the result of a transition $(p,a,g,d,q)$ in $\NN{G}$ for some $g\le m$. 
    In case $d\neq\omega$, there is a path $\pi$ from $p$ to $q$ with
    $\Delta(\pi)=n-m$, $\lambda(\pi)\in \{\tau\}^*\{a\}\{\tau\}^*$ and $\Gamma(\pi)=g$ that witnesses
    the weak step $pm\WStep{a}{}{\NN{N}}qn$ in $\NN{N}$. Otherwise, if $d=\omega$, there must
    be a path $\pi=\pi_{11}\pi_{12}\pi_{13}(s,a,d,s')\pi_{21}\pi_{22}\pi_{23}$ from $p$ to $q$ in
    $\NN{N}$
    where $\Gamma(\pi)\le m$, all $\pi_{ij}$ are silent and direct and one of $\pi_{12}$ and $\pi_{22}$ 
    is a cycle with strictly positive effect.
    This implies that one can ``pump'' the value of the counter higher than any given value.
    Specifically, there are naturals $k$ and $j$ such that the path $\pi'=\pi_{11}\pi_{12}^k\pi_{13}(s,a,d,s')
    \pi_{21}\pi_{22}^j\pi_{23}$ from $p$ to $q$
    satisfies $\Gamma(\pi')\le\Gamma(\pi)\le m$ and $\Delta(\pi')\ge m-n$. Now $\pi'$
    witnesses the weak step $pm\WStep{a}{}{\NN{N}}qn'$ in $\NN{N}$ for an $l\ge n$.
\end{proof}

\begin{remark}
    \label{rem:w-nets-bounds}
    Observe that no transition of the net $\NN{G}$ as constructed above has a
    guard larger than $3|Q|+1$, nor any finite effect is larger than $2|Q|+1$.
\end{remark}
\begin{lemma}
    \label{lem:app_reduction:GON}
    For a \textOCN\ $\NN{N}'$ one can effectively construct a
    guarded $\omega$-net $\NN{G}'$ over the same set of control states, such that\ for any \OCN\ $\NN{N}$ and
    any two configurations $pm,p'm'$ of $\NN{N}$ and $\NN{N}'$ resp.,
    \begin{equation}
        pm\ \wsim \ p'm'\text{ w.r.t.\ }\NN{N},\NN{N}' \iff pm\ssim p'm'\text{ w.r.t.\ }\NN{N},\NN{G}'.
        \label{lem:app_reduction:GON:claim}
    \end{equation}
\end{lemma}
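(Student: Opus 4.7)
The plan is to let $\NN{G}'$ be the guarded $\omega$-net obtained by applying \cref{L1} to the \OCN\ $\NN{N}'$ (just renaming $\NN{N},\NN{G}$ in that construction to $\NN{N}',\NN{G}'$). The equivalence \eqref{lem:app_reduction:GON:claim} will then follow by translating winning strategies between the two simulation games round-by-round, using the correspondence provided by \cref{L1} together with the monotonicity of $\omega$-nets and of weak steps on \OCNs. The key point is that \cref{L1} matches a weak step with a strong step only up to an extra counter gain (``overshoot''), so the argument must absorb this slack at every round; the clean way to do this is to work with the upward monotonic closures of the candidate relations.

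For the ``$\Rightarrow$'' direction I would show that
\[
R \; = \; \{(pm,p'l) \mid \exists\, l' \le l.\; pm \wsim p'l' \text{ w.r.t.\ } \NN{N},\NN{N}'\}
\]
is a strong simulation between $\NN{N}$ and $\NN{G}'$; taking $l=l'=m'$ then yields the desired implication. Given $(pm,p'l)\in R$ with witness $l'\le l$ and a step $pm \step{a} rm^*$ in $\NN{N}$, the hypothesis $pm \wsim p'l'$ provides a response $p'l' \wstep{a} r'n$ in $\NN{N}'$ with $rm^* \wsim r'n$. By \cref{L1}(1) there is a step $p'l' \step{a} r'k$ in $\NN{G}'$ with $k \ge n$, and monotonicity of $\omega$-nets lifts this to $p'l \step{a} r'(k+l-l')$. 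Since $k+l-l' \ge n$, the target pair lies in $R$ witnessed by $l'' = n$.

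The ``$\Leftarrow$'' direction is symmetric. I would show that
\[
R' \; = \; \{(pm,p'l') \mid \exists\, l \le l'.\; pm \ssim p'l \text{ w.r.t.\ } \NN{N},\NN{G}'\}
\]
is a weak simulation between $\NN{N}$ and $\NN{N}'$. Given $(pm,p'l')\in R'$ with witness $l\le l'$ and a step $pm \step{a} rm^*$ in $\NN{N}$, the hypothesis $pm \ssim p'l$ w.r.t.\ $\NN{N},\NN{G}'$ yields a response $p'l \step{a} r'k$ in $\NN{G}'$ with $rm^* \ssim r'k$. \Cref{L1}(2) converts this into a weak step $p'l \wstep{a} r'j$ in $\NN{N}'$ with $j \ge k$, and monotonicity of weak steps on the \OCN\ $\NN{N}'$ promotes it to $p'l' \wstep{a} r'(j+l'-l)$, whose target lies in $R'$ via the witness $k$.

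The only real subtlety I foresee is controlling these existential overshoots: if not absorbed immediately, they would accumulate unboundedly along an infinite play. Wrapping the candidate relations in their monotonic closure (the inner existentials in $R$ and $R'$) makes absorption a one-line calculation per round, so the whole argument is coinductive and there is no need for an induction over game ranks or approximants at this stage.
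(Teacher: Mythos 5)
Your proposal is correct and follows essentially the same route as the paper: take $\NN{G}'$ from \cref{L1}, translate the two simulation games round by round via the two parts of that lemma, and absorb the counter overshoot by monotonicity. The only cosmetic difference is that you wrap the candidate relations in an explicit upward closure, whereas the paper observes that the maximal (weak) simulation is already upward-closed in \V's counter (point~2 of \cref{prop:ocn-monotonicity}) and applies that directly.
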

\begin{proof}
    Consider the construction from the proof of \cref{L1}.
    Let $\WSIM{}{\NN{N},\NN{N}'}$ be the largest weak simulation w.r.t.\ $\NN{N},\NN{N}'$ and
    $\SIM{}{\NN{N},\NN{G}'}$ be the largest strong simulation w.r.t.\ $\NN{N},\NN{G}'$.
    
    For the ``if'' direction we show that $\SIM{}{\NN{N},\NN{G}'}$ is a weak simulation w.r.t.\
    $\NN{N},\NN{N}'$. Assume
    $pm\SIM{}{\NN{N},\NN{G}'}p'm'$ and $pm\Step{a}{}{\NN{N}} qn$.
    That means there is a step $p'm'\Step{a}{}{\NN{G}'} q'n'$ for some
    $n'\in \N$ so that $qn\SIM{}{\NN{N},\NN{G}'} q'n'$.
    By \cref{L1} part 2, $p'm'\WStep{a}{}{\NN{N}} q'l$ for some $l\ge n'$.
    Since simulation is monotonic (point 2 of \cref{prop:ocn-monotonicity}),
    we know that also $qn\SIM{}{\NN{N,}\NN{G}'} q'l$. Similarly, for the ``only if''
    direction, one can use the first claim of \cref{L1} to check that
    $\WSIM{}{\NN{N,}\NN{N}'}$ is a strong simulation w.r.t.\ $\NN{N,}\NN{G}'$.
\end{proof}

\begin{lemma}
    \label{lem:app_reduction:normalize}
    For a \textOCN\ $\NN{N}$
    and a guarded $\omega$-net $\NN{G}'$
    with sets of control states $Q$ and $Q'$
    one can effectively construct a \textOCN\ $\NN{M}$
    and an $\omega$-net $\NN{M}'$
    with sets of control states $M\supseteq Q$ and $M'\supseteq Q'$ respectively,
    such that
    for any two configurations $qn,q'n'$ of $\NN{N}$ and $\NN{G}'$,
    \begin{equation}
        qn\ssim q'n'\text{ w.r.t.\ }\NN{N},\NN{G}' \iff qn\ssim q'n'\text{ w.r.t.\ }\NN{M},\NN{M}'.
        \label{lem:app_reduction:normalize:claim}
    \end{equation}
\end{lemma}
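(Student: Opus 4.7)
The plan is to introduce a fresh padding action $\# \notin \Act$ and uniformly slow both games down by a polynomial factor $B+1$, where $B$ is a polynomial upper bound (in $|Q'|$, by Remark~\ref{rem:w-nets-bounds}) on all guards and absolute finite effects appearing in $\NN{G}'$. The role of the extra $B$ $\#$-steps per logical round is to give Duplicator enough room to encode each guarded transition as a \emph{deterministic} chain of $\{-1,0,1,\omega\}$-effect transitions with trivial guard, while pinning Spoiler to the same pace.

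To construct $\NN{M}$ from $\NN{N}$, I would add, for every $p \in Q$, fresh states $p^{(1)}, \dots, p^{(B)}$, redirect each original transition $(p,a,d,q)$ of $\NN{N}$ to $(p,a,d,q^{(1)})$, and insert the deterministic chain $(q^{(i)}, \#, 0, q^{(i+1)})$ for $i < B$ together with $(q^{(B)}, \#, 0, q)$. Hence, after any non-$\#$ move, Spoiler is forced through exactly $B$ consecutive $\#$-steps of zero effect. To construct $\NN{M}'$ from $\NN{G}'$, I would replace every transition $\theta = (p', a, g, d, q')$ by a deterministic gadget on fresh states $s^\theta_1, \dots, s^\theta_B$, consisting of an $a$-labeled start $(p', a, e_0, s^\theta_1)$ and $\#$-labeled continuations $(s^\theta_i, \#, e_i, s^\theta_{i+1})$ for $1 \le i < B$ and $(s^\theta_B, \#, e_B, q')$. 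The effect sequence $e_0,\dots,e_B \in \{-1,0,1,\omega\}$ realises: first $g$ consecutive $-1$'s (forcing the counter to have been at least $g$), then $g$ consecutive $+1$'s (restoring the counter), followed by either a single $\omega$-step if $d = \omega$ or by $|d|$ unit-steps with the sign of $d$, and finally $0$-effect padding up to total length $B+1$. The choice $B \ge 2g + |d|$ for every transition makes this feasible, and by construction all effects in $\NN{M}$ lie in $\{-1,0,1\}$ and all effects in $\NN{M}'$ lie in $\{-1,0,1,\omega\}$ with guard $0$, so $\NN{M}$ is a \OCN\ and $\NN{M}'$ is an $\omega$-net as required.

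For correctness of equation~\eqref{lem:app_reduction:normalize:claim}, the idea is that each logical round of the game on $\NN{N}, \NN{G}'$ corresponds bijectively to a block of $B+1$ consecutive moves in the game on $\NN{M}, \NN{M}'$. After Spoiler commits to some action $a$ and Duplicator commits to a gadget $\theta$ matching $a$, the subsequent $\#$-chains on both sides are entirely deterministic, so the players' meaningful choices in a block reduce precisely to their choices in one round of the original game. The counter arithmetic along Duplicator's gadget enforces the guard $g$ (she is stuck during the forced decrement phase iff the counter was strictly below $g$) and achieves exactly the required net effect $d$, or an arbitrary strict increase in the $\omega$-case, mirroring the semantics of guarded $\omega$-nets. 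Transferring winning strategies in either direction is then routine by glueing or contracting these blocks.

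The main obstacle will be to verify this correspondence rigorously: one must check that Duplicator cannot benefit by aborting a gadget mid-chain (she cannot, since being stuck means losing and the chain offers no alternative $\#$-transition), that Spoiler cannot escape the $\#$-chain (the intermediate states $p^{(i)}$ admit only $\#$-transitions), and that the starting configurations of interest, which lie in $Q\x\N\x Q'\x\N$ in the new nets as well, relate the original and expanded games identically. Since $B$ is polynomial in $|Q'|$, the whole construction is effective and runs in polynomial time; combined with Lemma~\ref{lem:app_reduction:GON}, this yields Theorem~\ref{thm:wsim:reduction}.
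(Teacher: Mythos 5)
Your proposal is correct and follows essentially the same route as the paper's proof: a fresh padding letter slows both games down by a fixed polynomial factor (the paper uses $k=2\Gamma(\NN{G}')+\Delta(\NN{G}')+1$ rounds, labelled $b$ instead of $\#$), Spoiler is threaded through a deterministic zero-effect chain, and each guarded transition of $\NN{G}'$ is expanded into a per-transition chain that first performs $g$ decrements to test the guard, then $g$ increments to restore the counter, then realises the effect $d$ by unit steps or a single $\omega$-step, and finally pads with zero-effect steps. The block-by-block correspondence of strategies you describe is exactly the content of the paper's equations for the expanded step relations, so no gap remains.
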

\begin{proof}
    We first observe (see also \cref{rem:w-nets-bounds})
    that for any transition of the guarded $\omega$-net $\NN{G}'$,
    the values of its guard is bounded by some constant. The same holds for
    all finite effects. Let $\Gamma(\NN{G}')$ be the maximal guard and $\Delta(\NN{G}')$
    be the maximal absolute finite effect of any transition of $\NN{G}'$.

    The idea of this construction is to simulate one round of the game $\NN{N}$ vs.\ $\NN{G}'$
    in $k=2\Gamma(\NN{G}')+\Delta(\NN{G}')+1$ rounds of a \sgame\ $\NN{M}$ vs.\ $\NN{M}'$.
    We will replace original steps of both players by sequences of $k$ steps in the new game,
    which is long enough to verify if the guard of \V's move is satisfied and
    adjust the counter using transitions with effects in $\{-1,0,+1,\omega\}$ only.

    We use one fresh symbol $b\notin\Act$ and let the new alphabet be
    $\widehat{\Act}=\Act\cup\{b\}$.
    We transform the net $\NN{N}=(Q,\Act,\delta)$ to the \textOCN\
    $\NN{M}=(M,\widehat{\Act},\mu)$ as follows:
    \begin{align}
         M =\ &Q\cup \{p_i\;|\;1\le i<k, p\in Q\}\\
     \mu =\ &\{p\step{a,d}q_k\;|\; p\step{a,d}q\in\delta\}\\
                  &\cup \{p_i\step{b,0}p_{i-1}\;|\;1<i<k\}\\
                  &\cup \{p_1\step{b,0}q\}.
    \end{align}
    We see that
    \begin{equation}
        pm\Step{a}{}{\NN{N}}qn \iff pm\Step{a}{}{\NN{M}}q_{k-1}n\Step{b^{k-2}}{}{\NN{M}}q_1n\Step{b}{}{\NN{M}}qn.
        \label{spoilernet}
    \end{equation}
    
    Now we transform the guarded $\omega$-net $\NN{G}'=(Q',\Act,\delta')$ to the $\omega$-net
    $\NN{M}'=(M',\widehat{\Act},\mu')$.
    Every original transition will be replaced by a sequence of $k$ transitions
    that test if the current counter value exceeds the guard
    and adjust the counter accordingly.
    The new net $\NN{M}'$ has all states of $\NN{G}'$ plus a chain of $k$ new states for each original
    transition.
    \begin{equation}
         M' = Q' \cup \{t_i\;|\;0\le i<k, t\in \delta'\}.
    \end{equation}
    For every transition $t=(p,a,g,d,q)$ in $\NN{G}'$,
    we add the following transitions to $\NN{M}'$. First, to test the guard:
    \begin{align}
         p\step{a,0}t_{k-1},\\
         t_i\step{b,-1}t_{i-1}, &\text{ for } k-g<i<k\\
         t_i\step{b,+1}t_{i-1}, &\text{ for } k-2g<i<k-g.
    \end{align}
    Now we add transitions to adjust the counter according to $d\in\N\cup\{\omega\}$.
    In case $0\le d<\omega$ we add
    \begin{align}
         t_i\step{b,+1}t_{i-1}, &\text{ for } k-2g-|d|<i<k-2g\\
         t_i\step{b,0}t_{i-1}, &\text{ for } 0\le i<k-2g-d.
    \end{align}
    In case $d<0$ we add
    \begin{align}
         t_i\step{b,-1}t_{i-1}, &\text{ for } k-2g-|d|<i<k-2g\\
         t_i\step{b,0}t_{i-1}, &\text{ for } 0\le i<k-2g+d.
    \end{align}
    In case $d=\omega$ we add
    \begin{align}
         t_i\step{b,\omega}t_{i-1}, &\text{ for } i=k-2g\\
         t_i\step{b,0}t_{i-1}, &\text{ for } 0\le i<k-2g.
    \end{align}
    Finally, we allow a move to the new state:
    \begin{equation}
         t_0\step{b,0}q.
    \end{equation}
    Observe that every transition in the constructed net $\NN{M}'$
    has effect in $\{-1,0,+1,\omega\}$. $\NN{M}'$ is therefore an ordinary $\omega$-net.
    It is straightforward to see that
    \begin{equation}
        pm\Step{a}{}{\NN{G}'}qn \iff pm\Step{ab^{k-1}}{}{\NN{M}'}qn.\label{dupnet}
    \end{equation}
    The equation \eqref{lem:app_reduction:normalize:claim} now follows from \cref{spoilernet,dupnet}.
\end{proof}

    \Cref{thm:wsim:reduction}
    now follows from \cref{lem:app_reduction:GON,lem:app_reduction:normalize}.

    \todo[inline]{PT: i have no idea where the paragraph below comes from and
        propose to remove it.
        COMMENTED OUT
    }
    \todo[inline]{PT: i don't like the next subsection title}
%
    

\subsection{Outline of the Construction}
  \label{sec:wsim-flavour}
It remains to show how to solve a strong \sgame\
between \R, playing on a \textOCN\ $\NN{N}$
and \V, playing on an $\omega$-net $\NN{N}'$.
Let us consider the following situation to get a flavor of the reasoning in the remaining part of
the \cref{sec:wsim}:
assume that the structure of $\NN{N}$ and $\NN{N}'$
guarantees that in any play of a \sgame, at most one
\emph{$\omega$-step} i.e. 
a step induced by
an $\omega$-transition, can be used. 
Consider a prefix of a play until the $\omega$-step, and let us assume that
after this prefix \R's configuration is $q n$ and that the $\omega$-step ends in the
configuration with the state $q'$. Observe that \R\ wins only if $n$ is big enough such that $q n$ is
not simulated by $q' n'$ for any $n'\in \N$. Otherwise, \V\ would choose a value
$n'$ big enough to simulate $q n$.
Moreover, observe that in order to find the minimal $n$ with which \R\ can win we need to investigate
only the simulation preorder between two \textOCNs, since after the $\omega$-step there
are no further $\omega$-steps allowed (by our assumption above). Namely, these nets are $\NN{N}$ and $\NN{N}'$
with all $\omega$-transitions removed.
We ask about the belt for $q,q'$.
\R\ wins the remaining play iff
\begin{enumerate}
  \item this belt is vertical (some $n_0$ exists with $qn_0\notSIM{}{}{}q'n'$ for all $n'$) and
  \item $n$ is larger than the width of this belt ($n$ is already sufficient).
\end{enumerate}
Assuming
that we have calculated $n_0$, we can design a gadget which will be
substituted instead of the $\omega$-transition in $\NN{N}'$ and which allow to test
if the \R's counter value is greater than $n_0$. 

This lets us transform the pair of \textOCN\ and $\omega$-net
into a pair of \textOCNs, in such way that preserves the outcome of all those
plays in which at most one $\omega$-step is used.
The overall approach is to iterate this procedure,
constructing a sequence of \textOCNs\ that approximate the behavior of the
original nets.
In \cref{sec:wsim-approximants} we define the notion of simulation approximants
and show that they stabilize at some finite level.
In \cref{sec:wsim-construction,sec:wsim-complexity}
we explain how to represent these approximant relations using the idea above,
and how efficient this representation is, i.e., how many iterations are necessary.
In the next section we briefly go back to
strong simulation between \textOCNs, and show how to check the two conditions
1) and 2) above.

\subsection{Computing Minimal Sufficient Values}
  \label{sec:wsim-msv}
In this section we present that computing the exact width of \emph{vertical} belts can be done in polynomial space.
It will be used in \cref{sec:wsim-complexity}.

Let us write $\suff{q,q'}$ for the least value $n\in\N$
such that $qn\notSIM{}{}q'n'$ for every $n'\in\N$ and $\omega$ if no such value
$n$ exists.
In terms of the \sgame, this is the minimal initial counter value
that is sufficient for \R\ to win against any initial value for  \V\ if we
fix the initial states to $q$ and $q'$.
Observe that $\suff{q,q'}=\omega$ iff the belt for the plane
$(q,q')$ is \emph{not} vertical.

The following is an easy consequence of \cref{thm:ssim-pspace},
because one can check the simulation problem
for selected positions.

\begin{lemma}\label{lem:compute-vbelts}
    Given \OCNs\ $\NN{N}$ and $\NN{N'}$ in normal form with sets of control
    states $Q$ and $Q'$,
    for any given pair $(q,q')\in Q\x Q'$ of control states,
    the value $\suff{q,q'}$ can be computed in \PSPACE.
    Moreover, if $\suff{q,q'}\neq\omega$, then it is bounded by $\Cacyc$,
    the maximal length of an acyclic path in the product of $\NN{N}$ and
    $\NN{N'}$.
\end{lemma}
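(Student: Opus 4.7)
My plan is to combine the structural information from the Belt Theorem with the \PSPACE\ algorithm for strong simulation provided by \cref{thm:ssim-pspace}. The starting observation is a dichotomy: $\suff{q,q'}$ is finite if and only if the belt associated with $(q,q')$ is vertical, i.e.\ has slope of the form $(0,\rho')$. Indeed, if the slope $(\rho,\rho')$ furnished by \cref{thm:belt-theorem-bounds} is non-vertical then $\rho\ge 1$ and $\rho'\le \Cacyc$, so for every fixed $n$ the point $(n,n')$ becomes $\Cacyc$-above $(\rho,\rho')$ once $n'$ is chosen large enough; hence $qn\ssim q'n'$ and $\suff{q,q'}=\omega$. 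Conversely, when the slope is vertical $(0,1)$, every $(n,n')$ with $n>\Cacyc$ is $\Cacyc$-below and so $qn\notSIM{}{}q'n'$ holds for all $n'$, giving the polynomial bound claimed in the lemma.

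The second ingredient reduces the existential quantifier over $n'$ to a single simulation check. By monotonicity (\cref{lem:ocn-ocn-monotonicity}), $\exists n'.\,qn\ssim q'n'$ holds iff $qn\ssim q'N$ for any one sufficiently large $N$. An explicit polynomial choice of $N$ drops out of the Belt Theorem: since $\rho\ge 1$ and $\rho'\le \Cacyc$ for every non-vertical slope, the point $(n,N)$ becomes $\Cacyc$-above whenever $N \ge (n+\Cacyc)\Cacyc + \Cacyc$, and since we only ever test the condition for $n\le \Cacyc$, taking $N := 2\Cacyc^2 + 2\Cacyc$ suffices for all candidate values of $n$ simultaneously. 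Because $N$ has only polynomially many bits, \cref{thm:ssim-pspace} decides $qn\ssim q'N$ in \PSPACE.

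The algorithm is then immediate: iterate $n$ from $0$ up to $\Cacyc$, run the \PSPACE\ test $qn\ssim q'N$ for each $n$, and return the smallest $n$ for which the test fails, or $\omega$ if it succeeds throughout. The outer loop contributes only polynomially many iterations, each invoking a \PSPACE\ subroutine, so the whole procedure runs in polynomial space by the usual composition. The main obstacle---really the only nontrivial step---is securing the polynomial threshold $N$, which hinges entirely on the polynomial slope bound from \cref{thm:belt-theorem-bounds}; without it, the existence of some simulating $n'$ could not be certified by a single simulation instance of polynomial size, and the reduction to \cref{thm:ssim-pspace} would break down.
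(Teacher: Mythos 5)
There is a genuine gap in the second step: your polynomial threshold $N$ is justified only in the case that does not matter. You derive $N = 2\Cacyc^2+2\Cacyc$ from the fact that for a \emph{non-vertical} slope ($\rho\ge 1$, $\rho'\le\Cacyc$) the point $(n,N)$ is $\Cacyc$-above the slope. But when $\suff{q,q'}$ is finite the belt is vertical, no point of $\N^2$ is $\Cacyc$-above $(0,1)$, and every column $n\le\Cacyc$ lies entirely inside the belt, where \cref{thm:belt-theorem-bounds} gives no information about the coloring. So for $n<\suff{q,q'}$ you know only that \emph{some} $n'$ with $qn\ssim q'n'$ exists; you have no bound on the least such $n'$, and nothing rules out that it exceeds your $N$. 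In that case your loop would report a failing test at some $n<\suff{q,q'}$ and return a value that is too small. The bound you would need is on the height at which the frontier inside a vertical belt stabilizes, and the paper only establishes an \emph{exponential} bound for this (consistent with \cref{rem:semilinearsize}, which says the full description of $\ssim$ needs exponential space), so a polynomial $N$ is not available by this route.

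The paper closes exactly this hole with \cref{lem:ssim:uperiodic}: the coloring of the belt is ultimately periodic with period a multiple of $\slope(q,q')=(0,\rho')$ from some exponentially bounded level $n'_0$ on, and periodicity in the vertical direction combined with monotonicity forces the coloring of each column to be constant above $n'_0$; hence $qn\ssim q'n'_0$ holds iff $\exists n'.\,qn\ssim q'n'$. Testing at level $n'_0$ instead of your $N$ is still a single instance of \cref{thm:ssim-pspace} with a counter value of polynomially many bits, so the overall \PSPACE\ claim survives. The rest of your argument (the dichotomy between vertical and non-vertical belts, the reduction of the universal/existential quantifier over $n'$ to one simulation query via \cref{lem:ocn-ocn-monotonicity}, and the outer loop over $n\le\Cacyc$) matches the paper and is fine; only the size of the test level needs the periodicity lemma.
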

\begin{proof}
  By \cref{thm:belt-theorem-bounds}, we can bound the coefficients of the slopes
  of all belts polynomially. In particular, we know that if $(\rho,\rho')$
  is the slope of some belt then $\rho$ and $\rho'$ are both non-negative
  and no bigger than $\Cacyc\le |Q\x Q'|$.
  The steepest possible such slope that is not vertical (i.e., with $\rho>0$) is thus
  given by the vector $(\alpha,\alpha') = (1,\Cacyc)$.
  
  To check if $\suff{q,q'}=\omega$ 
  we can pick a point $(n,n')$
  that is both $\Cacyc$-above $(\alpha,\alpha')$ and $\Cacyc$-below the vertical vector $(0, 1)$
  and check if $qn\SIM{}{}q'n'$ holds.
  For instance, $n = \Cacyc+1$ and $n' = 2(\Cacyc+1)^2$
  is surely such a point.
  If $\suff{q,q'}\neq\omega$, then the belt for $(q,q')$ is vertical and by
  \cref{thm:belt-theorem-bounds}, point 2, we have $qn\notSIM{}{}q'n'$.
  Otherwise, the belt is not vertical and has slope $(\rho,\rho')\lesssteep
  (\alpha,\alpha')$. Then by point 1 of
  \cref{thm:belt-theorem-bounds}, we must have $qn\SIM{}{}q'n'$.

  To compute $\suff{q,q'}\in\N$ for a vertical belt 
  recall that by point 1 of \cref{thm:belt-theorem-bounds},
  $qn\notSIM{}{}q'n'$ for all points with $n>\Cacyc$. Clearly, this means that
  $\suff{q,q'}$ is bounded by $\Cacyc\le |Q\x Q'|$.
  By \cref{lem:ssim:uperiodic},
  the coloring on this belt
  must be repetitive from some exponentially bounded level $n'_0$ onwards.
  By monotonicity, this means that the coloring of the belt must
  have stabilized at this level already, so that for all $n'\ge n'_0$, we have
  $qn\SIM{}{}q'n'$ iff $n< \suff{q,q'}$.

  We can now iteratively check the color of the point
  $(n,n'_0)$ for decreasing values $n\in\N$, starting with $\Cacyc$.
  By \cref{thm:ssim-pspace}, this can surely be done in polynomial space.
  The value $\suff{q,q'}$ must be the largest value $n<\Cacyc$, such that
  $qn\notSIM{}{}q'n'_0$ holds.
\end{proof}

\subsection{Approximants}
  \label{sec:wsim-approximants}

\newcommand{\CA}{\mathit{C\!A}}
\newcommand{\CB}{\mathit{C\!B}}
The basic idea of our procedure for
checking \textSSIM\ between a \OCN\ and an $\omega$-net,
and therefore \textWSIM\ between two \OCN,
is to stepwise compute
semilinear over-approximations
$\SIM{}{i}\;\supseteq\;\ssim$. For such a procedure to be effective,
it is crucial that these approximants converge to $\ssim$
at some finite level, i.e., $\SIM{}{k}\;=\;\SIM{}{k+1}\;=\;\ssim$
for some $k<\omega$.
Unfortunately, the usual \textSSIM\ approximants (see \cref{def:normal_approximants})
do not have this property, as
\cref{ex:weaksim:nonconvergence}
(page~\pageref{ex:weaksim:nonconvergence})
shows.

%
We overcome this difficulty by generalizing
the notion of $\SIM{}{\alpha}$ simulation approximants in 
the case of simulation between one-counter and $\omega$-net processes.
This yields approximants that indeed converge at a finite level for any pair of nets.

First we define approximants $\SIM{\beta}{\alpha}$ in two (ordinal) dimensions.
From the game perspective the subscript $\alpha$
indicates the number of rounds \V\ can survive
and the superscript $\beta$ denotes the number of \emph{$\omega$-steps} \R\ may allow before she loses, where $\omega$-step is a step induced by a $\omega$-transition.
For example, $qn\SIM{2}{5}q'n'$ 
holds if
\V\ survives round 5 of the \sgame\ or makes his second $\omega$-move until then.
If not stated otherwise we assume that $\NN{N}=(Q,\Act, \delta)$ is a \textOCN\
and $\NN{N'}=(Q',\Act,\delta')$ is an $\omega$-net.

\begin{definition}\label{def:approximants}
    For ordinals $\alpha$ and $\beta$, the
    \emph{approximant} $\SIM{\beta}{\alpha}$ 
    is inductively defined as follows.
    Let $\SIM{0}{\alpha}\;=\;\SIM{\beta}{0}\;=\;Q\x\N\x Q'\x\N$, the full relation.
    For successor ordinals $\alpha+1,\beta+1$ let $pm \SIM{\beta+1}{\alpha+1} p'm'$ iff for all
    $pm\step{a}qn$ there is a step $p'm'\step{a}q'n'$ such that either
    \begin{enumerate}
        \item $(p',a,\omega, q')\in\delta'$,
            $m'<n'$
            and $qn\SIM{\beta}{\alpha} q'n'$, or
        \item
            $(p',a,d, q')\in\delta'$,
            $n'=m'+d \in \N$
          and $qn\SIM{\beta+1}{\alpha} q'n'$.
    \end{enumerate}
    For limit ordinals $\lambda$ we define
    $\SIM{\lambda}{\alpha} \;=\; \bigcap_{\beta<\lambda}\SIM{\beta}{\alpha}$ and
    $\SIM{\beta}{\lambda} \;=\; \bigcap_{\alpha<\lambda}\SIM{\beta}{\alpha}$.
    Finally,
    \begin{align}\label{def:big_approximants}
        \SIM{\beta}{} \;=\; \bigcap_{\alpha\in\Ord} \SIM{\beta}{\alpha}
        &&\SIM{}{\alpha} \;=\; \bigcap_{\beta\in\Ord} \SIM{\beta}{\alpha}.
    \end{align}
\end{definition}\smallskip

\noindent Notice that the approximant $\SIM{\beta+1}{\alpha+1}$ above
is defined in terms of both 
$\SIM{\beta+1}{\alpha}$ and
$\SIM{\beta}{\alpha}$.
The first condition in its definition
asks that if a response is via a $\omega$-step
then the resulting pair of processes need to be related by the approximant
with reduced superscript $\beta$.
The second condition is for the case where a response is via a step induced 
by an ordinary transition.

    The approximants $\SIM{}{\alpha}$ correspond to the usual notion of simulation approximants
    defined on page~\pageref{def:normal_approximants}
    and $\SIM{\beta}{}$ is a special notion derived from the syntactic peculiarity of
    $\omega$-transitions present in the game on one-counter vs.\ $\omega$-nets.

\begin{example}
    \label{ex:wsim:a-b-convergence}
    Consider the net that consists of a single $a$-labeled loop in state $A$ and the
    $\omega$-net with transitions $B\step{a,\omega}C\step{a,-1}C$ only.
    This is a variant of the system of \fullref{ex:weaksim:nonconvergence},
    but now we are interested in \emph{strong} \textSSIM.
    We see that for any $m,n\in\N$,
    $Am\SIM{}{n}Cn \notSIMBY{}{n+1}Am$. Moreover,
    $Am\SIM{}{\omega}Bn$ but $Am\notSIM{}{\omega+1}Bn$ and
    $Am\SIM{1}{}Bn$ but $Am\notSIM{2}{\omega+1}Bn$ and therefore $Am\notSIM{2}{}Bn$.
\end{example}

We will further use a game characterization of these approximants, similar to
the \sgames\ that characterize strong simulation.

Intuitively, $\SIM{i}{}$ is given by a \emph{parameterized simulation game} that keeps track of
how often \V\ uses $\omega$-steps and in which
\V\ immediately wins if he plays such a step the $i$th time.
It is easy to see that this game favors \V\ due to the additional winning condition.
Hence, $\forall i\in\N,\;\SIM{i}{}\;\supseteq\;\SIM{i+1}{}$.
With growing index $i$, this advantage becomes less important and the game
increasingly resembles a standard simulation game.

\begin{definition}\label{def:approxgame}
    An \emph{\agame} is played in rounds between \R\ and \V.
    Game positions are quadruples $(pm,p'm',\alpha,\beta)$ where $pm, p'm'$ are configurations
    of $\NN{N}$ and $\NN{N}'$ respectively, and $\alpha,\beta$ are ordinals called step- and $\omega$-counter.
    In each round that starts in $(pm,p'm',\alpha,\beta)$:
    \begin{itemize}
      \item \R\ chooses two ordinals $\hat{\alpha}<\alpha$ and $\hat{\beta}<\beta$,
      \item \R\ makes a step $pm\step{a}qn$,
      \item \V\ responds by making a step $p'm'\step{a}q'n'$ induced by a transition $t$.
    \end{itemize}
    If $t$ was an $\omega$-transition then the game continues from position $(qn,q'n',\hat{\alpha},\hat{\beta})$.
    Otherwise the next round starts at $(qn,q'n',\hat{\alpha},\beta)$
    (in this case \R's choice of $\hat{\beta}$ becomes irrelevant).
    If a player cannot move then the other player wins and if $\alpha$ or $\beta$ becomes $0$, \V\ wins.
\end{definition}

\begin{lemma}
    \label{approximants:game:monotonicity}
    If \V\ wins the \agame\ from $(pm,p'm',\alpha,\beta)$
    then he also wins the game from $(pm,p'm',\hat{\alpha},\hat{\beta})$ for any $\hat{\alpha}\le \alpha$
    and $\hat{\beta}\le\beta$.
\end{lemma}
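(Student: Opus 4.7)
The plan is to prove this by a direct strategy-copying argument: any winning strategy $\sigma$ for \V\ from $(pm,p'm',\alpha,\beta)$ can be literally re-used from $(pm,p'm',\hat\alpha,\hat\beta)$, because shrinking the initial ordinals only restricts \R's available moves and never enables a new play that \V\ was not already prepared to answer.

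Concretely, I would fix a winning strategy $\sigma$ for \V\ from $(pm,p'm',\alpha,\beta)$ and define \V's strategy $\sigma'$ from $(pm,p'm',\hat\alpha,\hat\beta)$ by mirroring plays: given any partial play $\pi$ from the smaller initial position in which \V\ follows $\sigma'$, I would construct by induction on the number of rounds a partial play $\pi^\star$ from the larger initial position in which \V\ follows $\sigma$, such that the two plays agree on every configuration component except the initial ordinals. In the inductive step, if \R\ picks ordinals $\hat\alpha'<\alpha_k$, $\hat\beta'<\beta_k$ and a step $p_km_k \step{a} q n$ in $\pi$ (where $\alpha_k,\beta_k$ are the current ordinal values, identical in both plays after round $k\geq 1$, and for $k=0$ we have $\alpha_0=\hat\alpha\leq\alpha$ and $\beta_0=\hat\beta\leq\beta$), then the same choices are legal in $\pi^\star$ because $\hat\alpha'<\alpha_k\leq\alpha$ and $\hat\beta'<\beta_k\leq\beta$. \V\ responds in $\pi^\star$ with the step $p'_km'_k \step{a} q'n'$ prescribed by $\sigma$, and we let $\sigma'$ prescribe exactly the same response in $\pi$.

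It remains to check that $\sigma'$ is winning. By construction, the sequence of configurations in $\pi$ and $\pi^\star$ (states, counter values, whether the transition used was an $\omega$-transition, and the ordinal values after the first round) coincide. Hence $\pi$ is infinite iff $\pi^\star$ is; \R\ is stuck in $\pi$ iff she is stuck in $\pi^\star$; \V\ is never stuck in $\pi^\star$ because $\sigma$ is winning, so $\sigma'$ always supplies a valid response; and the ordinal-components reach $0$ in $\pi$ iff they do so in $\pi^\star$ at the same round. In every case, \V\ wins $\pi$ precisely because he wins $\pi^\star$.

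There is no real obstacle here: the lemma is a pure monotonicity statement about the game's winning condition and is entirely mechanical once one observes that decreasing the ordinal budget shrinks \R's set of legal moves. Should a more uniform phrasing be preferred, one could alternatively argue by simultaneous (transfinite) induction on $(\alpha,\beta)$ directly on the definition of $\SIM{\beta}{\alpha}$, invoking the easy fact that $\SIM{\beta}{\alpha}\subseteq \SIM{\hat\beta}{\hat\alpha}$ for $\hat\alpha\leq\alpha$, $\hat\beta\leq\beta$, and then transferring back to the game via the correspondence that will be (or already has been) established between $\SIM{\beta}{\alpha}$ and the \agame.
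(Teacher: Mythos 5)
Your proof is correct and is essentially the paper's proof: \V\ literally reuses his winning strategy from the larger position, and the only things to check are that \R's ordinal choices remain legal in the mirrored play and that the winning condition transfers. One small inaccuracy: the ordinal components of the two mirrored plays need not \emph{coincide} after the first round, as you claim --- the $\beta$-component is only updated when \V\ answers via an $\omega$-transition, so until the first such step the two plays carry $\hat{\beta}$ and $\beta$ respectively; the correct invariant (and the one the paper states) is that the ordinal pair in the play from the smaller position is pointwise $\le$ the pair in the mirrored play, which still suffices, both because legality of \R's choices only needs this domination and because an ordinal reaching $0$ earlier in the smaller game is itself a win for \V.
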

\begin{proof}
If \V\ has a winning strategy in the game from $(pm,p'm',\alpha,\beta)$ then he can
use the same strategy in the game from
$(pm,p'm',\hat{\alpha},\hat{\beta})$ and maintain the invariant that the pair of
ordinals in the game configuration is pointwise smaller than the pair in the
original game. Thus \V\ wins from $(pm,p'm',\hat{\alpha},\hat{\beta})$.
\end{proof}

\begin{lemma}[Game Characterization]
    \label{lem:game_interpretation}
    \V\ has a strategy to win the \agame\
    that starts in $(pm,p'm',\alpha,\beta)$
    iff $pm\SIM{\beta}{\alpha}p'm'$.
\end{lemma}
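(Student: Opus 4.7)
The plan is to prove both implications simultaneously by transfinite induction on $\alpha$, with the statement quantified over all $\beta$ at each stage. This matches the recursive structure of Definition~\ref{def:approximants}: $\SIM{\beta+1}{\alpha+1}$ is defined only in terms of approximants with a strictly smaller first index, and limits on either coordinate are simply intersections. In the game, the induction mirrors the fact that every round strictly decreases the step-counter $\alpha$ (and either strictly decreases or leaves untouched the $\omega$-counter $\beta$).

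For the base case $\alpha=0$ (and symmetrically $\beta=0$) the approximant is the full relation by definition and \V\ wins immediately since the game rule declares him the winner as soon as an ordinal hits $0$. For the successor case $(\alpha+1,\beta+1)$, assume first $pm \SIM{\beta+1}{\alpha+1} p'm'$. I would describe \V's strategy as follows: after observing \R's ordinal choices $\hat\alpha < \alpha+1$, $\hat\beta < \beta+1$ and her move $pm \step{a} qn$, let \V\ take the response granted by Definition~\ref{def:approximants}, which is either an $\omega$-step to some $q'n'$ with $qn \SIM{\beta}{\alpha} q'n'$ or a finite-effect step with $qn \SIM{\beta+1}{\alpha} q'n'$. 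The game then continues in $(qn,q'n',\hat\alpha,\hat\beta)$ or $(qn,q'n',\hat\alpha,\beta+1)$, respectively; the induction hypothesis yields a winning strategy from $(qn,q'n',\alpha,\beta)$ or $(qn,q'n',\alpha,\beta+1)$, and Lemma~\ref{approximants:game:monotonicity} transfers that strategy to the actually reached (pointwise smaller) position. Conversely, specialising \R's choices to $\hat\alpha=\alpha$ and $\hat\beta=\beta$ turns a winning strategy from $(pm,p'm',\alpha+1,\beta+1)$ into, for every \R-move, a response of one of the two prescribed shapes; applying the IH in the other direction to the resulting position gives exactly the clauses needed to conclude $pm \SIM{\beta+1}{\alpha+1} p'm'$.

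The limit case, say $\alpha = \lambda$, is where the main subtlety lies. One direction is easy: if \V\ wins from $(pm,p'm',\lambda,\beta)$, then by game monotonicity he also wins from every $(pm,p'm',\alpha',\beta)$ with $\alpha'<\lambda$, so the IH gives $pm \SIM{\beta}{\alpha'} p'm'$ for every such $\alpha'$, whence $pm \SIM{\beta}{\lambda} p'm'$ by the intersection definition. For the other direction, I must synthesise a single winning strategy from the whole $\alpha'$-indexed family of strategies supplied by the IH. The crucial observation is that \R\ must announce $\hat\alpha$ \emph{before} \V\ responds, so \V's strategy is permitted to depend on $\hat\alpha$: upon seeing \R's choice of $\hat\alpha<\lambda$, \V\ invokes the winning strategy from $(pm,p'm',\hat\alpha+1,\beta)$ granted by the IH and uses its prescribed first response; all subsequent play takes place in a position bounded by $(\cdot,\cdot,\hat\alpha,\cdot)$, where that same subordinate strategy continues to win. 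The limit case on $\beta$ is handled symmetrically.

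The delicate point I expect to be the main obstacle is precisely this converse at limit ordinals: producing a uniform strategy from a family of strategies parametrised by $\alpha'<\lambda$. Without the game's asymmetry --- \R\ first committing to $\hat\alpha$, then \V\ moving --- the argument would fail, since \V\ cannot in general merge incompatible finite-ordinal strategies into a single one working against every possible \R-choice.
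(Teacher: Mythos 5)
Your treatment of the successor case and of the limit case on the step-counter $\alpha$ is sound, and you correctly identify why the ``announce $\hat\alpha$ first'' asymmetry lets \V\ merge the family of strategies indexed by $\alpha'<\lambda$. But the claim that the limit case on $\beta$ is ``handled symmetrically'' is where the proof breaks, and that case is in fact the genuinely hard step of the lemma. The two ordinal coordinates are not symmetric in the \agame: after a round the step-counter always becomes the announced $\hat\alpha$, whereas the $\omega$-counter becomes $\hat\beta$ only if \V\ chooses an $\omega$-step; after an ordinary step it keeps its old value. So if the position is $(pm,p'm',\alpha,\lambda)$ with $\lambda$ a limit and \V\ answers \R's move with an ordinary step, the game continues at $(qn,q'n',\hat\alpha,\lambda)$ --- the $\omega$-counter is still $\lambda$, and in the next round \R\ may announce a $\hat\beta'<\lambda$ larger than his previous $\hat\beta$. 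Hence \V\ cannot simply invoke the strategy for $(pm,p'm',\alpha,\hat\beta+1)$: that strategy only certifies $qn\SIM{\hat\beta+1}{\hat\alpha}q'n'$ after its ordinary-step responses, while what is needed to keep winning from $(qn,q'n',\hat\alpha,\lambda)$ is $qn\SIM{\gamma}{\hat\alpha}q'n'$ for \emph{every} $\gamma<\lambda$.

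The missing ingredient is a finite-branching/pigeonhole argument, which is exactly what the paper's proof supplies at this point: from $pm\SIM{\gamma}{\alpha}p'm'$ for all $\gamma<\lambda$, either some $\omega$-response already satisfies $qn\SIM{\hat\beta}{\hat\alpha}q'n'$ (and \V\ plays it, dropping to the dominated pair $(\hat\alpha,\hat\beta)$), or else for every $\gamma<\lambda$ the definition yields an ordinary response via some transition $t(\gamma)$; since there are only finitely many transitions, one of them occurs for an unbounded set of $\gamma<\lambda$, and by monotonicity of the approximants in the superscript that single response satisfies $qn\SIM{\lambda}{\hat\alpha}q'n'$. Without this step the $\beta$-limit case --- and hence, e.g., the level $\SIM{\omega}{\alpha}$ at which the lemma is actually used --- is unproven. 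A secondary, more easily repaired issue: your induction is on $\alpha$ alone with $\beta$ quantified, but the ``game implies relation'' direction at a $\beta$-limit $(\alpha,\lambda)$ needs the statement at $(\alpha,\beta')$ for $\beta'<\lambda$, i.e., at the \emph{same} $\alpha$; the paper avoids this by a well-founded induction on the pair $(\alpha,\beta)$ under pointwise domination, which you should adopt.
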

\begin{proof}
    We say a pair $(\alpha,\beta)\in \Ord^2$ of ordinals \emph{dominates}
    another such pair $(\alpha',\beta')$ iff
    $\alpha'\le \alpha$, $\beta'\le\beta$ and $(\alpha',\beta')\neq(\alpha, \beta)$.
    Both directions of the claim are now shown by well-founded induction on
    pairs of ordinals:
    If the claim holds for all pairs $(\alpha',\beta')$ that are
    dominated by
    $(\alpha,\beta)$ then it also holds for $(\alpha,\beta)$.

    For the ``if'' direction we assume $pm\SIM{\beta}{\alpha}p'm'$ and show that \V\ wins the game
    from $(pm,p'm',\alpha,\beta)$.
    In the base case of $\alpha=0$ or $\beta=0$ \V\ directly wins by definition.
    By induction hypothesis we assume that the claim is true for all pairs
    dominated by $(\alpha,\beta)$.
    \R\ starts a round by picking ordinals $\hat{\alpha}<\alpha$ and $\hat{\beta}<\beta$
    and moves $pm\step{a}qn$.
    We distinguish two cases, depending on whether $\beta$ is a limit or
    successor ordinal.

    \case{1} $\beta$ is a successor ordinal.
    By \cref{approximants:game:monotonicity}, we can safely assume that
    $\hat{\beta}=\beta-1$.
    By our assumption $pm\SIM{\beta}{\alpha}p'm'$ and \cref{def:approximants},
    there must be a response $p'm'\step{a}q'n'$ that is either due to an $\omega$-transition
    and then $qn\SIM{\hat{\beta}}{\hat{\alpha}}q'n'$ or due to an ordinary transition, in which case we have
    $qn\SIM{\beta}{\hat{\alpha}}q'n'$. In both cases, we know by the induction hypothesis
    that \V\ wins from this next position and thus also from the initial position.

    \case{2} $\beta$ is a limit ordinal.
    By $pm\SIM{\beta}{\alpha}p'm'$ and \cref{def:approximants},
    we obtain $pm\SIM{\gamma}{\alpha} p'm'\text{ for all }\gamma <\beta$.
    If $\alpha$ is a successor ordinal then, by \cref{approximants:game:monotonicity},
    we can safely assume that $\hat{\alpha}= \alpha -1$.
    Otherwise, if $\alpha$ is a limit ordinal, then, by
    \cref{def:approximants}, we have
    $pm\SIM{\gamma}{\ol{\alpha}} p'm'\text{ for all }\ol{\alpha} <\alpha$ and in
    particular $pm\SIM{\gamma}{\hat{\alpha}+1} p'm'$. So in either case we obtain  
    \begin{equation}
        \label{eq:game1}
        pm\SIM{\gamma}{\hat{\alpha}+1} p'm'\text{ for all }\gamma <\beta.
    \end{equation}
    If there is some $\omega$-step that allows a response
    $p'm'\Step{a}{}{\omega} q'n'$ that satisfies $qn\SIM{\hat{\beta}}{\hat{\alpha}}q'n'$,
    then \V\ picks this response and we can use the induction hypothesis
    to conclude that he wins the game from the next position.
    Otherwise, if no such $\omega$-step exists,
    \cref{eq:game1} implies that for every $\gamma <\beta$
    there is a response to some $q'n'$ via a step induced by a non-$\omega$-transition $t(\gamma)$
    and that satisfies $qn\SIM{\gamma}{\hat{\alpha}}q'n'$.
    Since $\beta$ is a limit ordinal, there exist infinitely many $\gamma < \beta$.
    By the pigeonhole principle, there must be one transition that occurs as
    $t(\gamma)$ for infinitely many $\gamma$,
    because there are only finitely many transitions in the net.
    Therefore, a response via a step induced by this particular transition
    satisfies $qn\SIM{\beta}{\hat{\alpha}}q'n'$.
    If \V\ uses this response, the game continues from position $(qn,q'n',\hat{\alpha},\beta)$
    and he wins by induction hypothesis.

    For the ``only if'' direction we show that $pm\notSIM{\beta}{\alpha}p'm'$ implies that \R\
    has a winning strategy in the \agame\ from $(pm,p'm',\alpha,\beta)$.
    In the base case of $\alpha=0$ or $\beta=0$ the implication holds
    trivially since the premise is false.
    By induction hypothesis, we assume that the implication is true
    for all pairs dominated by $(\alpha,\beta)$.
    Observe that if $\alpha$ or $\beta$ are limit ordinals then
    (by \cref{def:approximants}) there are successor ordinals
    $\hat{\beta}\leq \beta$ and $\hat{\alpha}\leq \alpha$ such that\
    $pm\notSIM{\hat{\beta}}{\hat{\alpha}}p'm'$. So without loss of
    generality we can assume that $\alpha$ and $\beta$ are successor ordinals.
    By the definition of approximants there must be a move $pm\step{a}qn$ such that
    \begin{itemize}
        \item for every response $p'm'\Step{a}{}{\omega}q'n'$ that uses
            some $\omega$-step we have $qn\notSIM{\beta-1}{\alpha-1}q'n'$,
        \item for every response $p'm'\step{a} q'n'$ via
            some step induced by not $\omega$-transition it holds that $qn\notSIM{\beta}{\alpha -1}q'n'$.
    \end{itemize}
    So if \R\ chooses $\hat{\alpha}=\alpha-1$, $\hat{\beta}=\beta-1$ and moves $pm\step{a}qn$
    then any possible response by \V\ will take the game to a
    position $(qn,q'n',\gamma,\hat{\alpha})$ for some $\gamma\le \beta$.
    By induction hypothesis \R\ wins the game.
\end{proof}
\begin{lemma}
        \label{lem:wsim:appr:props}
        For all ordinals $\alpha,\beta$ the following properties hold.
        \begin{enumerate}
            \item $pm\SIM{\beta}{\alpha}p'm'$ implies $pn\SIM{\beta}{\alpha}p'n'$
                for all $n\le m$ and $n'\ge m'$
                \label{lem:wsim:appr:props:monotonicity}
            \item If $\hat{\alpha}\geq \alpha$ and $\hat{\beta}\ge \beta$ then
                $\SIM{\hat{\beta}}{\hat{\alpha}}\;\subseteq\;\SIM{\beta}{\alpha}$\label{lem:wsim:appr:props:alpha-inc}.
            \item There are ordinals $\CA,\CB$ such that 
                $\SIM{}{\CA}\;=\;\SIM{}{\CA+1}$ and $\SIM{\CB}{}\;=\;\SIM{\CB+1}{}$.
                \label{lem:wsim:appr:props:convergence_ordinals}
            \item $\SIM{}{}\;=\;\bigcap_{\alpha}\SIM{}{\alpha} \;=\;\bigcap_\beta\SIM{\beta}{}$
                \label{lem:wsim:appr:props:convergence}
        \end{enumerate}
    \end{lemma}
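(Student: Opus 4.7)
\medskip
\noindent\emph{Proof plan.} Each of the four properties is established in turn: (1) and (2) by transfinite induction on the ordinal parameters, (3) by a cardinality argument on decreasing chains of relations, and (4) by combining a routine induction with a fixpoint argument.

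For (1), I would induct on the lexicographic pair $(\alpha,\beta)$. The base cases are immediate because $\SIM{0}{\alpha}$ and $\SIM{\beta}{0}$ are the full relation. In the successor case, translate moves using the step-monotonicity of $\omega$-nets recorded earlier in this section: any R-move available from the larger R-configuration corresponds, after a counter shift, to an R-move from the smaller one, and symmetrically a V-response from the smaller V-configuration lifts to a V-response from the larger. Applying the induction hypothesis to the resulting pairs of successor configurations closes the step, and limit cases are automatic from the definition as an intersection. Equivalently, the game characterization of \cref{lem:game_interpretation} lets one copycat V's winning strategy under a counter shift.

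For (2), induct on $(\hat\alpha,\hat\beta)$. The successor case is a direct reading of \cref{def:approximants}: a V-response witnessing membership of $(pm,p'm')$ in $\SIM{\hat\beta+1}{\hat\alpha+1}$ automatically witnesses its membership in $\SIM{\beta+1}{\alpha+1}$, since by the inductive hypothesis $\SIM{\hat\beta}{\hat\alpha}\subseteq\SIM{\beta}{\alpha}$ and $\SIM{\hat\beta+1}{\hat\alpha}\subseteq\SIM{\beta+1}{\alpha}$. The limit cases hold because intersections are antitone in the index set.

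For (3), by (2) the families $(\SIM{}{\alpha})_\alpha$ and $(\SIM{\beta}{})_\beta$ are weakly decreasing chains of subsets of the countable set $(Q\x\N)\x(Q'\x\N)$. Any strictly decreasing chain of subsets of a countable set has countable length, so both chains stabilize at some countable ordinals $\CA$ and $\CB$.

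For (4), the inclusion $\ssim \subseteq \bigcap_\alpha \SIM{}{\alpha}$ follows by a routine transfinite induction on $(\alpha,\beta)$, using only that $\ssim$ satisfies the strong simulation condition and propagating it through the successor and limit cases; the same argument yields $\ssim \subseteq \bigcap_\beta \SIM{\beta}{}$. For the reverse inclusions, the plan is to show that the fixpoint $R = \SIM{}{\CA} = \bigcap_\alpha \SIM{}{\alpha}$ is itself a strong simulation, which then gives $R\subseteq\ssim$ by maximality of $\ssim$. This verification is the main obstacle. Membership $(pm,p'm')\in R$ gives, for every $\beta$ and every R-move $pm\step{a}qn$, a V-response via some transition $t_\beta$ whose nature and target counter may a priori depend on $\beta$. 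Since $\NN{N}'$ has only finitely many transitions, a pigeonhole argument (in the spirit of Case~2 of the proof of \cref{lem:game_interpretation}) isolates one transition used cofinally often; in the non-omega case the target counter is fixed and ordinal-monotonicity from (2) suffices to promote membership at cofinally many $\beta$ to membership at all $\beta$, while in the omega case one combines ordinal-monotonicity with the counter-monotonicity from (1) to select a target that works for cofinally many $\beta$ and hence, by (2) again, for all $\beta$. The resulting response lies in $R$, as required. The argument for $\ssim = \bigcap_\beta \SIM{\beta}{}$ is entirely analogous, applied to the superscript chain.
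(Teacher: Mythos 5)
Points (1)--(3) and the inclusion $\ssim\;\subseteq\;\bigcap_\alpha\SIM{}{\alpha}$ are fine and essentially match the paper's proof (the paper routes (1) and (2) through the game characterization of \cref{lem:game_interpretation} and \cref{approximants:game:monotonicity} rather than a direct induction on \cref{def:approximants}, but both work; your cardinality argument for (3) is a harmless elaboration of the paper's "decreasing chains stabilize"). The problem is the hard direction of (4), where you show that $R=\bigcap_\alpha\SIM{}{\alpha}$ is a strong simulation: your pigeonhole argument breaks down exactly in the $\omega$-transition case. There, for each index $\beta$ you obtain some target counter value $n'_\beta$ with $(qn,q'n'_\beta)$ in the appropriate approximant, and you propose to ``select a target that works for cofinally many $\beta$'' by combining counter-monotonicity (1) with ordinal-monotonicity (2). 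But counter-monotonicity only says that \emph{larger} targets are at least as good, so a fixed target $n'$ works at cofinally many levels only if the minimal required values $n'_\beta$ are bounded --- and that is precisely the kind of uniformity that fails for $\omega$-branching in general: \cref{ex:weaksim:nonconvergence} exhibits a pair that belongs to every approximant below $\omega$ yet admits no single response landing back in the intersection. Nothing in your argument rules out that the $n'_\beta$ grow without bound, so the selection step is unjustified, and it is the crux of the whole difficulty.

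The gap is localized and repairable, but only by abandoning the pigeonhole: the right move is to reduce to a \emph{single} pair of successor indices before unfolding the definition, rather than unfolding at every index and trying to reconcile infinitely many $\omega$-responses afterwards. Concretely, the inner chain $(\SIM{\beta}{\CA})_\beta$ also stabilizes, at some $\CB'$, so $\SIM{}{\CA}=\SIM{\CB'}{\CA}$; then $(pm,p'm')\in R\subseteq\SIM{\CB'+1}{\CA+1}$, and one unfolding of \cref{def:approximants} for this one pair of indices yields a response whose target lies in $\SIM{\CB'}{\CA}$ or $\SIM{\CB'+1}{\CA}$, both equal to $R$ by stabilization. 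The paper's own route is a variant of this: it observes that $\SIM{}{\alpha}\;=\;\SIM{\gamma}{\alpha}$ for $\gamma\ge\alpha$ (the $\omega$-budget is never binding once it exceeds the round budget), so $R=\SIM{}{\CA}=\SIM{\CA}{\CA}=\SIM{\CA+1}{\CA+1}$, and a single unfolding step closes up. You should also record explicitly that $\bigcap_\alpha\SIM{}{\alpha}=\bigcap_\beta\SIM{\beta}{}$, which follows simply by commuting the two intersections, rather than re-running the simulation argument for the superscript chain.
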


    The first point states that individual approximants are monotonic with respect to the counter
    values.
    Points~\ref{lem:wsim:appr:props:alpha-inc}-\ref{lem:wsim:appr:props:convergence}
    imply that both $\SIM{}{\alpha}$ and $\SIM{\beta}{}$
    yield non-increasing sequences of approximants that converge towards simulation.
    As \fullref{ex:wsim:a-b-convergence} shows,
    the approximants $\SIM{}{\alpha}$ do not
    converge at finite levels, and not even at level $\omega$, i.e., $\CA>\omega$ in
    general. We will later show (in \cref{thm:wsim:approximants:convergence})
    that the approximants $\SIM{\beta}{}$
    converge at a finite level, i.e., $\CB$ is \emph{strictly} below $\omega$
    for any pair of nets,
    and further we bound $\CB$ in \cref{sec:wsim-complexity} to obtain an exact complexity
    upper bound.

    \begin{proof}\hfill
\begin{enumerate}
        \item By \cref{lem:game_interpretation}, it suffices
     to observe that \V\ can reuse a winning strategy in the \agame\
     from $(pm,p'm',\alpha,\beta)$ to win the game from $(pn,p'n',\alpha,\beta)$
     for naturals $n\le m$ and $n'\ge m'$.

     \item If $pm\SIM{\hat{\beta}}{\hat{\alpha}}p'm'$ then, by \cref{lem:game_interpretation}, \V\ wins
     the \agame\ from position $(pm,p'm', \hat{\beta}, \hat{\alpha})$.
     By \cref{approximants:game:monotonicity} he can also win the \agame\
     from $(pm,p'm', \beta,\alpha)$. Thus $pm\SIM{\beta}{\alpha}p'm'$ by \cref{lem:game_interpretation}.

    \item By point 2) we see that with increasing ordinal index $\alpha$ the approximant
     relations $\SIM{}{\alpha}$ form a decreasing sequence of relations,
     thus they stabilize for some ordinal $\CA$.
     The existence of a convergence ordinal for $\SIM{\CB}{}$ follows analogously.

     \item First we observe that
     $\bigcap_{\alpha}\SIM{}{\alpha}\;=\;\bigcap_{\alpha} \bigcap_\beta\SIM{\beta}{\alpha}\;=\;
     \bigcap_{\beta} \bigcap_\alpha \SIM{\beta}{\alpha}\;=\;
     \bigcap_\beta\SIM{\beta}{}$.
     It remains to show that $\SIM{}{}\;=\;\bigcap_{\alpha} \SIM{}{\alpha}$.
     In order to show $\SIM{}{}\;\supseteq\bigcap_{\alpha} \SIM{}{\alpha}$,
     we use $\CA$ from point \ref{lem:wsim:appr:props:convergence_ordinals}) and
     rewrite the right side of the inclusion to
     $\bigcap_{\alpha} \SIM{}{\alpha} \;=\; \SIM{}{\CA} \;=\; \SIM{}{\CA+1}$.
     From \cref{def:approximants} we get that $\SIM{}{\alpha}\;=\;\SIM{\gamma}{\alpha}$
     for $\gamma\ge \alpha$ and therefore
     $\SIM{\CA+1}{\CA+1}\:=\:\SIM{}{\CA+1}\:=\:\SIM{}{\CA}\:=\:\SIM{\CA}{\CA}$.
     We see that for every \R's move to a configuration $pn$ from a configuration in $\SIM{\CA}{\CA}$ 
there is \V's response to a configuration $p'n'$ such that $pn \SIM{\CA}{\CA}p'n'$.
     This means $\SIM{\CA}{\CA}\;=\bigcap_{\alpha}\SIM{}{\alpha}$ is a simulation
     relation and hence a subset of $\ssim$.

     To show $\SIM{}{}\;\subseteq \bigcap_{\alpha} \SIM{}{\alpha}$,
     we prove by induction that $\SIM{}{}\;\subseteq\;\SIM{}{\alpha}$ holds for all ordinals $\alpha$.
     The base case $\alpha=0$ is trivial.
     For the induction step we prove the equivalent property
     $\notSIM{}{\alpha}\:\subseteq\:\notSIM{}{}$.
     There are two cases.

     In the first case, $\alpha=\gamma+1$ is a successor ordinal.
     If $pm\notSIM{}{\gamma+1}p'm'$ then
     $pm\notSIM{\gamma+1}{\gamma+1}p'm'$ and therefore,
     by \cref{lem:game_interpretation},
     \R\ wins the \agame\ from
     $(pm,p'm',\gamma+1,\gamma+1)$.
     Let $pm\step{a}qn$ be an optimal initial move by \R.
     Now either there is no valid response and thus \R\ immediately wins
     in the \sgame;
     or for every \V\ response $p'm'\step{a}q'n'$
     that uses an $\omega$-step, we have
     $qn\notSIM{\gamma}{\gamma}q'n'$
     and for every response that does not use an $\omega$-move, we have
     $qn\notSIM{\gamma+1}{\gamma}q'n'$.
     Either way, we get
     $qn\notSIM{}{\gamma}q'n'$ and by induction
     hypothesis, $qn\notSIM{}{}q'n'$.
     By \cref{lem:game_interpretation}, we obtain that
     \R\ wins the \sgame\ from $(qn,q'n')$ and thus from $(pm,p'm')$.
     Therefore $pm\notSIM{}{}p'm'$, as required.

     In the second case, $\alpha$ is a limit ordinal. Then $pm\notSIM{}{\alpha}p'm'$ implies
     $pm\notSIM{}{\gamma}p'm'$ for some $\gamma< \alpha$
     and therefore $pm\notSIM{}{}p'm'$ by induction
     hypothesis.\qedhere
\end{enumerate}
     \end{proof}

\noindent The following lemma shows a certain uniformity property of the \sgame.
Beyond some fixed bound, an increased counter value of Spoiler can be
neutralized by an increased counter value of Duplicator, thus enabling Duplicator
to survive at least as many rounds in the game as before.
This lemma is necessary for the proof of \cref{thm:wsim:approximants:convergence},
which guarantees the existence of a finite bound for the convergence level $\CB$.

    \begin{lemma}\label{thm:approximants:colouring}
        For any \textOCN\ $\NN{N}=(Q,\Act,\delta)$ and $\omega$-net $\NN{N'}=(Q',\Act,\delta')$
        there is a fixed bound $c\in\N$ such that~for all states $(p,p')\in Q\x Q'$,
        naturals $n>m>c$
        and ordinals $\alpha$:
         \begin{equation}
             \label{thm1:eq}
             \forall m'.\ ( pm\SIM{}{\alpha} p'm' \implies \exists n'.\ pn\SIM{}{\alpha} p'n')
         \end{equation}
    \end{lemma}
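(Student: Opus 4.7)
I plan to reformulate the lemma as a statement about the downward-closed existence set $E(p,p',\alpha) := \{m \in \N : \exists m'.\ pm \SIM{}{\alpha} p'm'\}$, which is downward-closed by point 1 of \cref{lem:wsim:appr:props}. The goal then becomes producing a bound $c$, depending only on $\NN{N}$ and $\NN{N'}$, such that whenever $E(p,p',\alpha)$ contains any element strictly above $c$ it must in fact equal all of $\N$. This reformulation directly yields the lemma: if $pm\SIM{}{\alpha}p'm'$ with $m>c$ and $n>m$, then $n\in E(p,p',\alpha)$ gives the required witness $n'$.

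First I would reduce to a pair of OCNs in order to invoke the Belt Theorem. Build an auxiliary OCN $\widehat{\NN{N'}}$ from $\NN{N'}$ by replacing each $\omega$-transition $(q,a,\omega,q')$ with a gadget: a fresh state $\hat{q}$ carrying a silent $+1$ self-loop, an $a$-transition $q\to\hat{q}$, and a silent transition $\hat{q}\to q'$. The gadget emulates Duplicator's ability to arbitrarily boost the counter, so the maximal simulation $\widehat{\ssim}$ of $(\NN{N},\widehat{\NN{N'}})$ satisfies $pm\,\widehat{\ssim}\,p'm''\implies pm\ssim p'm''$. Applying \cref{thm:belt-theorem-bounds} to $(\NN{N},\widehat{\NN{N'}})$ yields, for every pair $(p,p')$, a belt with slope coefficients bounded by $\Cacyc$; by \cref{lem:compute-vbelts} the vertical-belt value $\suff{p,p'}$ is likewise polynomially bounded. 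I would take $c$ to be the maximum of these bounds plus a polynomial slack covering the initial rectangle. If the belt for $(p,p')$ in $\widehat{\ssim}$ is non-vertical, then for every $n\in\N$ there is $n''$ with $pn\,\widehat{\ssim}\,p'n''$, hence $pn\ssim p'n''$, hence $pn\SIM{}{\alpha}p'n''$, so $E(p,p',\alpha)=\N$ for every $\alpha$ and the lemma is immediate.

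The hard part is the vertical-belt case, where $\suff{p,p'}\le c$ forces $E(p,p',\ssim)$ into $\{0,\ldots,c\}$, yet an intermediate $E(p,p',\alpha)$ might a priori contain some $m>c$. I plan to rule this out by a pumping argument on Duplicator's $\alpha$-surviving strategy. Because $\NN{N}$ is in normal form and non-blocking, Spoiler's transitions all have effect in $\{-1,0,+1\}$ and his move-sets from $pm$ and $pn$ coincide for $m\ge 1$. Duplicator's strategy from $(pm,p'm')$ can therefore be mimicked from $(pn,p'n')$ with a preserved counter offset, using $\omega$-transitions in $\NN{N'}$ to absorb any extra counter whenever the mimic would otherwise fail. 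A pigeonhole argument on the $|Q\x Q'|$ control-state pairs shows that the strategy tree contains a cycle within polynomially many rounds, and analyzing the effects of such cycles through the slope-game framework of \cref{def:slope_game} forces the cycle vector to not be behind the belt's vertical slope, contradicting the vertical-belt hypothesis. Hence $E(p,p',\alpha)$ cannot hover at a value strictly between $c$ and $\infty$, which settles the vertical-belt case and completes the proof.
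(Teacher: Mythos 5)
Your reformulation via the downward-closed sets $E(p,p',\alpha)$ is the right first move, but the argument you build on top of it has two genuine gaps. First, the gadget reduction fails: replacing an $\omega$-transition by an $a$-step into a fresh state $\hat q$ with a silent $+1$ self-loop does not emulate an $\omega$-step under \emph{strong} simulation, since \V\ takes exactly one transition per round. After entering $\hat q$ she sits in a state whose outgoing transitions are all $\tau$-labelled and loses as soon as \R\ plays any other action; she never gets to pump the counter. So $\widehat{\ssim}$ has no useful relation to $\ssim$ on the original states, \cref{thm:belt-theorem-bounds} (stated only for pairs of \OCNs) cannot be imported this way, and your non-vertical-belt case collapses. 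Second, and more fundamentally, your ``hard part'' asserts the very statement to be proven: you claim \V's $\alpha$-surviving strategy from $(pm,p'm')$ ``can be mimicked from $(pn,p'n')$ with a preserved counter offset.'' Although \R's one-step move sets from $pm$ and $pn$ coincide for $m\ge 1$, over a longer play \R\ starting from $pn$ has strictly more available paths (those with guard in $(m,n]$), and \V's original strategy prescribes no response to them; $\omega$-transitions raise \emph{\V's} counter and do nothing to cover \R's enlarged strategy space. This is exactly why the threshold $c$ must exist at all, and the pigeonhole/slope-game sketch (machinery the paper develops only for \OCN\ pairs and for the plain simulation game, not for ordinal-indexed approximant games on infinitely branching systems) does not close it.

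Ironically, your reformulation nearly finishes the proof by itself, and no belt machinery is needed. Fix a pair $(p,p')$. By points~\ref{lem:wsim:appr:props:monotonicity} and~\ref{lem:wsim:appr:props:alpha-inc} of \cref{lem:wsim:appr:props}, each $E(p,p',\alpha)$ is downward closed in $\N$ and the family is non-increasing in $\alpha$, so each set is either all of $\N$ or a finite initial segment. If some $E(p,p',\alpha)$ is proper, let $\alpha_0$ be the least such ordinal and let $c_{p,p'}$ bound the finite set $E(p,p',\alpha_0)$; then any $\alpha$ for which $E(p,p',\alpha)$ contains an element above $c_{p,p'}$ satisfies $\alpha<\alpha_0$ and hence $E(p,p',\alpha)=\N$. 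Taking the maximum of $c_{p,p'}$ over the finitely many pairs gives the global $c$. The paper's own proof is precisely such an order-theoretic argument (phrased via the ordinals $I(m,m')$ and the stabilizing sequence of suprema $S(m)$, with a case split on whether the suprema are attained). It is non-constructive and yields no explicit bound on $c$ --- but the lemma requires none; the polynomial bounds used later in \cref{sec:wsim-complexity} are obtained by a different route.
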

    \begin{proof}
        It suffices to show the existence of a local bound $c$
        that satisfies \eqref{thm1:eq}
        for any given pair of states, since
        we can simply take the global $c$ to be the maximal such bound
        over all finitely many pairs.
        Let $\CA$ be the convergence ordinal provided
        by \cref{lem:wsim:appr:props}, point~\ref{lem:wsim:appr:props:convergence_ordinals}
        and consider a fixed pair $(p,p')\in (Q\x Q')$ of states.
        For $m,m'\in\N$, we define the following (sequences of) ordinals.
        \begin{align*}
            I(m,m') = &\ \text{the largest ordinal $\alpha$ with } pm\SIM{}{\alpha} p'm'
            \text{ or }\CA\\ &\ \text{if no such $\alpha$ exists},\\
            I(m) = &\ \text{the increasing sequence of ordinals $I(m,m')_{m'\ge 0}$},\\
            S(m) = &\ \sup\{I(m)\}.
        \end{align*}

\noindent Observe that $I(m,m')$ can be presented as an infinite matrix where $I(m)$ is a column and
        $S(m)$ is the limit of the sequence of elements of column $I(m)$ looking upwards.
        Informally, $S(m)=lim_{i\rightarrow \infty} I(m,i).$

\begin{center}
 \includegraphics[angle=0]{./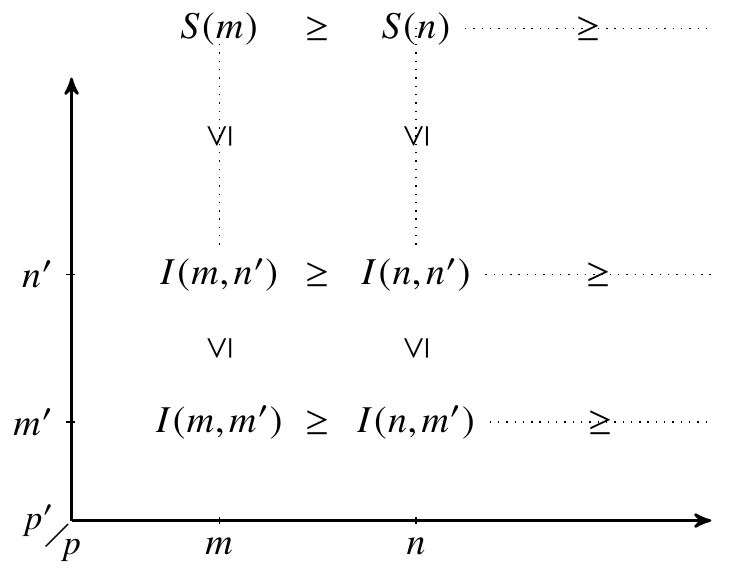}
\end{center}

        By \cref{lem:wsim:appr:props} (point \ref{lem:wsim:appr:props:monotonicity}), we derive
        that for any $m\le n\in\N$ and $m'\le n'\in\N$
        \begin{equation}
            I(m,n')\ge I(m,m') \ge I(n,m')\label{eq:approximants:thm1:mon1}
        \end{equation}
        and because of two inequalities, also that $S(m)\ge S(n)$.
        So the ordinal sequence $S(m)_{m\ge0}$ of suprema must be non-increasing and by the
        well-ordering of the ordinals there is a smallest index $k \in \N$ at
        which this sequence stabilizes:
        \begin{equation}
            \forall l>k.\ S(l) = S(k).
        \end{equation}  
        We split the remainder of this proof into three cases depending on whether $I(k)$ and $I(l)$
        for some $l>k$ have maximal
        elements. In each case we show the existence of a bound $c$ that satisfies requirement \eqref{thm1:eq}.

        \case{1} For all $l\ge k$ and $m'\in\N$ it holds that $I(l,m')<S(l)$,
        i.e., no $I(l)$ has a maximal element.
        In this case $c := k$ satisfies the requirement \eqref{thm1:eq}.
        To see this, take $n>m>c=k$ and $pm\SIM{}{\alpha} p'm'$.
        Then, by our assumption, $\alpha<S(m)$ and $S(m)=S(n)=S(k)$.
        Therefore $\alpha < S(n)$, which means that there must exist an $n' \in \N$ such that
        $pn\SIM{}{\alpha} p'n'$, as required.

        \case{2} For all $l\ge k$ there is a $n'_l \in \N$ such that $I(l,n'_l) = S(l)$,
        i.e., all $I(l)$ have maximal element $S(l)=S(k)$.
        Again $c := k$ satisfies the requirement \eqref{thm1:eq}.
        Given $n>m>c=k$ and $pm\SIM{}{\alpha}p'm'$ we let $n' := n'_{n}$ and obtain
        $I(n,n') = S(n) = S(k) \ge\alpha$ and thus $pn\SIM{}{\alpha} p'n'$, as required.

        \case{3} If none of the two cases above holds then there must exist
        some $l > k$ such that~the sequences $I(k), \dots, I(l-1)$ each have a maximal element and for
        $i >l$ the sequence $I(i)$ has no maximal element.
        To see this, consider sequences $I(m)$ and $I(n)$ with $n > m \ge k$.
        If $I(n)$ has a maximal element then so must $I(m)$, by
        \cref{eq:approximants:thm1:mon1} and $S(m)=S(n)=S(k)$.
        Given this, we repeat the argument for the first case, with $c:=l$ and
        again satisfy the requirement \eqref{thm1:eq}.
    \end{proof}

\begin{lemma}\label{thm:wsim:approximants:convergence}
    Consider strong simulation $\SIM{}{}$ between a \OCN\
    $\NN{N}=(Q,\Act,\delta)$ and an $\omega$-net $\NN{N'}=(Q',\Act,\delta')$.
    There exists a constant $\CB\in\N$ such that $\SIM{}{}\:=\:\SIM{\CB}{}$.
\end{lemma}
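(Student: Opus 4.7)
The plan is to prove the theorem by contradiction. Lemma~\ref{lem:wsim:appr:props}(3) already provides a (possibly transfinite) convergence ordinal $\CB$; the task is to rule out $\CB \geq \omega$. Suppose for contradiction that $\SIM{k}{} \supsetneq \SIM{k+1}{}$ for every $k \in \N$, and for each $k$ pick a witness $(p_k m_k, p'_k m'_k) \in \SIM{k}{} \setminus \SIM{k+1}{}$. By pigeonhole on the finite set $Q \x Q'$, pass to a subsequence in which the control pair $(p, p')$ is constant along the witnesses.

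The strategic aim is to exhibit two indices $k < l$ with $m_k \leq m_l$ and $m'_k \geq m'_l$, since monotonicity (Lemma~\ref{lem:wsim:appr:props}(1)) would then convert $(pm_l, p'm'_l) \in \SIM{l}{} \subseteq \SIM{k+1}{}$ into $(pm_k, p'm'_k) \in \SIM{k+1}{}$, contradicting the choice of the $k$-th witness. By successively extracting non-decreasing sub-sequences (using that every infinite sequence in $\N$ admits a non-decreasing subsequence), one may assume that $(m_k)$ is non-decreasing; if moreover $(m'_k)$ is bounded, a further extraction makes it eventually constant, and the desired inversion $m'_k \geq m'_l$ is immediate for any $k < l$ in the tail.

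The genuinely hard case is when both $(m_k)$ and $(m'_k)$ strictly increase to infinity; here pure monotonicity supplies no inversion. The plan to overcome this obstacle is to invoke Lemma~\ref{thm:approximants:colouring} together with the ordinal convergence $\SIM{}{\CA} = \SIM{}{}$ from Lemma~\ref{lem:wsim:appr:props}(3) applied in the $\alpha$-dimension. Above the uniform threshold $c$ provided by the colouring lemma, the existence of some Duplicator counter making $pm$ strongly simulated depends only on the control pair $(p, p')$ and not on the exact value $m$. Combined with the monotone structure of the chain $(\SIM{k}{})_k$ and a pigeonhole argument on the finitely many possible configurations of the colouring restricted to the region $m > c$, one concludes that the relations $\SIM{k}{} \cap \{(pm, p'm') : m > c\}$ must stabilize after finitely many $k$, which yields the desired contradiction.

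The main technical obstacle lies precisely in this last stabilization step: converting the pointwise existential statement provided by Lemma~\ref{thm:approximants:colouring} into a uniform-in-$m$ bound on Duplicator's required counter. Closing it requires a careful Ramsey-style analysis along the ordinal dimension, paired with the finite-state structure of the product graph, showing that only finitely many genuinely distinct ``failure modes'' for the approximants can occur before $\SIM{\beta}{}$ must stabilize; the quantitative refinement of this analysis will moreover yield the explicit bound on $\CB$ needed for the complexity arguments in \Cref{sec:wsim-complexity}.
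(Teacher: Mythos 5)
Your proposal correctly reduces the problem to the case where, along a sequence of witnesses $(pm_k,p'm'_k)\in\;\SIM{k}{}\setminus\SIM{k+1}{}$ with a fixed control pair, both $(m_k)$ and $(m'_k)$ tend to infinity: the Dickson-style extractions and the use of monotonicity in the easy cases are fine. But that remaining case is where the entire content of the lemma lives (cf.\ the diagonal behaviour in \cref{ex:wsim:a-b-convergence}), and your plan for it does not close. Concretely, \cref{thm:approximants:colouring} is a statement about the round-indexed approximants $\SIM{}{\alpha}$ (for $n>m>c$, if some $m'$ works at level $\alpha$ then some $n'$ works at level $\alpha$); it says nothing directly about the $\omega$-step-indexed approximants $\SIM{k}{}$, so the claimed ``pigeonhole on the finitely many possible configurations of the colouring restricted to $m>c$'' has no finite object to pigeonhole on, and no mechanism is given for translating the pointwise existential of the colouring lemma into stabilization of the chain $(\SIM{k}{})_k$. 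You acknowledge this yourself by deferring the step to an unspecified ``Ramsey-style analysis''; as written, the proof is not complete.

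For comparison, the paper's argument avoids your witness sequence entirely and works inside a single play. It fixes $\beta=|Q\x Q'|(c+1)$, takes one pair with $p_0m_0\SIM{\beta}{}p'_0m'_0$ but $p_0m_0\notSIM{}{}p'_0m'_0$, and lets \R\ play an optimal winning strategy for the plain \sgame, along which the simulation level $\alpha_i$ (the largest $\alpha$ with $p_im_i\SIM{}{\alpha}p'_im'_i$) strictly decreases. Since $p_0m_0\SIM{\beta}{}p'_0m'_0$, \V\ can force a play containing $\beta$ many $\omega$-steps; pigeonhole then yields a control pair recurring $c+1$ times immediately after $\omega$-steps, with \R's counter strictly increasing at each recurrence and hence eventually exceeding $c$. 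At that point \cref{thm:approximants:colouring} says \V\ could have used his $\omega$-step to jump to a counter value preserving the simulation level, contradicting its strict decrease. This is the bridge between the $\alpha$-dimension (where the colouring lemma lives) and the $\beta$-dimension (where convergence is claimed) that your sketch is missing; if you want to salvage your route, you would essentially have to rebuild this game-theoretic bridge inside the ``hard case''.
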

\begin{proof}
We assume the contrary and derive a contradiction. By
\cref{lem:wsim:appr:props}, part~\ref{lem:wsim:appr:props:convergence},
the inclusion
$\SIM{}{}\;\subseteq\;\SIM{\beta}{}$ always holds for every ordinal $\beta$.
Thus, if there is no $\CB\in\N$ with $\SIM{}{}\;=\;\SIM{\CB}{}$,
then for every finite
$\beta \in \N$ there are processes $p_0m_0$ and $p'_0m'_0$
such that $p_0m_0\SIM{\beta}{}p'_0m'_0$ but $p_0m_0 \notSIM{}{} p'_0m'_0$.
In particular, this holds for the special case of $\beta = |Q\x Q'|(c +1)$,
where $c$ is the constant given by
\cref{thm:approximants:colouring}, which we consider in the
rest of this proof.

Since $p'_0m'_0$ does not simulate $p_0m_0$, we can assume a winning strategy for \R\ in the
\sgame\ which is optimal in the sense that it guarantees that
the simulation level $\alpha_i$ -- the largest ordinal with
$p_im_i\SIM{}{\alpha_i}p'_im'_i$ -- strictly decreases
along rounds of any play.
By monotonicity (\cref{lem:wsim:appr:props},
part~\ref{lem:wsim:appr:props:monotonicity}),
we can thus infer that whenever a pair of control states repeats along a play, then \V's
counter must have decreased or \R's counter must have increased:
Along any partial play
\begin{equation}
(p_0m_0,p'_0m'_0)(t_0,t'_0)(p_1m_1,p'_1m'_1)(t_t,t'_1)\dots(p_km_k,p'_km'_k)
\end{equation}
with
$(p_i,p'_i)=(p_k,p'_k)$ for some $i<k$, we have $m_i<m_k$ or
$m'_i > m'_k$.
By a similar argument we can assume that \V\ also plays
optimally, in the sense that he uses $\omega$-steps to increase
his counter to higher values than in previous situations with the same
pair of control states. By combining this with the previously stated
property that the sequence of $\alpha_i$ strictly decreases we obtain
the following:
\begin{equation}
    \label{eq:approximants:thm2:loop}
    \text{If } (p_i,p'_i) = (p_k, p'_k) \text{ and }
    t'_{i-1},t'_{k-1}\in\delta_\omega' \text{ then } m_i < m_k.
\end{equation}
Here $\delta_\omega'$ denotes the set of transitions with symbolic effect $\omega$ in \V's
net $\NN{N}'$.

Although \V\ loses the \sgame\ between $p_0m_0$ and $p'_0m'_0$, 
our assumption $p_0m_0\SIM{\beta}{}p'_0m'_0$
with $\beta = |Q\x Q'|(c +1)$ means that \V\ can ensure that
no play with fewer than $\beta$ $\omega$-steps is losing for him,
regardless of \R's strategy.
So we can safely assume that there is a play in \R's supposed
optimal winning strategy
along which \V\ makes use of
$\omega$-steps $\beta$ times.
Let $\pi = (p_0m_0,p'_0m'_0)(t_0,t'_0)(p_1m_1,p'_1m'_1)(t_t,t'_1)$ $\dots(p_km_k,p'_km'_k)$
be such a play.

Our choice of $\beta = |Q\x Q'|(c +1)$ guarantees that some pair
$(p,p')$ of control states repeats at least $c+1$
times directly after \V\ making an $\omega$-step.
Thus there are indices $i(1)<i(2)<\dots <i(c+1)<k$ such that~for all
$1\le j\le c+1$ we have $(p_{i(j)},p'_{i(j)})=(p,p')$ and
$t'_{i(j)}\in\delta_\omega$.
By observation~\eqref{eq:approximants:thm2:loop} and $m_0 \ge 0$ we
obtain that $m_{i(x)} \ge x$ for all $x$ with $0 \le x \le c+1$.
In particular, $c \le m_{i(c)}<m_{i(c+1)}$, that is, both of \R's counter values after the last two such repetitions must lie above $c$.
This allows us to apply \cref{thm:approximants:colouring} to derive a contradiction.

Let $\alpha$ be the simulation level before this repetition: $\alpha$ is the
largest ordinal that satisfies $pm_{i(c)}\SIM{}{\alpha}p'm'_{i(c)}$.
Since $m_{i(c+1)} > m_{i(c)} > c$,
\cref{thm:approximants:colouring} ensures the existence of a natural $n'$
such that~$pm_{i(c +1)}\SIM{}{\alpha}p'n'$.
Because \V\ used an $\omega$-step in his last response leading to the repetition of
states there must be a partial play $\pi'$ in which both players make the same moves as
in $\pi$ except that \V\ chooses $m'_{i(c +1)}$ to be $n'$.
Now in this play we observe that the simulation level did in fact not strictly decrease as this
last repetition of control states shows: We have
$pm_{i(c)}\SIM{}{\alpha}p'm'_{i(c)}\not\nsucceq_{\alpha+1} pm_{i(c)}$ and
$pm_{i(c +1)}\SIM{}{\alpha}p'm'_{i(c +1)}$, which contradicts the
assumed optimality of \R's strategy.
\end{proof}

To conclude this section on approximants, we show that
ordinary \textWSIM\ approximants $\WSIM{}{\alpha}$ indeed converge at level $\omega^2$
for any pair of \OCNs.
For this, let us observe a property of the nets constructed in the reduction
\cref{thm:wsim:reduction}.

    \begin{lemma}
        \label{rem:translation-approximants}
    Let $\NN{N},\NN{N}'$ be two \textOCNs\ and 
    $\NN{M},\NN{M}'$ the pair of \OCN\ and $\omega$-net constructed in the
    proofs of \cref{lem:app_reduction:GON,lem:app_reduction:normalize}.  Then,
    \begin{equation}
      \mbox{If }qn\WSIM{}{\alpha} q'n'\mbox{ w.r.t. } \NN{N},\NN{N'}
      \mbox{ then }qn\SIM{}{\alpha}q'n'\mbox{ w.r.t. } \NN{M},\NN{M'}.
      \label{thm:wsim:reduction:ordinals}
    \end{equation}

    \todo[inline]{PT: this is the second claim of \cref{thm:wsim:reduction}}
    \end{lemma}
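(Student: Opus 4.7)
The approach is transfinite induction on $\alpha$, routed through the intermediate guarded $\omega$-net $\NN{G}'$ from the proof of \cref{lem:app_reduction:GON}. I would split the claim into two stages: (a) $qn \WSIM{}{\alpha} q'n'$ w.r.t.\ $\NN{N}, \NN{N}'$ implies $qn \SIM{}{\alpha} q'n'$ w.r.t.\ $\NN{N}, \NN{G}'$; and (b) the same implication transfers from $(\NN{N}, \NN{G}')$ to $(\NN{M}, \NN{M}')$. The base case $\alpha=0$ is trivial in both stages, and limit ordinals follow directly from the definition of approximants combined with the induction hypothesis.

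For stage (a), consider the successor case $\alpha = \beta + 1$. A Spoiler step $qn \step{a} rm$ in $\NN{N}$ triggers, by hypothesis, a weak Duplicator response $q'n' \wstep{a} r'm'$ in $\NN{N}'$ with $rm \WSIM{}{\beta} r'm'$. Part~2 of \cref{L1} collapses this weak step into a single transition $q'n' \step{a} r'l$ in $\NN{G}'$ with $l \ge m'$, and monotonicity of approximants (\cref{lem:wsim:appr:props}, part~\ref{lem:wsim:appr:props:monotonicity}) yields $rm \WSIM{}{\beta} r'l$. The induction hypothesis then gives $rm \SIM{}{\beta} r'l$ w.r.t.\ $\NN{N}, \NN{G}'$, which closes the one-round step for the strong game on these nets.

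For stage (b), \cref{lem:app_reduction:normalize} unfolds each $\NN{G}'$-transition into a chain of $k = 2\Gamma(\NN{G}') + \Delta(\NN{G}') + 1$ steps in $\NN{M}'$, matched by $b$-padding in $\NN{M}$. The technical hurdle here is the appearance of intermediate positions $(r_i m, t_j m')$ along such a chain, at which the induction hypothesis (naturally stated only at original state pairs) does not apply directly. I would therefore strengthen the induction invariant to cover intermediate chain positions: Spoiler's chain moves are forced (each $r_i$ has a single outgoing $b$-transition with zero effect), and Duplicator's responses are forced modulo a single $\omega$-choice (present only when the underlying $\NN{G}'$-transition has symbolic effect $\omega$), which Duplicator sets so as to match the target counter $l$. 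The guard condition $m' \ge g$---built into the fact that the chosen $\NN{G}'$-transition is enabled at $m'$---ensures that Duplicator's counter remains non-negative throughout the chain, so no round along the chain causes Duplicator to get stuck. After the $k$ augmented rounds the play returns to original states with counters matching those of the single $\NN{G}'$-round, and the induction hypothesis at original states completes the argument.

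The main obstacle is stage (b): reconciling one round of the strong game on $\NN{N}, \NN{G}'$ with $k$ rounds on $\NN{M}, \NN{M}'$. Because $k \ge 1$, survival for $\alpha$ rounds on the unaugmented side suffices for survival for $\alpha$ rounds on the augmented side, so approximant levels transfer without loss; but making this precise requires the strengthened invariant on intermediate chain positions just sketched, which is the technical heart of the proof.
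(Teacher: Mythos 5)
Your proof is correct and follows essentially the same route as the paper's: factoring through the guarded $\omega$-net $\NN{G}'$, and observing that one weak round on $\NN{N},\NN{N}'$ becomes one strong round on $\NN{N},\NN{G}'$ and then $k\ge 1$ rounds on $\NN{M},\NN{M}'$, so approximant levels can only be preserved or increased by the translation (the paper phrases this contrapositively, via transferring \R's winning strategies, whereas you run the induction on \V's side with the strengthened invariant on intermediate chain positions). One small slip: converting \V's weak response in $\NN{N}'$ into a single $\NN{G}'$-transition with at least as large a target counter is part~1 of \cref{L1}, not part~2.
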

    \begin{proof}
     It suffices to observe that
    the construction of $\NN{M},\NN{M}'$, presented in Lemma \ref{lem:app_reduction:normalize}, is such that
    one round of a \sgame\ w.r.t.~$\NN{N}$ and the guarded $\omega$-net $\NN{G}'$ is simulated by $k$ rounds of a
    \sgame\ w.r.t.\ $\NN{M},\NN{M}'$.
    On the other hand the construction of $\NN{N},\NN{G}'$, presented in Lemma \ref{lem:app_reduction:GON}, guaranties that 
    one round of a \wsgame\ w.r.t.~$\NN{N},\NN{N}'$ is simulated by $1$ round of a
    \sgame\ w.r.t.\ $\NN{N},\NN{G}'$.
    Thus, if \R\ has a strategy to win the \sgame\ relative to $\NN{M},\NN{M}'$ in $\alpha$
    rounds then she can derive strategies to win the games relative to $\NN{N},\NN{G}'$
    and to $\NN{N},\NN{N'}$ in no more than $\alpha$ rounds.
    \end{proof}

\begin{lemma}\label{lem:wsim:approximants:alpha-beta-connection}
For relations between a \OCN\ and an $\omega$-net,
we have $\SIM{}{\omega i}\;\subseteq\;\SIM{i}{}$ for every $i \in \N$.
\end{lemma}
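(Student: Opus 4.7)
I plan to prove the inclusion by a direct game-theoretic argument via the contrapositive. The base case $i = 0$ is immediate, as both $\SIM{}{0}$ and $\SIM{0}{}$ equal the full relation on configurations. For $i \geq 1$, suppose $(pm, p'm') \notin \SIM{i}{}$; by the game characterization (\cref{lem:game_interpretation}) there exist an ordinal $\alpha$ and a Spoiler winning strategy $\sigma$ in the approximant game from $(pm, p'm', \alpha, i)$. My goal is to exhibit a winning Spoiler strategy in the game from $(pm, p'm', \omega i, i)$, which will give $(pm, p'm') \notin \SIM{i}{\omega i}$ and therefore $(pm, p'm') \notin \SIM{}{\omega i}$ via the inclusion $\SIM{}{\omega i} \subseteq \SIM{i}{\omega i}$ (the latter being a direct specialization of $\bigcap_\beta \SIM{\beta}{\omega i}$ at $\beta = i$).

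The key observation is that along every play consistent with $\sigma$, Duplicator uses at most $i - 1$ $\omega$-moves: more would drive the counter $\beta$ (starting at $i$, decreased by at least $1$ per $\omega$-move since Spoiler must pick $\hat\beta < \beta$) to $0$, causing Duplicator to win and contradicting $\sigma$. I will use this bound to establish that the ordinal rank of Spoiler's game tree under $\sigma$ is strictly below $\omega i$. The rank bound is proved by induction on $k$, the maximum number of Duplicator $\omega$-moves per play, showing rank $< \omega \cdot (k+1)$. For $k = 0$ the tree is finitely branching (finite transitions in both nets) with only finite paths, hence finite by K\"onig's lemma and of finite rank; for the step $k \to k+1$, each $\omega$-move leads by the inductive hypothesis to a subgame of rank $< \omega (k+1)$, making the $\omega$-move node itself have rank at most $\omega(k+1)$, and within each maximal $\omega$-move-free phase of the tree (finitely branching, of some finite depth $d$) the rank accumulates to at most $\omega(k+1) + d < \omega(k+2)$.

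Specializing to $k = i - 1$ yields that Spoiler's game tree rank is strictly below $\omega i$, so the standard rank-based replay transports $\sigma$ into a winning Spoiler strategy in $(pm, p'm', \omega i, i)$: Spoiler mirrors the transitions prescribed by $\sigma$, and at each reached state picks $\hat\alpha'$ equal to the supremum of the successor-state ranks, maintaining the invariant that $\alpha'$ strictly exceeds the rank of the current state. Since initially $\alpha' = \omega i$ exceeds the root rank, and Duplicator's counter evolves identically in the two games (at most $i - 1$ $\omega$-moves from $\beta = i$, so $\beta \geq 1$ throughout), neither counter reaches $0$ and Spoiler drives the game to the same stuck configurations as $\sigma$ does. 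The main obstacle is the rank bound amid infinite branching at $\omega$-move nodes: K\"onig's lemma applies only within the finitely branching $\omega$-move-free phases, while the $\omega$-move nodes require the careful ordinal-sum calculation $\omega k + d < \omega(k+1)$ that combines the inductive subgame bound with the finite within-phase depth.
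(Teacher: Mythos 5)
Your proposal is correct, but it takes a genuinely different route from the paper's proof. The paper argues by induction on $i$ and by contradiction: assuming $pm\SIM{}{\omega i}p'm'$ but $pm\notSIM{i}{}p'm'$, it takes the least $\alpha$ with $pm\notSIM{i}{\alpha}p'm'$, locates along an optimal \R-strategy a limit ordinal $\alpha'\ge\omega i$ at which the approximant ``jumps'', deduces by pigeonhole that \V\ must have infinitely many responses---hence infinitely many $\omega$-responses---realizing ordinals cofinal in $\alpha'$, and then contradicts the induction hypothesis $\SIM{}{\omega(i-1)}\;\subseteq\;\SIM{i-1}{}$ at a successor position. You instead bound the ordinal rank of \R's winning strategy tree directly: since a winning \R-strategy from $(pm,p'm',\alpha,i)$ admits at most $i-1$ \V\ $\omega$-moves per branch (each strictly decrements the $\omega$-counter, and reaching $0$ loses for \R), the tree decomposes into at most $i$ finitely branching---hence, by K\"onig's lemma, finite---phases separated by the infinitely branching $\omega$-move levels, giving rank strictly below $\omega i$ by your induction on the $\omega$-move budget; the standard rank-based replay then yields a winning \R-strategy from $(pm,p'm',\omega i,i)$, i.e.\ $pm\notSIM{i}{\omega i}p'm'$ and a fortiori $pm\notSIM{}{\omega i}p'm'$. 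Your argument is longer to set up but makes it conceptually transparent why $\omega i$ is the right bound, and it avoids the paper's somewhat delicate limit-ordinal/cofinality analysis. One small repair: in the replay, \R\ should choose the new step-counter as the supremum of $\mathrm{rank}(w)+1$ over the possible successor positions $w$ (equivalently, the rank of the intermediate \V-node), not the supremum of the ranks themselves, since otherwise the invariant ``the step-counter strictly exceeds the rank of the current position'' can fail when that supremum is attained; with this adjustment the invariant, together with the $\omega$-counter remaining positive, closes the argument exactly as you describe.
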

\begin{proof}
    By induction on $i$.
The base case of $i=0$ is trivial, since $\SIM{0}{}$ is the full
relation.
We prove the inductive step by assuming the contrary and deriving a
contradiction.
Let $pm\SIM{}{\omega i} p'm'$ and
$pm \notSIM{i}{} p'm'$ for some $i>0$.
Then there exists some ordinal $\alpha$ such that $pm \notSIM{i}{\alpha} p'm'$.
Without restriction let $\alpha$ be the least ordinal
satisfying this condition.
If $\alpha \le \omega i$ then we trivially have a
contradiction. Now we consider the case $\alpha > \omega i$.
By $pm \notSIM{i}{\alpha} p'm'$ and \cref{lem:game_interpretation},
\R\ has a winning strategy in the \agame\ from position $(pm,p'm',\alpha,i)$.
Without restriction we assume that \R\ plays optimally, i.e., wins as quickly as
possible.
Thus this game must reach some position
$(qn,q'n',\alpha'+1,i)$ where $\alpha' \ge \omega i$ is a limit ordinal,
such that \R\ can win from $(qn,q'n',\alpha'+1,i)$ but not from
$(qn,q'n',\alpha',i)$.
I.e., $qn \notSIM{i}{\alpha'+1} q'n'$, but $qn \SIM{i}{\alpha'} q'n'$.
Consider \R's move $qn\step{a} rl$ according to her optimal winning
strategy in the game from position $(qn,q'n',\alpha'+1,i)$.
Since $qn\SIM{i}{\alpha'} q'n'$ and $\alpha'$ is a limit ordinal,
for every ordinal
$\gamma_k < \alpha'$, \V\ must have some countermove
$q'n'\step{a} r'_kl'_k$ such that~$rl \SIM{j}{\gamma_k} r'_kl'_k$,
where $j=i-1$ if the move was via an $\omega$-step and $j=i$
otherwise. In particular, $\sup_k \{\gamma_k\} = \alpha'$.
However, since \R's move $qn\step{a} rl$ was according to an optimal winning
strategy from position $(qn,q'n',\alpha'+1,i)$, we have
that $rl\notSIM{j}{\alpha'} r'_kl'_k$.
Therefore, there must be infinitely many different responses
$q'n'\step{a} r'_kl'_k$. Infinitely many of these countermoves must be via
 $\omega$-steps, because apart from these the system is finitely
branching.
Thus for every ordinal $\gamma < \alpha'$ there is some
\V\ countermove $q'n'\step{a} r'_kl'_k$ which is via an $\omega$-step
such that~$rl \SIM{i-1}{\gamma_k} r'_kl'_k$ where $\gamma_k \ge \gamma$ (note the
$i-1$ index due to the $\omega$-step).
In particular, we can choose $\gamma=\omega (i-1)$, because $i>0$ and
$\alpha'\ge \omega i$.
Then we have $rl\SIM{i-1}{\omega (i-1)} r'_kl'_k$, but
$rl \notSIM{i-1}{\alpha'} r'_kl'_k$.
However, from $rl \SIM{i-1}{\omega (i-1)} r'_kl'_k$ and the induction hypothesis,
we obtain $rl \SIM{i-1}{} r'_kl'_k$
and in particular
$rl \SIM{i-1}{\alpha'} r'_kl'_k$. Contradiction.
\end{proof}

\begin{theorem}\label{thm:wsim:approximants:omegasquare}
Weak simulation approximants on \OCN\ converge at level $\omega^2$,
but not earlier in general.
\end{theorem}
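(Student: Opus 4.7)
The theorem splits into an upper bound, $\WSIM{}{\omega^2} \; = \; \wsim$ for every pair of \OCNs, and a matching lower bound, for which I would provide a family of examples.

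For the upper bound, the plan is: given any pair $\NN{N}, \NN{N}'$ of \OCNs, apply \cref{thm:wsim:reduction} to produce a \OCN\ $\NN{M}$ and an $\omega$-net $\NN{M}'$ whose strong simulation agrees with weak simulation on $\NN{N}, \NN{N}'$. On this pair, \cref{thm:wsim:approximants:convergence} supplies a finite $\CB$ with $\SIM{\CB}{} = \ssim$, and \cref{lem:wsim:approximants:alpha-beta-connection} at index $\CB$ then yields $\SIM{}{\omega \CB} \; \subseteq \; \SIM{\CB}{} = \ssim$, so $\SIM{}{\omega \CB} = \ssim$. Since $\CB \in \N$ we have $\omega \CB < \omega^2$. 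Hence if $qn \WSIM{}{\omega^2} q'n'$ w.r.t.\ $\NN{N}, \NN{N}'$, then also $qn \WSIM{}{\omega \CB} q'n'$, which by \cref{rem:translation-approximants} transfers to $qn \SIM{}{\omega \CB} q'n'$ w.r.t.\ $\NN{M}, \NN{M}'$, giving $qn \ssim q'n'$ w.r.t.\ $\NN{M}, \NN{M}'$ and finally $qn \wsim q'n'$ w.r.t.\ $\NN{N}, \NN{N}'$ by the reduction. The reverse inclusion $\wsim \; \subseteq \; \WSIM{}{\omega^2}$ is trivial.

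For the lower bound, my plan is to exhibit, for every $k \in \N$, a single \OCN\ on which weak simulation approximants fail to coincide with $\wsim$ at level $\omega k + 1$. The construction will iterate \cref{ex:weaksim:nonconvergence}: on top of a trivial state $A$ with an $a$-self-loop and a simple $a$-decrementer $C$, I will stack $2k$ layers $D_1, D_2, \ldots, D_{2k}$ of alternating style. Each odd layer $D_{2i-1}$ carries a silent pump $D_{2i-1} \step{\tau} D_{2i-1}$ of effect $+1$ together with a visible commit $D_{2i-1} \step{a} D_{2i-2}$ of effect $0$ (with the convention $D_0 = C$), so \V\ may raise the counter arbitrarily before committing downwards. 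Each even layer $D_{2i}$ has no pump but carries a self-loop $D_{2i} \step{a} D_{2i}$ of effect $-1$ and a commit $D_{2i} \step{a} D_{2i-1}$ of effect $0$, which preserves counter-dependence of the approximant level. Writing $f_X(n)$ for the least ordinal $\alpha$ with $A \notWSIM{}{\alpha} Xn$, an induction on $i$ should establish $f_{D_{2i-1}}(n) = \omega i + 1$ uniformly in $n$ and $f_{D_{2i}}(n) = \omega i + n + 2$ growing with $n$. Then the configuration $D_{2k}0$ has convergence level $\omega k + 2$, so approximants differ from $\wsim$ at level $\omega k + 1$; as $k$ ranges over $\N$ the supremum of these convergence levels is $\omega^2$, so no ordinal strictly below $\omega^2$ witnesses uniform convergence.

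The hard part will be the inductive verification of $f_{D_{2i-1}}$ and $f_{D_{2i}}$. One must carefully enumerate the $\wstep{a}$-reachable configurations at each layer, accounting for the leading and trailing $\tau$-pumps inherited from adjacent pumping layers, and then track whether the relevant suprema over these targets are attained when evaluating the approximant condition at limit ordinals: a pumping layer placed above a counter-dependent, unbounded $f$ contributes an unattained $\omega$-sup (collapsing $f$ to a uniform value and adding an $\omega$-offset), whereas a non-pumping layer above such an $f$ preserves growth with the counter and merely adds $+1$ per round. It is the alternation of these two behaviours that should produce the $\omega$-per-pair-of-layers growth, and hence the $\omega^2$ supremum.
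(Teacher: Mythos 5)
Your proposal is correct and follows essentially the same route as the paper: the upper bound is the identical chain through \cref{thm:wsim:reduction}, \cref{rem:translation-approximants}, \cref{lem:wsim:approximants:alpha-beta-connection} and \cref{thm:wsim:approximants:convergence}, and the lower bound iterates \cref{ex:weaksim:nonconvergence} into a $k$-layer ladder just as the paper's nets $\NN{N}'_k$ do (the paper links its pumping states $B_i$ and decrementing states $C_i$ by silent transitions rather than visible commits, but the idea is the same). The paper merely asserts the resulting approximant levels without the detailed induction you outline, so your level of verification matches or exceeds its own.
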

\begin{proof}
First we show that $\WSIM{}{\omega^2}$ is contained in $\WSIM{}{}$ for \OCN.
Let $pm$ and $p'm'$ be processes of \OCN\ $\NN{N}$ and $\NN{N'}$, respectively,
and let $\NN{M}$ and $\NN{M'}$ be the derived \OCN\ and $\omega$-net from
\cref{thm:wsim:reduction}
(page~\pageref{thm:wsim:reduction}).
Assume $pm \WSIM{}{\omega^2} p'm'$ w.r.t.~$\NN{N},\NN{N'}$. 
By \cref{rem:translation-approximants}
\todo{change}
we conclude that $pm \SIM{}{\omega^2} p'm'$ w.r.t.~$\NN{M},\NN{M'}$.
In particular we have $pm \SIM{}{\omega\cdot\CB} p'm'$ w.r.t.~$\NN{M},\NN{M'}$,
for the level $\CB \in \N$ from \cref{thm:wsim:approximants:convergence}.
From \cref{lem:wsim:approximants:alpha-beta-connection} we obtain
$pm \SIM{\CB}{} p'm'$ w.r.t.~$\NN{M},\NN{M'}$.
\Cref{thm:wsim:approximants:convergence} then yields 
$pm \SIM{}{} p'm'$ w.r.t.~$\NN{M},\NN{M'}$.
Finally, by \cref{thm:wsim:reduction}, we obtain that
$pm \WSIM{}{} p'm'$ w.r.t.~$\NN{N},\NN{N'}$.  

To see that $\omega^2$ is needed in general,
consider the following class of examples,
that are the result of
extending the net from \fullref{ex:weaksim:nonconvergence}.
%
%
Let $\NN{N}$ be the simple \OCN\ that consists only of
the self-loop $A \step{a,0} A$.
For every $i\le k \in \N$
the \OCN\ $\NN{N}'_k$ has
transitions
$(C_i, a, -1, C_i)$, 
$(B_{i}, \tau, 0, C_{i})$
$(B_{i}, \tau, +1, B_{i})$,
and
$(C_{i+1}, \tau, 0, B_{i})$   (see \cref{fig:pic_weak_ladder} below for $k=3$).
We see that $A \WSIM{}{\omega\cdot k} B_k0$,
but $A \notWSIM{}{} B_k0$ w.r.t.~$\NN{N},\NN{N}'_k$.
So, for every $k\in\N$
there are \OCNs\ for which
$\WSIM{}{\omega\cdot k}\neq\; \WSIM{}{}$.
\end{proof}

\begin{figure}[h]
\begin{center}
    \includegraphics[scale=0.7]{./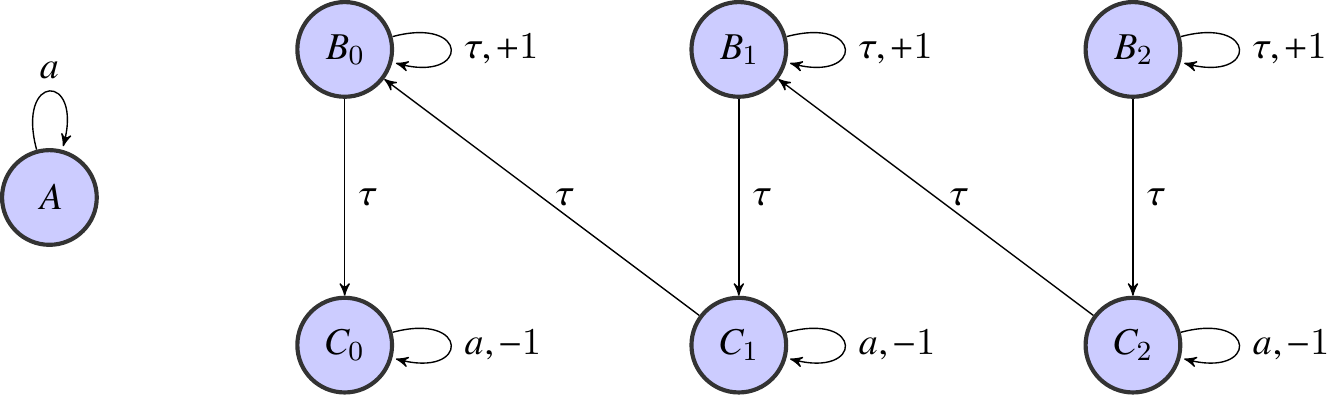}
\end{center}
\caption{The nets $\NN{N}$ and $\NN{N}'_3$ as constructed in
    \cref{thm:wsim:approximants:omegasquare}. We see that
    $An
    \;{{\protect\raisebox{+0.3ex}{$\preceq$}
            \protect\kern -.7em
            \protect\raisebox{-0.5ex}{\protect\rule{.59em}{.4pt}}
    }}{}_{\omega\cdot k}\;
    B_k
    $
    but also
    $An
    \ {{\protect\raisebox{+0.3ex}{$\not\preceq$}
            \protect\kern -.7em
            \protect\raisebox{-0.5ex}{\protect\rule{.59em}{.4pt}}
        }}{}_{\omega\cdot k+1}\
    B_k$.
}
\label{fig:pic_weak_ladder}
\end{figure}
    \todo{PT:caption new}

\subsection{Characterizing Weak Simulation Preorder}
  \label{sec:wsim-construction}
In order to show the decidability of simulation between \textOCNs\ and
$\omega$-nets we prove a stronger claim,
namely that the largest simulation relation is a semilinear set
and one can effectively compute its description.
To prove this claim for a fixed pair of nets,
we consider approximants $\SIM{k}{}$
and show (by repeated reduction to strong simulation over \OCN\
and using \cref{thm:ssim-pspace})
that in fact $\SIM{k}{}$ is effectively semilinear for every
level $k\in\N$.
To be precise, we show the following lemma.

\begin{lemma}\label{lem:main:approximants}
    For any \textOCN\ $\NN{M}$ 
    and $\omega$-net $\NN{M'}$ 
    with sets of control states $Q$ and $Q'$ respectively,
    there is an effectively computable sequence
    $(\NN{S}_k,\NN{S}'_k)_{k\in\N}$ of pairs of \OCN\
    with sets of control states $S_k\supseteq Q$ and $S_k\supseteq Q'$ respectively,
    such that
    for all $k,m,m'\in\N$ and states $p\in Q, p'\in Q'$,
    \begin{equation}
        pm\SIM{k}{}p'm'\text{ w.r.t. } \NN{M},\NN{M'} \iff
        pm\SIM{}{}p'm'\text{ w.r.t. } \NN{S}_k,\NN{S}_k'.  
    \end{equation}
\end{lemma}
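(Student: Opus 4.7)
The plan is to prove the lemma by induction on $k$.

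For the base case $k=0$, the relation $\SIM{0}{}$ is the full relation on all configurations. I would take $\NN{S}_0$ to be $\NN{M}$ itself, and build $\NN{S}'_0$ by extending the control states $Q'$ of $\NN{M}'$ with a fresh universal state $u$ carrying $+1$-self-loops for every action, together with $+1$-transitions from every state of $Q'$ to $u$ for every action. This makes every configuration of $\NN{S}'_0$ strongly simulate every configuration of $\NN{S}_0$, as required.

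For the inductive step, assume $\NN{S}_{k-1},\NN{S}'_{k-1}$ have been built so that their strong simulation preorder captures $\SIM{k-1}{}$ w.r.t.\ $\NN{M},\NN{M}'$. Unfolding \cref{def:approximants}, $\SIM{k}{}$ is the largest relation $R$ such that for every pair $(pm,p'm') \in R$ and every Spoiler step $pm \step{a} rl$, there is a response $p'm' \step{a} r'n'$ with $(rl,r'n') \in R$ if this response uses a non-$\omega$-transition, and $(rl,r'n') \in\;\SIM{k-1}{}$ if it uses an $\omega$-transition. By the inductive hypothesis, the second clause rewrites as $rl \ssim r'n'$ w.r.t.\ $\NN{S}_{k-1},\NN{S}'_{k-1}$, i.e.\ becomes a strong simulation question between \OCNs. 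I would therefore define $\NN{S}_k,\NN{S}'_k$ by retaining the non-$\omega$-transitions of $\NN{M},\NN{M}'$ verbatim and replacing each $\omega$-transition $(p',a,\omega,r')$ of $\NN{M}'$ by a finite gadget that emulates a successful $\omega$-move landing in a state of $\NN{S}'_{k-1}$.

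The gadget is parametrised by the values $\suff{r,r'}$ computed in $\NN{S}_{k-1},\NN{S}'_{k-1}$ via \cref{lem:compute-vbelts}. After sequentialising Spoiler's step (by prefixing it with a fresh silent action that forces Spoiler to reveal his target state $r$), I would insert, for every $\omega$-transition $(p',a,\omega,r')$ and every possible Spoiler target $r$, one branch of the gadget depending on the value $N = \suff{r,r'}$. If $N < \omega$ then the belt for $(r,r')$ in $\NN{S}_{k-1},\NN{S}'_{k-1}$ is vertical, and the branch admits Duplicator precisely when $l < N$; this is enforced by a chain of $N$ counter-decreasing moves on a fresh action that Spoiler must traverse, after which Duplicator jumps to $r'$ in $\NN{S}'_{k-1}$ with her counter pumped to the exponentially bounded stabilisation level provided by \cref{lem:ssim:uperiodic}. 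If $N = \omega$ then the belt has non-vertical slope $(\rho,\rho')$ with coefficients bounded by $\Cacyc$ (\cref{thm:belt-theorem-bounds}), and the branch admits Duplicator unconditionally, prefaced by a rate-conversion subgadget that inflates her counter relative to Spoiler's in the ratio $\rho':\rho$, so that the entry point in $\NN{S}'_{k-1}$ is $\Cacyc$-above the belt.

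The hardest part will be the non-vertical branch, since Duplicator's required counter in $\NN{S}'_{k-1}$ depends linearly on Spoiler's counter $l$ and cannot be bounded by a constant; the rate-conversion subgadget resolves this, and since the slope coefficients are polynomial in $\Cacyc$, it remains effectively constructible. Correctness in both directions follows by translating between the \agame\ for $\SIM{k}{}$ (\cref{lem:game_interpretation}) on $\NN{M},\NN{M}'$ and the ordinary \sgame\ on $\NN{S}_k,\NN{S}'_k$: a Duplicator strategy enters the gadget exactly when the original strategy chose an $\omega$-response, and the inductive hypothesis handles the remainder of the play; conversely, any \sgame\ winning strategy in $\NN{S}_k,\NN{S}'_k$ decomposes along the gadgets into a strategy in the original \agame\ of budget $k$. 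Finally, \cref{thm:ssim-pspace,lem:compute-vbelts} ensure that all ingredients (the values $\suff{r,r'}$, the slopes, and the stabilisation levels) are effectively computable, so that the sequence $(\NN{S}_k,\NN{S}'_k)_{k\in\N}$ itself is effectively constructible.
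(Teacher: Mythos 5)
Your overall strategy is the paper's: iterate on $k$, and at stage $k$ replace each $\omega$-transition of \V's net by a finite gadget, reached via defender's forcing, that tests \R's counter against a threshold $\suff{q,q',k-1}$ computed from the previous pair of nets via \cref{lem:compute-vbelts}. The base case (a universal state instead of the paper's transition-free nets) is an unproblematic variant.

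The gap is in what happens \emph{after} the test. The key observation you are missing is that once \V\ has answered with an $\omega$-step into state $q'$ while \R\ sits at $qn$, the winner of the entire remaining game (now with budget $k-1$) is already determined by the single inequality $n\ge\suff{q,q',k-1}$: since \V\ may choose $n'$ arbitrarily large and approximants are monotone in $n'$ (\cref{lem:wsim:appr:props}), \R\ wins the continuation iff $qn\notSIM{k-1}{}q'n'$ for \emph{all} $n'$. Hence the gadget can simply \emph{terminate} the game: a counter-decreasing chain of length $\suff{q,q',k-1}$ ending in an unmatchable action (and a trivial always-winning loop when the value is $\omega$) is the whole construction, with no re-entry into $\NN{S}'_{k-1}$. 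Your version, which resumes play inside embedded copies of $\NN{S}_{k-1},\NN{S}'_{k-1}$, does not work as described. First, traversing the length-$N$ chain changes \R's counter from $l$ to $l-N$, so the resumed position no longer reflects the approximant game. Second, ``pumping'' \V's counter to the stabilisation level of \cref{lem:ssim:uperiodic} would need an exponentially long chain of unit increments, and, worse, the counter value \V\ actually needs after a simulated $\omega$-step grows with \R's unbounded counter $l$; the ``rate-conversion subgadget'' that is supposed to inflate \V's counter in ratio $\rho':\rho$ relative to $l$ is precisely the unbounded-increment step the whole construction is meant to eliminate, and cannot be realized by a fixed \OCN\ fragment in a strictly alternating game (per round each player's counter changes by at most one, and no finite gadget can know when the required level is reached). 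Third, in the non-vertical case the branch is, as you yourself note, unconditionally winning for \V, so any nontrivial subgadget there is at best redundant and at worst introduces spurious \R\ wins. Dropping the continuation entirely repairs the argument and recovers the paper's construction (\cref{def:SiS'i} together with \cref{lem:main:appr:P1,lem:main:appr:P2}).
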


\noindent A direct consequence of this is the effective semilinearity,
and thus decidability, of \textWSIM\ $\wsim$ over any fixed
pair of \textOCNs.
\begin{theorem}
    Let $\NN{N},\NN{N'}$ be two \textOCNs.
    The largest \textWSIM\ relation $\wsim$ with respect to $\NN{N},\NN{N'}$
    is a semilinear set and its representation is effectively computable.
\end{theorem}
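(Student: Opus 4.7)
The plan is to chain together the three big machineries already set up in the excerpt: the reduction from weak to strong simulation against $\omega$-nets (\cref{thm:wsim:reduction}), the finite convergence of the $\SIM{\beta}{}$ approximants (\cref{thm:wsim:approximants:convergence}), and the semilinear characterization of each approximant via ordinary \textSSIM\ over \OCNs\ (\cref{lem:main:approximants} combined with \cref{thm:ssim-pspace}).

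First I would apply \cref{thm:wsim:reduction} to turn $\NN{N},\NN{N'}$ into a \OCN\ $\NN{M}$ and an $\omega$-net $\NN{M'}$ such that, for all original configurations, $qn \wsim q'n'$ w.r.t.\ $\NN{N},\NN{N'}$ if and only if $qn \ssim q'n'$ w.r.t.\ $\NN{M},\NN{M'}$. It therefore suffices to show that the largest strong simulation $\SIM{}{}$ between $\NN{M}$ and $\NN{M'}$ is an effectively computable semilinear set; intersecting it with $Q\times\N\times Q'\times\N$ (a semilinear operation) recovers the desired description of $\wsim$.

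Second, by \cref{thm:wsim:approximants:convergence} there exists some finite $\CB\in\N$ with $\SIM{}{} = \SIM{\CB}{}$, and by \cref{lem:main:approximants} every approximant $\SIM{k}{}$ coincides with the maximal strong simulation between a uniformly and effectively constructible pair of \OCNs\ $\NN{S}_k,\NN{S}'_k$. Combining this with \cref{thm:ssim-pspace}, each $\SIM{k}{}$ is itself an effectively computable semilinear set. The algorithm is then to compute the semilinear description of $\SIM{k}{}$ for $k=0,1,2,\ldots$ and halt once $\SIM{k}{} = \SIM{k+1}{}$; equality of semilinear sets is decidable, so the stopping test is effective, and termination is guaranteed because the non-increasing sequence $\SIM{0}{} \supseteq \SIM{1}{} \supseteq \cdots$ stabilises at the latest at level $\CB$ (which we do not need to know a priori).

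The only non-routine point is to justify that the stabilisation detected at the semilinear level really yields $\SIM{}{}$: once $\SIM{k}{} = \SIM{k+1}{}$ we know from part~\ref{lem:wsim:appr:props:alpha-inc} of \cref{lem:wsim:appr:props} that the sequence is constant from $k$ onwards, so $\SIM{k}{} = \bigcap_{\beta} \SIM{\beta}{} = \SIM{}{}$ by part~\ref{lem:wsim:appr:props:convergence} of the same lemma. Pulling this representation back along the reduction of \cref{thm:wsim:reduction} gives an effectively computable semilinear representation of $\wsim$, which proves the theorem.
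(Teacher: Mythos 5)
Your proposal is correct and follows essentially the same route as the paper's proof: reduce to strong simulation against an $\omega$-net via \cref{thm:wsim:reduction}, obtain each approximant $\SIM{k}{}$ as an effectively semilinear set via \cref{lem:main:approximants} and \cref{thm:ssim-pspace}, iterate with the decidable semilinear-equality test as stopping criterion, and invoke \cref{thm:wsim:approximants:convergence} for termination. The only caveat is that your appeal to monotonicity (part~\ref{lem:wsim:appr:props:alpha-inc} of \cref{lem:wsim:appr:props}) alone does not show the sequence is constant after two equal consecutive terms; the clean justification is that $\SIM{k+1}{}$ is determined by $\SIM{k}{}$ (through the sufficient values and the nets $\NN{S}_{k+1},\NN{S}'_{k+1}$), so equality of consecutive approximants propagates — a point the paper itself also leaves implicit.
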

\begin{proof}
    By \cref{thm:wsim:reduction}, it suffices to show the claim for the
    largest strong simulation
    $\ssim$ between a \OCN\ $\NN{M}$ and an $\omega$-net $\NN{M'}$.
    By \cref{lem:main:approximants}, one can iteratively
    compute the sequence $(\NN{S}_k,\NN{S}_k')_{k\in\N}$
    of nets characterizing $\SIM{k}{}$ for growing $k$.
    Because $\NN{S}_k$ and $\NN{S}_k'$ are \textOCNs,
    we can apply \cref{thm:ssim-pspace} and derive that strong simulation
    w.r.t.~$\NN{S}_k,\NN{S'}_k$, and hence the approximant
    $\SIM{k}{}$ w.r.t.~$\NN{M},\NN{M'}$ are effectively semilinear sets.
    Recall that for $k\in\N$, $\SIM{k+1}{}\:\subseteq\:\SIM{k}{}$.
    Because equality of semilinear sets is decidable,
    we can check after each iteration if $\SIM{k+1}{}\: \supseteq\:\SIM{k}{}$
    holds, in which case we stop with the description of $\SIM{k}{}\:=\:\SIM{}{}$.
    Termination of this procedure is guaranteed by
    \cref{thm:wsim:approximants:convergence}.
\end{proof}

Before we prove \cref{lem:main:approximants}, we introduce
two important ingredients for the construction
of the nets $\NN{S}_k,\NN{S}_k'$.
The first is a class of simple gadgets called
\emph{test chains} that will form part of these nets
and allow us to check, by means of
a continued \sgame, if the counter value of \R\ is $\ge i$
for some hard-wired constant $i\in\N$.
A \emph{test chain} for $i\in\N$,
is a pair $\NN{T}_i,\NN{T}_i'$
of \OCNs\ with initial states $t_i$ and $t'_i$
over actions $\Act=\{e,f\}$.
We let $t_i$ be the starting point of a counter-decreasing
chain of $e$-steps of length $i$ where the last state of the chain
can make an $f$-step, whereas $t'_i$ is a simple $e$-loop
(see \cref{fig:test-chains}).
Then we observe that for all $m,n\in\N$,
\begin{equation}
    t_im\notSIM{}{} t'_in \iff m\ge i \label{eq:main:gadget}.
\end{equation}

\begin{figure}[h]
  \centering
  \includegraphics{./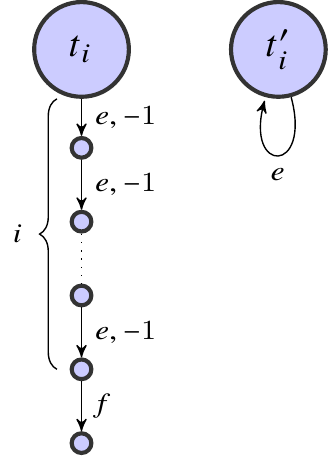}
  \caption{A test chain for value $i\in\N$.}
  \label{fig:test-chains}
\end{figure}

\noindent The \emph{test chain} for $\omega$ is the pair $\NN{T}_\omega,\NN{T}_\omega'$
of nets, consisting of simple $e$-loops
$t_{\omega}\step{e}t_{\omega}$ and
$t_{\omega}'\step{e}t_{\omega}'$, respectively.
Trivially, for all $m,n\in\N$ it holds that
\begin{equation}
    t_\omega m\SIM{}{} t'_\omega n
\end{equation}

  
\noindent The second ingredient for our construction is the 
notion of \emph{minimal sufficient values}.
    Consider the approximant $\SIM{k}{}$ for some parameter $k$,
    and let $(q,q')\in (Q\x Q')$ be a pair of states.
    By monotonicity (\cref{lem:wsim:appr:props}, point~\ref{lem:wsim:appr:props:monotonicity}),
    there is a minimal value $\suff{q,q',k}\in\N\cup\{\omega\}$ satisfying
    \begin{equation}\label{sufficient_values}
      \forall n'\in\N.\ q(\suff{q,q',k}) \notSIM{k}{} q'n'.
    \end{equation}
    Let $\suff{q,q',k}$ be $\omega$ if no finite value satisfies this condition.
The following properties are immediate from the definitions.
\begin{lemma}\label{lem:suff}\
For all $q\in Q, q'\in Q'$ and $k\in\N$,
\begin{enumerate}
  \item \label{lem:suff:omega} $\suff{q,q',0}=\omega$, and
  \item \label{lem:suff:decrease} $\suff{q,q',k}\geq \suff{q,q',k+1}$.
\end{enumerate}
\end{lemma}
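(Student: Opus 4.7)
The plan is to derive both parts directly from the definition of $\SIM{k}{}$ and the monotonicity of the approximant hierarchy; no game-theoretic argument is needed.

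For point~\ref{lem:suff:omega}, I would unfold the definition: by \cref{def:big_approximants}, $\SIM{0}{}\;=\;\bigcap_{\alpha\in\Ord}\SIM{0}{\alpha}$, and every $\SIM{0}{\alpha}$ is the full relation $Q\x\N\x Q'\x\N$ by the base clause of \cref{def:approximants}. Hence $\SIM{0}{}$ itself is the full relation, so $qn\SIM{0}{}q'n'$ holds for every $n,n'\in\N$. Consequently, no finite $n$ satisfies the condition $\forall n'.\ qn\notSIM{0}{}q'n'$ of~\eqref{sufficient_values}, and therefore $\suff{q,q',0}=\omega$ by convention.

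For point~\ref{lem:suff:decrease}, I would invoke the monotonicity of approximants in the superscript, namely part~\ref{lem:wsim:appr:props:alpha-inc} of \cref{lem:wsim:appr:props}, which gives $\SIM{k+1}{}\;\subseteq\;\SIM{k}{}$, equivalently $\notSIM{k}{}\;\subseteq\;\notSIM{k+1}{}$. If $\suff{q,q',k}=\omega$ the inequality is trivial, so assume $n:=\suff{q,q',k}\in\N$. By definition, $qn\notSIM{k}{}q'n'$ for every $n'\in\N$; applying the inclusion above yields $qn\notSIM{k+1}{}q'n'$ for every $n'\in\N$, so $n$ itself witnesses the defining condition of $\suff{q,q',k+1}$. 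By minimality of $\suff{q,q',k+1}$ we conclude $\suff{q,q',k+1}\le n=\suff{q,q',k}$.

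Neither step involves a real obstacle: the whole lemma is bookkeeping once the definition of the minimal sufficient value and the monotonicity of the $\SIM{k}{}$ hierarchy are in place. The only minor care point is to treat the value $\omega$ uniformly in the ordering $\N\cup\{\omega\}$ when handling the case where no finite witness exists.
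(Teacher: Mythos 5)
Your proof is correct and matches the paper's intent exactly: the paper states this lemma without proof as ``immediate from the definitions,'' and your argument is precisely the expected unfolding (point~1 from $\SIM{0}{}$ being the full relation, point~2 from the inclusion $\SIM{k+1}{}\subseteq\SIM{k}{}$ together with minimality of the sufficient value).
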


We are now ready to present the construction of
successive pairs of nets $\NN{S}_k,\NN{S}_k'$,
that satisfy the claim of \cref{lem:main:approximants}.
The idea behind the construction of nets for parameter $k+1$ is as follows.
Assuming we have already constructed a semilinear representation of $\SIM{k}{}$
in the form of two \OCN\ $\NN{S}_k$ and $\NN{S}_k'$,
we can compute
the values $\suff{q,q',k}$ for every pair $(q,q')$.

The nets $\NN{S}_{k+1}$ and $\NN{S}_{k+1}'$ are constructed so that
a \sgame\ played on nets $\NN{S}_{k+1},\NN{S}_{k+1}'$ mimics
the \agame\ played on $\NN{M},\NN{M}'$ with $\omega$-parameter $(k+1)$
until \V\ responses via an $\omega$-step,
leading to some game position $qn$ vs.\ $q'n'$.
Afterwards, the \agame\ would continue with the next lower parameter $k$.
In the \sgame\ on $\NN{S}_{k+1}$ and $\NN{S}_{k+1}'$,
\V\ cannot make the $\omega$-step but can instead enforce the play to continue in
some subgame (a test chain) that he wins iff \R's counter is smaller than the hard-wired value $\suff{q,q',k}$.
This ``forcing'' of the play can be implemented for \OCN\ simulation using a standard technique
called \emph{defender's forcing} (see e.g.~\cite{KJ2006}),
that essentially allows \V\ to reach a universal process (and thus win) in the
next round unless his opponent moves in some specific way.

The nets $\NN{S}_{k+1}$ and $\NN{S}_{k+1}'$ thus consist of the original nets $\NN{M},\NN{M}'$ where
all $\omega$-transitions in \V's net $\NN{M}'$ are replaced by a small constant
defenders-forcing script, leading to the corresponding testing gadget.
The only difference between two such pairs of nets for different parameters $k$
is the lengths of the test chains.

\begin{definition}[The construction of $\NN{S}_k$ and $\NN{S}_k'$]
    \label{def:SiS'i}
Fix a \OCN\ $\NN{M}=(Q,\Act,\delta)$, an $\omega$-net $\NN{M}'=(Q',\Act,\delta)$
and a constant $k\ge 1$.
We construct the \textOCNs\ $\NN{S}_k$ and $\NN{S}_k'$ that characterize
the approximant $\SIM{k}{}$.

For any pair $(p,p')\in Q\x Q'$ of states, let
$\NN{T}_{p,p'}$ and $\NN{T'}_{p,p'}$ be
the nets that describe the test chain for $\suff{p,p',k-1}$.
Let $\NN{T}_{p,p'}=(T_{p,p'},\{e,f\},\delta_{p,p'})$ and
$\NN{T}_{p,p'}'=(T_{p,p'}',\{e,f\},\delta'_{p,p'})$ and
let $t_{p,p'}$ and $t'_{p,p'}$ be the initial states of $\NN{T}_{p,p'}$ and $\NN{T}_{p,p'}'$
respectively.
W.l.o.g.\ we can assume that $e,f\notin \Act$ are new letters.
We define the \textOCNs\ $\NN{S}_k$ and
$\NN{S}_k'$ over the new alphabet $\ol{\Act}$ as follows.
$\ol{\Act}$ 
contains all letters of the original alphabet,
two (new) actions $e,f$ used in test gadgets
and a new action $(p,p')$ for every pair of original states.
\begin{equation}
    \ol{\Act} = \Act\cup \{f,e\}\cup (Q\x Q').
\end{equation}
The net $\NN{S}_k=(S_k,\ol{\Act},\delta_k)$ has all original states
of $\NN{M}$, plus those of all test chains:
\begin{equation}
    S_k = Q \cup \bigcup_{p\in Q, p'\in Q'} T_{p,p'}
\end{equation}
Its transitions $\delta_k \supseteq \delta \cup \bigcup_{q\in Q, q'\in Q'} \delta_{q,q'}$
are those of $\NN{M}$, all test chains, and the following for all $q\in Q, q'\in Q'$:
\begin{equation}
    q\step{(q,q'),0}t_{q,q'} \label{eq:sisi:spoilmove}
\end{equation}
The net $\NN{S}_k'=(S'_k,\ol{\Act},\delta'_k)$ has states
\begin{equation}
    S'_k = Q'
    \cup(\bigcup_{q\in Q, q'\in Q'} T'_{q,q'})
    \cup \{W\}.
\end{equation}
So it contains all original states of $\NN{M}'$, those of all test chains
and a new ``win'' state $W$.
Its set of transitions is
$\delta'_k \supseteq \{q\step{a,x}q' \in \delta'\ |\ x \neq \omega\} \cup \bigcup_{q\in Q, q'\in Q'} \delta_{q,q'}$.
It contains those transitions in $\NN{M}'$
which are not labeled by $\omega$, the transitions of the test chains
plus the following, that allow \V\ to force the game into a test chain:
\begin{align}
    p'\step{a,0}t'_{p,q'}\quad
      &\text{for all $p\in Q$ and $q',p'\in Q'$ if $p'\step{a,\omega}q' \in \delta'$},
      \label{eq:sisi:omegamove} \\
    p'\step{(q,q'),0} W\quad
      &\text{for all $q\in Q$ and $q',p'\in Q'$},
      \label{eq:sisi:punish_wrong_spoilmove}\\
    t'_{q,q'}\step{(q,q'),0}t'_{q,q'}\quad
      &\text{for all $q\in Q$ and $q'\in Q'$},
      & \label{eq:stay}\\
    t'_{q,q'}\step{(q,p'),0}W\quad
      &\text{for all $q\in Q$ and $q',p'\in Q'$ if $p'\neq q'$},
      \label{eq:sisi:punish1}\\
    t'_{q,q'}\step{a,0}W\quad
      &\text{for all $q\in Q$ and $q'\in Q'$ and $a \in \Act$},
      \label{eq:sisi:punish2}\\
    W\step{a,0}W\quad
      &\text{for all $a \in \Act'$}.
\end{align}
\Cref{fig:weak_forcing} illustrates the forcing mechanism
due to these new transitions.
\end{definition}
\begin{figure}[h]
    \centering
    \includegraphics{./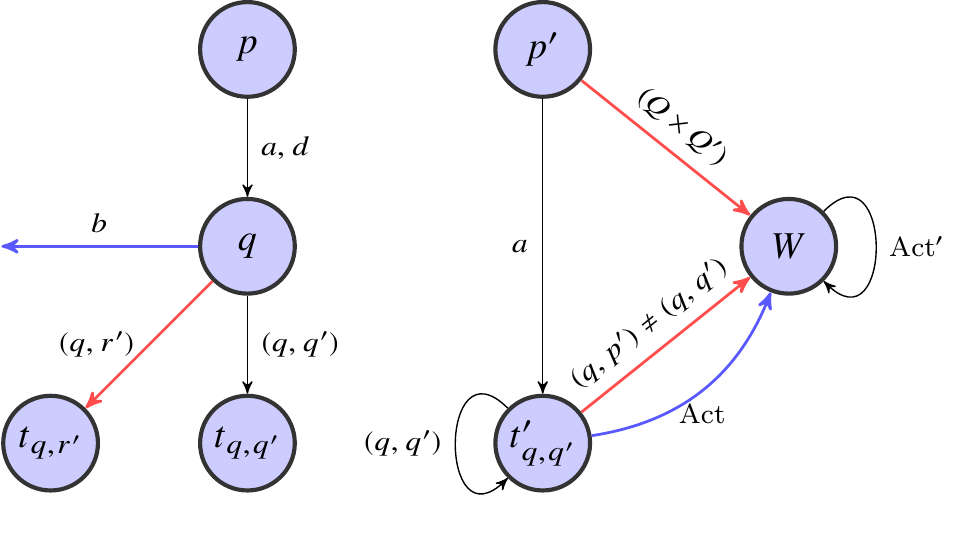}
    \caption{The forcing mechanism
        that replaces a \V\ transition $p'\step{a,\omega}q'$.
        Counter effects are omitted, individual transitions are grouped
        and punishing moves are colored.
        For instance, the red arrow from $p'$ to $W$ depicts all transitions due
        to \cref{eq:sisi:punish_wrong_spoilmove} that prevent
        \R\ from using any actions of the form $(p,p')\in\Act$ unless \V\ already moved to
        some state $t'_{q,q'}$.
        Note that \R\ must prevent \V\ from reaching the universal state $W$
        and that once the players are at states $t_{q,q'}$ and $t'_{q,q'}$,
        she has no other option but to play
        the test chain that starts here.
    }
    \label{fig:weak_forcing}
\end{figure}

\noindent Observe that the definition of the nets $\NN{S}_k,\NN{S}_k'$ above
depends on the sufficient values $\suff{p,p',k-1}$ for all original control states
$p,p'$.
It is therefore crucial to know these values for this construction to be effective.
The following two lemmas state the correctness of the construction.

\begin{lemma}\label{lem:main:appr:P1}
 For all control states $(p,p') \in Q \x Q'$ and naturals $k,m,n \in \N$:
 \begin{equation}
     pm\notSIM{}{}p'm'\text{ w.r.t.~$\NN{S}_k,\NN{S}_k'$}
     \impliedby
     pm\notSIM{k}{}p'm'\text{ w.r.t.~$\NN{M},\NN{M'}$}
 \end{equation}
\end{lemma}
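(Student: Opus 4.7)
My plan is to convert \R's winning strategy in the $k$-approximant game on $\NN{M}, \NN{M}'$ into a winning strategy in the \sgame\ on $\NN{S}_k, \NN{S}_k'$. Assuming $pm \notSIM{k}{} p'm'$ w.r.t.~$\NN{M}, \NN{M}'$, by \cref{lem:game_interpretation} \R\ has a winning \agame-strategy $\sigma$ starting from $(pm, p'm', \alpha, k)$ for some ordinal $\alpha$. Without loss of generality $\sigma$ always picks the new $\omega$-parameter $\hat\beta = k-1$: by \cref{lem:wsim:appr:props}(2) we have $\notSIM{\hat\beta}{} \subseteq \notSIM{k-1}{}$ whenever $\hat\beta \leq k-1$, so any \R-victory achieved with a smaller choice remains a victory at $k-1$.

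Spoiler's strategy $\sigma'$ in the \sgame\ on $\NN{S}_k, \NN{S}_k'$ mirrors $\sigma$ while the play stays in pairs of original control states, maintaining the invariant that the current position is \R-winning in the approximant game at parameter $k$. In each such round \R\ plays the action prescribed by $\sigma$; \V's response then either (i) uses a non-$\omega$ transition inherited from $\NN{M}'$, which mirrors a non-$\omega$ response of $\sigma$ directly and preserves the invariant, or (ii) uses a forcing transition $q' \step{a,0} t'_{r, s'}$ from \cref{eq:sisi:omegamove} (standing in for an $\omega$-transition $q' \step{a,\omega} s'$ of $\NN{M}'$), where $r$ is \R's new state and $s'$ is the $\omega$-target. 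In case (ii) \R\ replies with the action $(r, s')$ via the transition $r \step{(r,s'),0} t_{r,s'}$ from \cref{eq:sisi:spoilmove}. A case check of \cref{eq:stay,eq:sisi:punish_wrong_spoilmove,eq:sisi:punish1,eq:sisi:punish2} shows that \V's only non-losing response is the self-loop at $t'_{r,s'}$ (any other move sends her to $W$ which is \R-winning), so both players enter the embedded test chain synchronously with \R's counter equal to the value $l$ after her move and \V's counter unchanged at $n'$.

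The key remaining step is to verify that $l \geq \suff{r,s',k-1}$, after which the test chain property \cref{eq:main:gadget} guarantees that \R\ wins the remaining finite test chain game. Since $\sigma$ is winning against every choice of \V-$\omega$-response $q'n' \step{a} s'n^*$ with $n^* > n'$, and the resulting approximant position is $(rl, s'n^*, \hat\alpha, k-1)$, we have $rl \notSIM{k-1}{} s'n^*$ for every $n^* > n'$. By monotonicity of approximants in \V's counter (\cref{lem:wsim:appr:props}(1)), this extends to $rl \notSIM{k-1}{} s'n^{**}$ for every $n^{**} \in \N$ (fix a sufficiently large $n^*$ and reduce). The definition \cref{sufficient_values} then forces $l \geq \suff{r,s',k-1}$. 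Finally, the winning test-chain strategy transfers verbatim from the isolated gadget to $\NN{S}_k, \NN{S}_k'$ because every additional transition at $t'_{r,s'}$ is triggered by an action in $\Act \cup (Q \x Q')$, none of which \R\ plays during the test chain; she uses only $e$ and $f$, so \V's extra options are never activated.

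The main obstacle is the inequality $l \geq \suff{r,s',k-1}$: it hinges on $\sigma$ being winning uniformly against all of \V's $\omega$-choices $n^* > n'$, combined with approximant monotonicity. The remaining verifications — correctly accounting for the defender-forcing transitions and the harmless nature of the extra transitions surrounding the test chain — are routine from the definitions.
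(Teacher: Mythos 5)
Your proof is correct and follows essentially the same route as the paper's: transfer Spoiler's winning strategy from the \agame\ at parameter $k$ to the \sgame\ on $\NN{S}_k,\NN{S}_k'$, mirror non-$\omega$ responses directly, funnel each forcing move into the corresponding test chain, and derive the key inequality $l\ge\suff{r,s',k-1}$ from the fact that the approximant strategy wins uniformly against every $\omega$-target $n^*>n'$ together with monotonicity (your explicit monotonicity step is in fact slightly more careful than the paper's). The only detail you elide is that Duplicator may move to $t'_{r,s'}$ for an \emph{arbitrary} $r\in Q$ rather than Spoiler's actual control state; this is harmless, since Spoiler then plays a $(q,q')$-labelled action that $t'_{r,s'}$ with $r\neq q$ cannot answer, exactly the one-line observation the paper makes.
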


\begin{proof}
  Note that by definition of approximants,
  $pm\notSIM{k}{}p'm'$ implies $pm\notSIM{k}{\alpha}p'm'$ for some ordinal $\alpha$.
  By the game interpretation (\cref{lem:game_interpretation})
  it is thus sufficient to show that
  for all ordinals $\alpha$,
  if \R\ has a winning strategy in the \agame\
  from position $(pm, p'm', \alpha, k)$ then she also
  has a winning strategy in the \sgame\ between $\NN{S}_k,\NN{S}_k'$ from
  position $(pm, p'm')$.
  
  We proceed by ordinal induction on $\alpha$.
  The base case trivially holds since \R\ loses from a position $(pm,p'm',0,k)$
  by definition of the \agame\ (\cref{def:approxgame}).
  
  For the induction step let \R\ play the same move $pm \step{a} qn$
  for some $a \in \Act$ in both games according to her assumed winning strategy in the
  \agame.
  Now \V\ makes his response move in the new game between $\NN{S}_k,\NN{S}_k'$,
  which yields two cases.
  In the first case, \V\ does not use a step induced by a transition from
  \cref{eq:sisi:omegamove}. Then
  his move induces a corresponding move in the \agame\ which
  leads to a new configuration $(qn, q'n', \gamma, k)$ where
  $qn\notSIM{k}{\gamma}q'n'$ for some ordinal $\gamma<\alpha$.
  By the induction hypothesis, \R\ now has a winning
  strategy to continue the \sgame\ from position $(qn,q'n')$.
  
  In the second case, \V's response is via a step induced by a transition from
  \cref{eq:sisi:omegamove}, which leads to a new configuration
  $(qn, t'_{r,q'}n')$ for some $r\in Q$.
  Thus in the \agame\ there will exist \V\ moves to
  positions $(qn, q'n', \gamma, k-1)$ where $\gamma<\alpha$ and
  $n' \in \N$ can be arbitrarily high.
  We can safely assume that \V\ chooses $r=q$, since otherwise \R\ can afterwards
  win in one round by a $(q,q')$ labeled step from $qn$.
  Now in the next round \R\ can play $qn\step{(q,q')}t_{q,q'}n$
  by \cref{eq:sisi:spoilmove} and \V's only option
  is to stay in his current state by \cref{eq:stay}.
  The \sgame\ thus continues from
  $(t_{q,q'}n, t'_{q,q'}n')$, which is the beginning of the testing gadget
  for states $q,q'$.
  To show that \R\ wins the rest of the \sgame, we show that
  indeed, $n$ must be at least be $\suff{k-1,q,q'}$.
  By our initial assumption,
  \R\ wins the \agame\ from the position $(pm,p'm', \alpha, k)$.
  Thus there is some ordinal $\gamma<\alpha$ such that
  \R\ also wins the \agame\ from position $(qn, q'n', \gamma, k-1)$
  for every $n'\in \N$.
  Thus, by \cref{lem:game_interpretation} and \cref{def:approximants},
  we have $qn \notSIM{k-1}{\gamma} q'n'$ and by
  \cref{lem:wsim:appr:props} (item~\ref{lem:wsim:appr:props:alpha-inc})
  $qn \notSIM{k-1}{} q'n'$ for all $n' \in \N$.
  By the definition 
  of sufficient values, we obtain $n\geq \suff{q,q',k-1}$.
  By the construction of the gadgets and \cref{eq:main:gadget}
  we get $t_{q,q'}n\notSIM{}{} t'_{q,q'}n'$,
  which concludes our proof.
\end{proof}
\begin{lemma}\label{lem:main:appr:P2}
 For all control states $(p,p') \in Q \x Q'$ and naturals $k,m,n \in \N$:
 \begin{equation}
     pm\notSIM{}{}p'm'\text{ w.r.t.~$\NN{S}_k,\NN{S}_k'$}
     \implies
     pm\notSIM{k}{}p'm'\text{ w.r.t.~$\NN{M},\NN{M}'$}
 \end{equation}
\end{lemma}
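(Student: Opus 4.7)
\emph{Proof plan.} The plan is to establish the contrapositive: assuming $pm \SIM{k}{} p'm'$ w.r.t.~$\NN{M},\NN{M}'$, I will construct a winning strategy for Duplicator in the \sgame\ on $(\NN{S}_k,\NN{S}_k')$ from $(pm,p'm')$, using the approximant information from the original nets together with the forcing structure of the gadgets.

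The first step will be a fixed-point characterization of $\SIM{k}{}$: if $pm \SIM{k}{} p'm'$, then every Spoiler move $pm \step{a} qn$ in $\NN{M}$ admits a response $p'm' \step{a} q'n'$ in $\NN{M}'$ via some $t\in\delta'$ such that either $t$ is non-$\omega$ and $qn \SIM{k}{} q'n'$, or $t$ is an $\omega$-transition and $qn \SIM{k-1}{} q'n'$. By monotonicity along subscripts (\cref{lem:wsim:appr:props}) and well-ordering of the ordinals, the decreasing sequences $(\SIM{k}{\alpha})_\alpha$ and $(\SIM{k-1}{\alpha})_\alpha$ stabilize at ordinals $\CA_k$ and $\CA_{k-1}$, reaching $\SIM{k}{}$ and $\SIM{k-1}{}$ respectively. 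Picking any successor $\alpha > \max(\CA_k,\CA_{k-1})$ and unfolding $pm \SIM{k}{\alpha} p'm'$ via \cref{def:approximants} produces the required response, with target already lying in $\SIM{k-1}{\alpha-1}=\SIM{k-1}{}$ or in $\SIM{k}{\alpha-1}=\SIM{k}{}$.

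With this characterization in hand, I will equip Duplicator with a strategy maintaining one of two invariants: (I1) the current position is $(qn,q'n')$ with $q\in Q$, $q'\in Q'$ and $qn \SIM{k}{} q'n'$; (I2) the current position is $(qn, t'_{q,q'}n')$ with $qn \SIM{k-1}{} q'n''$ for some $n''\in\N$. Under (I1), a Spoiler move via $a\in\Act$ is answered by invoking the characterization: a non-$\omega$ response is mirrored verbatim in $\NN{S}_k'$ (preserving (I1)), whereas an $\omega$ response $(p',a,\omega,r')$ is replaced by the gadget move from \cref{eq:sisi:omegamove} into $t'_{q,r'}$ (establishing (I2)). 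A paired-action move $(q,q')$ under (I1) is countered by escape to the universal state $W$ via \cref{eq:sisi:punish_wrong_spoilmove}. Under (I2), any original-action move is countered by moving to $W$ via \cref{eq:sisi:punish2}, a mismatched paired action $(q,s)$ with $s\ne q'$ via \cref{eq:sisi:punish1}, and the matching $(q,q')$ is answered by \cref{eq:stay}, which takes the play into the test chain at $(t_{q,q'}n, t'_{q,q'}n')$.

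The essential verification concerns the test chain itself. By (I2) and the definition of sufficient values in \cref{sufficient_values}, the existence of some $n''$ with $qn \SIM{k-1}{} q'n''$ forces $n < \suff{q,q',k-1}$, and hence by \cref{eq:main:gadget} Duplicator wins the remaining \sgame. On every other branch Duplicator reaches $W$, whose self-loops on every action guarantee an infinite (and thus winning) play. I expect the main obstacle to lie in the fixed-point characterization: coping with the infinite branching caused by $\omega$-transitions requires a careful exploitation of the ordinal convergence phenomenon, leveraging that both $\SIM{k}{\alpha}$ and $\SIM{k-1}{\alpha}$ stabilize for large $\alpha$. Once this is in place, the remainder is a direct case analysis dictated by the structure of the forcing gadget.
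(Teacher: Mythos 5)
Your proof is correct, but it takes a genuinely different route from the paper's. The paper argues the implication directly: assuming $pm\notSIM{}{}p'm'$ w.r.t.\ $\NN{S}_k,\NN{S}_k'$, it uses the fact that both nets are finitely branching \OCNs, so non-simulation already manifests at some finite level $\alpha\in\N$, and then builds a winning strategy for \R\ in the \agame\ from $(pm,p'm',\gamma,k)$ by induction on $\alpha$, assembling the ordinal $\gamma$ a posteriori as a maximum/supremum over all of \V's replies (including the infinitely many $\omega$-replies). You instead prove the contrapositive by constructing a \V-strategy in the \sgame\ on $\NN{S}_k,\NN{S}_k'$, which requires a one-step unfolding of $\SIM{k}{}$; your derivation of it from the stabilization of the $\alpha$-indexed hierarchy is sound (it is the two-parameter analogue of the argument the paper itself gives for \cref{lem:wsim:appr:props}, point 4), and it is exactly the right tool to tame the infinite branching that would otherwise block a naive pigeonhole over responses. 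Your invariants (I1)/(I2), the handling of the forcing gadget and the punishing moves to $W$, and the extraction of $n<\suff{q,q',k-1}$ from (I2) via monotonicity and \cref{eq:main:gadget} are the correct duals of the paper's case analysis and of its conclusion $n\ge\suff{q,q',k-1}$. In terms of trade-offs: the paper's direction avoids any appeal to ordinal stabilization but pays with the supremum bookkeeping ($\gamma^0,\gamma^1$) over \V's infinitely many countermoves, whereas yours replaces the induction and the suprema by a single coinductive invariant argument at the price of the (standard) stabilization lemma. Both are valid; yours is arguably the cleaner of the two.
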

\begin{proof}
    Assume $pm \notSIM{}{} p'm'$ w.r.t.~$\NN{S}_k$ and $\NN{S}_k'$.
    Since both $\NN{S}_k,\NN{S}_k'$ are just \textOCNs,
    non-simulation manifests itself at some finite approximant
    $\alpha \in \N$, i.e., $pm\notSIM{}{\alpha} p'm'$.
    By definition of $\SIM{k}{}$ it suffices to show that
    some ordinal $\gamma$ exists such that 
    $pm \notSIM{k}{\gamma} p'm'$ w.r.t.~$\NN{M},\NN{M'}$.
    By the game characterization of approximants (\cref{lem:game_interpretation})
    this amounts to showing a winning strategy for \R\
    in the \agame\ from position $(pm, p'm', \gamma, k)$.
    
    We proceed by induction on $\alpha$.
    The claim is trivial for the base case $\alpha=0$.
    For the induction step we consider a
    move $pm \step{a} qn$ for some $a \in \Act$ by \R\ in both games
    according to \R's assumed winning strategy in the \sgame\ between
    $\NN{S}_k,\NN{S}_k'$.
    It cannot be a \R\ move
    $p\step{(p,q'),0}t_{p,q'}$ by \cref{eq:sisi:spoilmove}, because
    this would allow \V\ to reply by moving to the universal state $W$
    by \cref{eq:sisi:punish_wrong_spoilmove}.
    Now we consider all (possibly infinitely many) replies by \V\ in the
    \agame\ between $\NN{M},\NN{M}'$ from a position $(pm,p'm',\gamma, k)$
    for some yet to be determined ordinal $\gamma$.
    These replies fall into two classes.

    In the first class, \V's move $p'm' \step{a} q'n'$ is \emph{not} via
    an $\omega$-step and thus also a possible move in the
    \sgame\ between $\NN{S}_k,\NN{S}_k'$.
    From our assumption that \R\ wins the
    \sgame\ from position $(pm, p'm')$ in at
    most $\alpha\in\N$ steps, it follows that \R\ wins
    from $(qn,q'n')$ in at most $\alpha-1$ steps.
    By induction hypothesis, there is an ordinal $\beta$
    such that\ \R\ has a winning strategy in the
    \agame\ for $\SIM{k}{\beta}$ between $\NN{M},\NN{M}'$
    from position $(qn, q'n')$.
    There are only finitely many such replies.
    Let $\gamma^0$ be the maximal such $\beta$.

    In the second class, \V's move $p'm' \step{a} q'n'$ uses an
    $\omega$-step, which does not exist in $\NN{S}_k'$.
    Instead, \V\ can move
    $p'm' \step{a,0}t'_{r,q'}m'$ by a step induced by a transition
 due to \cref{eq:sisi:omegamove}.
    From our assumption that \R\ wins the
    \sgame\ from position $(pm, p'm')$ in at
    most $\alpha\in\N$ steps, it follows that \R\ wins from
    $(qn,t'_{r,q'}m')$ in at most $\alpha-1$ steps.
    If $r \neq q$ then this is trivially true by a move due to
    \cref{eq:sisi:spoilmove}. Otherwise, if $r=q$,
    then this can only be achieved by a \R\ move
    $qn \step{(q,q'),0} t_{q,q'}n$ in the next round,
    because for any other \R\ move \V\ has a winning countermove
    by Equations~\eqref{eq:sisi:punish1} or \eqref{eq:sisi:punish2}.
    In this case \V\ can only reply with a move
    $t'_{q,q'}m' \step{(q,q'),0}t'_{q,q'}m'$ due to \cref{eq:stay},
    and we must have that \R\ can win in at most $\alpha-2$ steps
    from position $(t_{q,q'}n, t'_{q,q'}m')$,
    which is the beginning of the testing gadget for states $(q,q')$.
    By construction of $\NN{S}_k,\NN{S}_k'$, in particular by
    definition of the gadgets and \cref{eq:main:gadget},
    this implies that $n \ge \suff{q,q',k-1}$.
    By the definition 
    of sufficient values we obtain
    $\forall n' \in\N.\, qn\notSIM{k-1}{} q'n'$.
    Therefore, for every $n' \in \N$ there exists some ordinal
    $\beta$ such that\ $qn\notSIM{k-1}{\beta} q'n'$.
    Let $\gamma$ be the least ordinal greater or equal all those $\beta$.
    Each of the finitely many distinct $\omega$-transitions yields such
    a $\gamma$. Let $\gamma^1$ be the maximum of them.

    Finally, we set $\gamma := \max(\gamma^0,\gamma^1)+1$.
    Then every reply to \R's initial move $pm\step{a}qn$ in the \agame\
    from $(pm,p'm',\gamma,k)$ leads to a position that is winning for \R.
    It follows that \R\ has a winning strategy in the
    \agame\ from $(pm,p'm',\gamma,k)$.
\end{proof}

The proof of \cref{lem:main:approximants} is now a formality.

\begin{proof}[Proof of \cref{lem:main:approximants}]
    Let $\NN{M}= (Q, \Act, \delta)$ and $\NN{M}' = (Q',\Act,\delta')$.
    We iteratively construct nets $(\NN{S}_k,\NN{S'}_k)$
    that characterize $\SIM{k}{}$ for growing $k\in\N$.

    For the base case $k=0$, we observe that
    $\SIM{0}{}\:=\:Q\x \N \x Q' \x \N$ is the full relation.
    The claim therefore trivially holds for the
    pair $\NN{S}_0,\NN{S}_0'$ of nets that contain no transitions at all.
    Also, by \cref{lem:suff}, point \ref{lem:suff:omega}, the minimal sufficient value
    $\suff{q,q',0}$ equals $\omega$ for every pair of states $(q,q')\in Q\x Q'$.
    
    For the induction step, consider $k>0$.
    By assumption, we have already constructed the pair
    $(\NN{S}_{k-1},\NN{S}_{k-1}')$ of nets correctly characterizing $\SIM{k-1}{}$.
    By \cref{thm:ssim-pspace} (page~\pageref{thm:ssim-pspace}) we know that the
    simulation preorder 
    w.r.t.~$\NN{S}_{k-1},\NN{S}_{k-1}'$ is effectively semilinear.
    Since semilinear sets are effectively closed under projections
    and complements, we can compute the semilinear representation of
    the approximant $\SIM{k-1}{}$ and its complement
    and therefore also the values
    $\suff{q,q',k-1}$ for all $(q,q') \in Q\x Q'$.
    Knowing these values, we can
    construct the next pair
    $(\NN{S}_{k},\NN{S}_{k}')$ of nets according to \cref{def:SiS'i}.
    The correctness of this new pair follows from
    \cref{lem:main:appr:P1,lem:main:appr:P2}.
\end{proof}
Note that in the proof above, we construct a description of 
the previous approximants only to compute the values $\suff{p,p',k-1}$.
We will now show that these values are in fact polynomially bounded and
can also be computed in polynomial space.

\subsection{Complexity Analysis}
  \label{sec:wsim-complexity}

We show that the bounds on the coefficients of the Belt Theorem,
as derived in \cref{sec:belts/proof}, imply that the construction
shown in the previous section for checking \emph{weak} simulation
actually uses only polynomial space.

To obtain an upper bound for the complexity of this procedure, we will
show that the sizes of all nets $(\NN{S}_k,\NN{S}_k')$, as constructed in
\cref{def:SiS'i}, are polynomial in the sizes of $\NN{M}$ and $\NN{M}'$.
We start with some observations about the shape
of the nets $\NN{S}_k$ and $\NN{S}_k'$.

\begin{lemma}\label{lem:shape}\

\begin{enumerate}
  \item The net $\NN{S}_k'$ remains constant from index $k=1$ on.
      \label{lem:shape:constant-dup}
  \item \label{lem:shape:number}
      Every net $\NN{S}_k$ for $k>0$ contains precisely $|Q\x Q'|$
      many disjoint testing chains,
      one for each pair of states in $\NN{M}$ and $\NN{M}'$.
  \item
      \label{lem:shape:test-length}
      If $\suff{q,q',k-1}\neq\omega$, then
      the length of the test chain for states $q,q'$ in net $\NN{S}_k$
      is exactly $\suff{q,q',k-1}$.
      Otherwise, it is a simple $e$-labeled loop.
\end{enumerate}
\end{lemma}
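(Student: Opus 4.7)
The plan is to proceed by directly unpacking \cref{def:SiS'i} and observing exactly how the two constructed nets depend on the index $k$, which enters only through the length of the test chains $\NN{T}_{p,p'}, \NN{T}_{p,p'}'$ determined by the values $\suff{p,p',k-1}$. All three claims are essentially bookkeeping over that definition together with the definition of test chains (illustrated in \cref{fig:test-chains}).

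For point (\ref{lem:shape:constant-dup}), I will inspect the description of $\NN{S}_k'$. Its set of states $S_k' = Q' \cup \bigcup_{q,q'} T_{q,q'}' \cup \{W\}$ and all of its non-test-chain transitions (those in $\delta'$ without $\omega$-effect together with Equations~\eqref{eq:sisi:omegamove}, \eqref{eq:sisi:punish_wrong_spoilmove}, \eqref{eq:stay}, \eqref{eq:sisi:punish1}, \eqref{eq:sisi:punish2} and the self-loops at $W$) are defined purely in terms of $Q$, $Q'$, $\delta'$ and $\Act$, with no reference to $k$ at all. The only possible $k$-dependence is therefore in the test-chain fragments $\NN{T}_{p,p'}'$. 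But by definition each such fragment is a \emph{single state} $t'_{p,p'}$ with an $e$-self-loop, independently of whether the tested value is a natural number or $\omega$. Thus $\NN{S}_k'$ is literally the same net for every $k \geq 1$, and I will make this precise by exhibiting an identity map between $\NN{S}_1'$ and $\NN{S}_k'$.

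For point (\ref{lem:shape:number}), the construction explicitly adds exactly one test chain $\NN{T}_{p,p'}$ to $\NN{S}_k$ for each pair $(p,p') \in Q \x Q'$, and the state sets $T_{p,p'}$ of different chains are pairwise disjoint (and disjoint from $Q$) by choice of fresh states. The only transitions of $\NN{S}_k$ whose source or target lies in some $T_{p,p'}$ are those belonging to $\delta_{p,p'}$ (internal to that chain) and the edges $q \step{(q,q'),0} t_{q,q'}$ from \eqref{eq:sisi:spoilmove} entering the initial state. Hence the chains are disjoint sub-nets as claimed, and their number is exactly $|Q \x Q'|$.

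For point (\ref{lem:shape:test-length}), this is immediate from the definition of test chains: when $i = \suff{q,q',k-1}$ is finite the chain $\NN{T}_i$ is by definition a counter-decreasing chain of $e$-steps of length exactly $i$, followed by an $f$-step; when $i = \omega$ it is the simple $e$-loop on a single state $t_\omega$. Since $\NN{S}_k$ plugs in precisely the test chain for $\suff{q,q',k-1}$, its length is as stated. No hard obstacle is expected — the proof is a direct verification against the construction, and the only mildly non-trivial observation is the one in point~(\ref{lem:shape:constant-dup}) that the \V-side of a test chain is insensitive to the tested value.
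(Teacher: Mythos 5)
Your proof is correct and matches the paper's treatment: the paper states \cref{lem:shape} without an explicit proof, as a direct observation about \cref{def:SiS'i}, and your unpacking (in particular the key point that the \V-side of every test chain is a single $e$-self-loop regardless of whether the tested value is finite or $\omega$, so $\NN{S}_k'$ cannot depend on $k\ge 1$) is exactly the intended justification. Nothing is missing.
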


\noindent Using these properties above and \cref{lem:suff},
point~\ref{lem:suff:decrease}, we derive that at some $k\in\N$,
the sequence $(\Net{S}_i,\Net{S}_i')_{i\in\N}$ of nets stabilizes
to $(\NN{S}_{k},\Net{S}_{k}') = (\NN{S}_{k},\Net{S}_{1}')$. This observation is actually an alternative proof of \cref{thm:wsim:approximants:convergence}.
 Indeed above claim holds because for any pair $(q,q')$ there can only be one index $i$
such that the respective sufficient value jumps from $\suff{q,q',i}=\omega$ to
$\suff{q,q',i+1}\in\N$.
Because these nets characterize
approximants $\SIM{k}{}$ and $\SIM{k+1}{}$ w.r.t.\ $\NN{M},\NN{M}'$
(by \cref{lem:main:appr:P1,lem:main:appr:P2})
we obtain that $\SIM{k}{}\;=\;\SIM{k+1}{}\;=\;\SIM{}{}$.

\begin{lemma}
    \label{lem:sisi-poly}
    Consider the sequence $(\NN{S}_k,\NN{S}_k')_{k\in\N}$ as constructed
    in \cref{def:SiS'i} for the \OCN\ $\NN{M}$ and $\omega$-net $\NN{M}'$.
    For any index $k\in\N$, the nets $\NN{S}_k,\NN{S}_k'$
    are of polynomial size, and can be constructed
    in polynomial space with respect to the sizes of
    the original nets $\NN{M}$ and $\NN{M}'$.
\end{lemma}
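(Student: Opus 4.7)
The plan is to leverage the structural observations of \cref{lem:shape} together with the monotonicity from \cref{lem:suff}, combined with \cref{lem:main:approximants,lem:compute-vbelts}, to derive a uniform polynomial bound on the test-chain lengths.

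The first step is easy. By \cref{lem:shape}, point~1, the net $\NN{S}_k'$ is the same for every $k\ge 1$, consisting of the original states $Q'$, the universal state $W$, and a single $e$-looping state per pair in $Q\x Q'$ (since the Duplicator side of every test chain collapses to an $e$-loop). Hence $|\NN{S}_k'|$ is polynomial in $|Q|,|Q'|$ and can be written out directly.

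The nontrivial step concerns $\NN{S}_k$, whose size is $|Q| + \sum_{(q,q')} \max(1, \suff{q,q',k-1})$ by \cref{lem:shape}, points~2 and~3. I would bound every finite $\suff{q,q',k-1}$ polynomially as follows. By \cref{lem:main:approximants}, $\suff{q,q',k-1}$ equals the sufficient value of the vertical belt for $(q,q')$ in strong simulation on the \OCN\ pair $\NN{S}_{k-1}, \NN{S}_{k-1}'$, so by \cref{lem:compute-vbelts} it is bounded by $\Cacyc(\NN{S}_{k-1} \x \NN{S}_{k-1}')$.

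The main obstacle is to bound these $\Cacyc$ values uniformly by a polynomial in $|Q|, |Q'|$ rather than by an exponential in $k$, which a naive recurrence would yield. I would combine two ingredients. First, a structural observation on the product graph: any acyclic path in $\NN{S}_k \x \NN{S}_k'$ visits the states of at most one Spoiler-side test chain, since once Spoiler enters a chain via a $(q,q')$-action she cannot escape it. This confines $\Cacyc$ to roughly $|Q| \cdot |\NN{S}_k'|$ plus the length of a single maximal chain, not the sum over all chains. Second, the monotonicity of \cref{lem:suff}, point~2: for each pair $(q,q')$ the value $\suff{q,q',k}$ is non-increasing in $k$ (with $\omega$ as top), so its maximum over $k$ is attained at $k^*(q,q')$, the first index at which it becomes finite. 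Ordering pairs by their $k^*$ values and writing $v_i$ for the first-finite value of the $i$-th pair, the net $\NN{S}_{k^*_i}$ contains only simple loops for the pairs not yet finite, and chains bounded by $v_j$ for $j < i$. The one-chain-at-a-time $\Cacyc$ bound then yields a recurrence of the shape $v_i \le c(|Q|,|Q'|) + \max_{j<i} v_j$ with $c$ a fixed polynomial, which telescopes linearly (rather than multiplicatively) to a uniform polynomial bound $P(|Q|,|Q'|)$ on every $v_i$ and hence on every finite $\suff{q,q',k}$. Consequently $|\NN{S}_k| \le |Q| + |Q\x Q'| \cdot P(|Q|,|Q'|)$ is polynomial.

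Finally, polynomial-space constructability follows by induction on $k$. Given a polynomial-size description of $\NN{S}_{k-1}, \NN{S}_{k-1}'$, \cref{lem:compute-vbelts} computes each $\suff{q,q',k-1}$ in polynomial space; the next pair $\NN{S}_k, \NN{S}_k'$ is then fully determined by \cref{def:SiS'i} and written out in polynomial space.
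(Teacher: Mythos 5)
Your proposal is correct and follows essentially the same route as the paper's proof: bound $\NN{S}_k'$ via its constancy from $k=1$ on, bound the chain lengths via \cref{lem:compute-vbelts} by the acyclic-path length of the product, observe that an acyclic path traverses at most one test chain so the recurrence is additive, and use the fact that each pair's sufficient value drops from $\omega$ to a finite value at most once to limit the number of increases to $|Q\x Q'|$, yielding a linear telescoping bound. Your reindexing of the recurrence by pairs in order of first-finiteness is only a cosmetic variant of the paper's $C_k\le C_1+C_{k-1}$ argument.
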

\begin{proof}
    For $k=0$, these nets are defined to be just copies of $\NN{M}$ and $\ND{M}$
    with no transitions. The claim is therefore trivial for $k=0$.
    For all higher indices $k+1$, we consider nets
    $\NS{S}_{k+1}$ and $\NN{S}_{k+1}'$ individually.

    By \cref{lem:shape}, point~\ref{lem:shape:constant-dup},
    $\NN{S}_{k+1}'$ is the same as $\NN{S}_1'$,
    which can easily be seen to be of polynomial size
    in the sizes of $\NS{M}$ and $\ND{M}$ (cf. \cref{def:SiS'i}).
    The net $\NS{S}_{k+1}$ is completely determined by
    the original pair of nets and the length of the test chains,
    which in turn are derived only from the minimal sufficient values
    $\suff{q,q',k}$ for level $k$.
    By construction, the size of the net $\NS{S}_{k+1}$ is polynomial (actually linear)
    in the sizes of $\NS{M},\ND{M}$ and the maximal length of a test chain
    in the net $\NN{S}_k$.
    By \cref{lem:shape}, point~\ref{lem:shape:test-length}, it is therefore
    enough show that one can compute the values $\suff{q,q',k}$ for all states
    $q\in Q$ and $q'\in Q'$ in polynomial space and bound them polynomially
    w.r.t.~$\NS{M},\ND{M}$ in case they are finite.

    Recall that $\suff{q,q',k}$ is defined in terms of the approximant
    $\SIM{k}{}$, which is characterized as the strong simulation $\SIM{}{}$
    relative to the nets $\NN{S}_k,\NN{S}_k'$ by \cref{lem:main:approximants}.

    Let $C_k$ be larger than the maximal length an acyclic path in the product
    of nets $\NS{S}_k$ and $\NN{S}_k'$.
%
    By \cref{thm:belt-theorem-bounds}, $C_k$ is sufficient
    for the claim of the Belt Theorem applied to the nets $\NN{S}_{k}$
    and $\NN{S}_{k}'$. In particular, by \cref{lem:compute-vbelts},
    it bounds the width of all vertical belts and therefore all
    finite values $\suff{q,q',k}$:
    \begin{equation}
        \suff{q,q',k}\in\N \implies \suff{q,q',k}\le C_k.
    \end{equation}

\noindent The form of the nets (\cref{lem:shape}, points~\ref{lem:shape:number},\ref{lem:shape:test-length})
    means that the longest acyclic path
    in the product of $\NS{S}_{k}$ and $\ND{S}_{k}$,
    must actually start within the part described by the original nets,
    and eventually go through one of the test chains.
    We can therefore bound $C_k$ by
    \begin{equation}
        \label{eq:Ckbound}
        C_k\le C_1 + C_{k-1}.
    \end{equation}

\noindent    We fix a pair $(q,q')$ of states and consider the length of the test chain
    for this pair in the net $\NS{S}_i$ for growing indices $i$.
    By \cref{lem:suff} and \cref{lem:shape}, point~\ref{lem:shape:test-length},
    we see that there can only be one index $i$ such that
    the length of the chain increases, namely 
    if $\suff{q,q',i}=\omega > \suff{q,q', i+1}\in\N$.
    Because there are always exactly $K=|Q\x Q'|$ many test chains,
    this means that there can be at most $K$ indices $i$ such that
    $C_{i+1} \ge C_i$.
    Together with \cref{eq:Ckbound} we can therefore globally bound every $C_k$ by
    \begin{equation}
      C_k\leq K \cdot C_1.
    \end{equation}
    
\noindent     We conclude that the sizes of all $\NN{S}_{k},\NN{S}_{k}'$ are polynomial in the
    sizes of $\NN{M}$ and $\NN{M}'$.
    By \cref{lem:compute-vbelts}, we can thus compute the exact
    values of $\suff{q,q',k}$ and construct $\NN{S}_{k+1},\NN{S}_{k+1}'$
    using polynomial space w.r.t.~$\NN{M}$ and $\NN{M}'$ as required.
\end{proof}
\begin{theorem}\label{thm:wsim-poly}
    For any pair $\NN{N},\NN{N}'$ of \textOCNs\ one can construct, 
    in polynomial space,
    two polynomially bigger \OCNs\ $\NN{S}$ and $\NN{S}'$ that contain the original
    states of $\NN{N}$ and $\NN{N}'$ respectively, such that
    weak simulation $\wsim$ w.r.t.\ $\NN{N},\NN{N}'$ is the projection
    of strong simulation w.r.t.\ $\NN{S},\NN{S}'$.
\end{theorem}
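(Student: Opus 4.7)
The plan is to assemble the machinery developed in this section into a single polynomial-space construction. First I would apply \cref{thm:wsim:reduction} to the input pair $\NN{N},\NN{N}'$ in order to obtain, in polynomial time, a \OCN\ $\NN{M}$ (with control states $M\supseteq Q$) and an $\omega$-net $\NN{M}'$ (with control states $M'\supseteq Q'$) such that for original states $q,q'$ and all $n,n'\in\N$, $qn \wsim q'n'$ w.r.t.\ $\NN{N},\NN{N}'$ iff $qn \ssim q'n'$ w.r.t.\ $\NN{M},\NN{M}'$. After this step the problem reduces to representing strong simulation between a \OCN\ and an $\omega$-net using only ordinary \OCNs.

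Next I would iteratively build the sequence $(\NN{S}_k,\NN{S}_k')_{k\in\N}$ given in \cref{def:SiS'i}. By \cref{lem:main:approximants}, strong simulation w.r.t.\ $\NN{S}_k,\NN{S}_k'$ coincides, on original states, with the approximant $\SIM{k}{}$ w.r.t.\ $\NN{M},\NN{M}'$. The goal is to run this iteration only up to an index at which the sequence stabilizes. Using \cref{lem:shape} together with point~\ref{lem:suff:decrease} of \cref{lem:suff}, each pair $(q,q')$ can trigger at most one transition of its minimal sufficient value from $\omega$ to a finite number, so the test chains (and hence the entire pair $\NN{S}_k,\NN{S}_k'$) can change at most $|M\x M'|$ times. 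Thus a bounded index $k^\star\le |M\x M'|$ suffices, and by \cref{thm:wsim:approximants:convergence} combined with \cref{lem:main:approximants}, at that index the approximant already equals the limit $\SIM{}{}$. Setting $\NN{S}:=\NN{S}_{k^\star}$ and $\NN{S}':=\NN{S}_{k^\star}'$ gives the required pair.

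Finally I would verify the resource bounds. By \cref{lem:sisi-poly}, each $\NN{S}_k,\NN{S}_k'$ has polynomial size in $|\NN{M}|,|\NN{M}'|$ and hence in $|\NN{N}|,|\NN{N}'|$. The inductive step in \cref{def:SiS'i} needs exactly the values $\suff{q,q',k-1}$, and \cref{lem:compute-vbelts} applied to $\NN{S}_{k-1},\NN{S}_{k-1}'$ (which are themselves \OCNs\ of polynomial size) computes each such value in \PSPACE, returning a polynomially bounded number whenever the value is finite. Iterating $k^\star$ times therefore stays within polynomial space, because only the current pair $\NN{S}_k,\NN{S}_k'$ and a polynomial counter need be stored. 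The projection claim holds because by construction the control states of $\NN{S},\NN{S}'$ contain those of $\NN{N},\NN{N}'$ (inherited through $\NN{M},\NN{M}'$).

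The main obstacle, which is really the only non-routine point, is controlling the stabilization level of the iteration; once one sees that test-chain lengths are monotone in $k$ and each pair $(q,q')$ contributes at most one change, the polynomial bound on $k^\star$ follows and everything else reduces to quoting \cref{thm:wsim:reduction}, \cref{lem:main:approximants}, \cref{lem:sisi-poly} and \cref{lem:compute-vbelts}. Combining \cref{thm:wsim-poly} with \cref{thm:ssim-pspace} then immediately yields the \PSPACE\ upper bound of \cref{thm:strongsim-pspace}, by stepwise guessing only a polynomial part of the simulation relation for $\NN{S},\NN{S}'$.
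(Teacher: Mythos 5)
Your proposal is correct and follows essentially the same route as the paper: reduce via \cref{thm:wsim:reduction} to a \OCN\ versus $\omega$-net strong simulation problem, iterate the construction of \cref{def:SiS'i} using \cref{lem:main:approximants} until the sequence $(\NN{S}_k,\NN{S}_k')$ stabilizes (which the paper also justifies via \cref{lem:shape} and \cref{lem:suff}), and invoke \cref{lem:sisi-poly} together with \cref{lem:compute-vbelts} for the polynomial size and space bounds. The only difference is presentational: the paper places the stabilization argument in the discussion preceding \cref{lem:sisi-poly} rather than inside the proof of the theorem itself.
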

\begin{proof}
    The claim follows from
    \cref{thm:wsim:reduction,lem:main:approximants,lem:sisi-poly}.
    Indeed, due to \cref{thm:wsim:reduction} we can construct in polynomial time two $\omega$-nets $\NN{M},\NN{M'}$ such that
    weak simulation $\wsim$ w.r.t.\ $\NN{N},\NN{N}'$ is the projection
    of strong simulation w.r.t.\ $\NN{M},\NN{M'}$. 
    By \cref{lem:main:approximants}, there is a sequence of pairs of
    nets $(\NN{S}_k,\NN{S}_k')_{k \in \N}$, such that  
    for all $k,m,m'\in\N$ and states $p\in Q, p'\in Q'$,
    $
        pm\SIM{k}{}p'm'\text{ w.r.t. } \NN{M},\NN{M'}$  iff
        $pm\SIM{}{}p'm'\text{ w.r.t. } \NN{S}_k,\NN{S}_k'.  
    $
    Finally, by \cref{lem:sisi-poly} elements of this sequence can be constructed in polynomial space, and
    for some $l < k$ it must hold 
    $(\NN{S}_k, \NN{S}_k') = (\NN{S}_l, \NN{S}_l')$.
    Thus such a pair $(\NN{S}_k, \NN{S}_k')$ can be computed polynomial space; and
    \begin{equation*}
        pm\SIM{}{}p'm'\text{ w.r.t. } \NN{M},\NN{M'} \quad \text{ iff } \quad
        pm\SIM{}{}p'm'\text{ w.r.t. } \NN{S}_k,\NN{S}_k',
    \end{equation*}
    as required.
\end{proof}
The main result of this section is now a direct consequence
of \cref{thm:wsim-poly,thm:ssim-pspace}.
Recall that a \PSPACE\ lower bound already holds for strong simulation.
\begin{theorem}\label{thm:wsim-pspace}
  Checking weak simulation preorder between two \OCNs\ is \PSPACE-complete.
  Moreover, the largest weak simulation relation is semilinear
  and can be explicitly represented in space exponential in the sizes of the input nets.
\end{theorem}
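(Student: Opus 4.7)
The plan is to combine the main technical results already established earlier in the paper, namely \cref{thm:wsim-poly} and \cref{thm:ssim-pspace}, with the known lower bound from~\cite{Srb2009}. First I would apply \cref{thm:wsim-poly} to the input pair $\NN{N},\NN{N}'$, which yields, in polynomial space, two \OCNs\ $\NN{S},\NN{S}'$ of size polynomial in the original nets such that weak simulation $\wsim$ on $\NN{N},\NN{N}'$ coincides, after projecting onto the original control states, with strong simulation $\ssim$ on $\NN{S},\NN{S}'$. In particular, for the given point instance $qn, q'n'$ (with $n,n'$ in binary) we have $qn\wsim q'n'$ w.r.t.\ $\NN{N},\NN{N}'$ iff $qn\ssim q'n'$ w.r.t.\ $\NN{S},\NN{S}'$.

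Next I would invoke \cref{thm:ssim-pspace} to decide this strong simulation instance in space polynomial in $|\NN{S}|+|\NN{S}'|+\log(n+n')$. Since $\NN{S},\NN{S}'$ have polynomial size in $|\NN{N}|+|\NN{N}'|$, the overall procedure runs in polynomial space in the size of the original input, establishing the \PSPACE\ upper bound. For the matching lower bound, it suffices to observe that on \textOCNs\ without any $\tau$-labeled transitions the weak and strong step relations coincide, so any instance of strong simulation is trivially an instance of weak simulation; combined with the \PSPACE-hardness result for strong simulation of~\cite{Srb2009} (which holds already for unary-encoded counter values) this yields \PSPACE-hardness of weak simulation as well.

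For the second assertion, I would read off the structural statement from the same reduction: \cref{thm:ssim-pspace} guarantees that the maximal strong simulation on $\NN{S},\NN{S}'$ is a semilinear set admitting a representation of exponential size in the number of control states. Since $\NN{S}$ and $\NN{S}'$ are only polynomially larger than $\NN{N}$ and $\NN{N}'$, pulling this representation back through the polynomial-time reduction of \cref{thm:wsim-poly} (which simply projects onto the original control states, a semilinear-preserving operation) gives a semilinear description of $\wsim$ on $\NN{N},\NN{N}'$ of size still exponential in $|\NN{N}|+|\NN{N}'|$.

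I do not expect any real obstacle here, since the heavy lifting was done in the preceding sections: the Belt Theorem and its constructive consequences for strong simulation (\cref{sec:strongSim}), and the approximant construction together with the polynomial bound on the sizes of the nets $\NN{S}_k,\NN{S}_k'$ (\cref{lem:sisi-poly}). The only minor care needed is to note that the projection used in \cref{thm:wsim-poly} preserves semilinearity and does not blow up the exponential-size representation beyond what is already claimed, and that the binary encoding of counter values in the input is compatible with the space bounds of \cref{thm:ssim-pspace}; both are routine.
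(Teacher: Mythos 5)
Your proposal is correct and matches the paper's own argument, which derives \cref{thm:wsim-pspace} directly from \cref{thm:wsim-poly} and \cref{thm:ssim-pspace} together with the known \PSPACE\ lower bound for strong simulation. The extra care you take about the projection preserving semilinearity and the binary encoding of counter values is sound and consistent with what the paper leaves implicit.
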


\section{Conclusion}\label{sec:conclusion}
In this paper we showed that both strong and weak simulation
for \textOCNs\ are \PSPACE-complete.
A \PSPACE\ lower bound, as well as decidability of strong simulation were known
before \cite{AC1998,JM1999,Srb2009}.

Our first contribution is a new constructive proof of the \emph{Belt Theorem}
(see \cref{sec:strongSim}), based on a bounded abstraction of the simulation game.
A consequence of this construction is that
the simulation relation for fixed nets
is a semilinear relation of a very specific form that can be represented explicitly
in space exponential in the size of the input nets.
Due to the locality of the simulation condition, this representation can
be stepwise guessed and verified, which leads to a \PSPACE\ procedure
to check whether simulation holds between two given configurations.
The complexity of this procedure depends only on the size of the
input nets, not on the size of the given configurations.

Our second main contribution is an iterative reduction from \emph{weak} to strong
simulation over \textOCNs.
The main difficulty is to deal with unbounded branching (i.e., unrestricted
counter increases) of \V\ during a weak simulation game.
Our argument uses a suitable sequence of over-approximations, 
based on the number of times \V\ uses unbounded increases
during a play.
Using the results for the strong case, we show that this sequence necessarily
converges at a polynomially bounded level, and that each approximant relation
can in fact be represented as the maximal strong simulation over a pair of
polynomially enlarged \textOCNs.
This allows to conclude that our results for the strong case,
namely the effectiveness of an \EXPSPACE-representation as well as
a \PSPACE-decision procedure, carry over to the more general weak simulation as well.

Interesting open problems concern ``asymmetric'' generalizations, where one of
the input systems allows zero-tests, i.e., is a \textOCA.
In \cite{AAHMNT2014} we showed that strong simulation between $\OCA$ and $\OCN$
is semilinear and thus decidable.
However, the proof of semilinearity is not effective, so computability of the
relation as well as the complexity of its membership problem remains open.
Apart from the obvious \PSPACE\ lower bounds, not much is known about simulation
between \OCN\ and \OCA, as well as for the weak simulation problems in either
way.
It is worth mentioning that further generalizations
(\PDA\ vs.\ \OCN,
\OCA\ vs.\ \OCA, as well as 
\OCN\ vs.\ 2-dimensional \VASS)
are already undecidable \cite{AAHMNT2014}.

Another direction for further research is to establish the exact complexity
of strong/weak simulation for \OCN\ with binary encoded
increments and decrements on the counter. Trivially,
the \PSPACE-lower bound applies for this model and an \EXPSPACE\ upper bound
follows from the results of this paper with the observation that
these more expressive nets can be unfolded into ordinary \OCN\ with
an exponential blow-up.

\bibliographystyle{plain}
\bibliography{references}

\end{document}